\documentclass[sigconf,table,dvipsnames]{}



\usepackage{graphicx}
\usepackage{booktabs}
\usepackage{listings}
\usepackage{enumitem}
\usepackage{tabularx}
\usepackage{bm}
\usepackage{clrscode3e}
\usepackage[noend]{algpseudocode}
\usepackage{colortbl}
\usepackage{forest}
\usepackage{algorithm}
\usepackage{xspace}
\usepackage{url}
\usepackage{textcomp}
\usepackage{multirow}
\usepackage{booktabs}
\usepackage{hyperref}
\usepackage{color}
\usepackage{tcolorbox}
\usepackage{xpatch}
\usepackage[skip=0.2pt]{subcaption}
\usepackage[nounderscore]{syntax}
\usepackage{tikz-qtree,tikz-qtree-compat}
\usepackage{tikz}
\usetikzlibrary{arrows.meta,bending,quotes}
\usepackage{ulem}
\normalem
\usepackage{standalone}
\usepackage{enumitem}
\usepackage{stfloats }

\tikzset{every tree node/.style={align=center, anchor=north}}
\lstset{language=Mathematica}
\lstset{basicstyle={\sffamily\footnotesize},
  numbers=left,
  numberstyle=\tiny\color{gray},
  numbersep=5pt,
  breaklines=true,
  captionpos={t},
  frame={lines},
  rulecolor=\color{black},
  framerule=0.5pt,
  columns=flexible,
  tabsize=2
}

\AtBeginDocument{%
  \providecommand\BibTeX{{%
    \normalfont B\kern-0.5em{\scshape i\kern-0.25em b}\kern-0.8em\TeX}}}

\makeatletter
\let\@authorsaddresses\@empty
\makeatother

\newcommand{\paratitle}[1]{\noindent{\bf #1}}

\setcopyright{acmcopyright}
\copyrightyear{2018}
\acmYear{2018}
\acmDOI{10.1145/1122445.1122456}

\acmJournal{JACM}
\acmVolume{37}
\acmNumber{4}
\acmArticle{111}
\acmMonth{8}


\newcommand{\amir}[1]{} 
\newcommand{\zhengjie}[1]{} 

\newcommand{\reva}[1]{{\leavevmode\color{black}{#1}}}
\newcommand{\revb}[1]{{\leavevmode\color{black}{#1}}}
\newcommand{\revc}[1]{{\leavevmode\color{black}{#1}}}
\newcommand{\common}[1]{{\leavevmode\color{black}{#1}}}

\newcommand{\narrow}[1]{\protect\scalebox{0.7}[1.0]{\ensuremath{\textsf{#1}}}}

\newcommand{\sql}[1]{\narrow{#1}}

\newcommand{\Serves}{\sql{Serves}}
\newcommand{\Drinker}{\sql{Drinker}}
\renewcommand{\Bar}{\sql{Bar}}
\newcommand{\Beer}{\sql{Beer}}
\newcommand{\Likes}{\sql{Likes}}
\newcommand{\Frequents}{\sql{Frequents}}

\definecolor{black}{rgb}{0,0,0}
\definecolor{grey}{rgb}{0.8,0.8,0.8}
\definecolor{red}{rgb}{1,0,0}
\definecolor{green}{rgb}{0,1,0}
\definecolor{applegreen}{rgb}{0.55, 0.71, 0.0}
\definecolor{darkgreen}{rgb}{0,0.5,0}
\definecolor{darkpurple}{rgb}{0.5,0,0.5}
\definecolor{darkdarkpurple}{rgb}{0.3,0,0.3}
\definecolor{blue}{rgb}{0,0,1}
\definecolor{shadegreen}{rgb}{0.95,1,0.95}
\definecolor{shadeblue}{rgb}{0.95,0.95,1}
\definecolor{shadered}{rgb}{1,0.85,0.85}
\definecolor{shadegrey}{rgb}{0.85,0.85,0.85}
\definecolor{oddRowGrey}{rgb}{0.80,0.80,0.80}
\definecolor{evenRowGrey}{rgb}{0.85,0.85,0.85}

\newcommand{\red}[1]{{\color{red} #1}}

\newcommand{\cut}[1]{{}}

\newcommand{\SR}[1]{{\color{magenta} {\tt SR: #1}}}
\newcommand{\sr}[1]{\SR{#1}}






\newcommand{\mypar}[1]{\smallskip\noindent\textbf{{#1}.}}

\DeclareMathAlphabet{\mathbbold}{U}{bbold}{m}{n}

\newtheorem{Theorem}{Theorem}
\newtheorem{Definition}{Definition}

\newtheorem{Proposition}{Proposition}
\newtheorem{Example}{Example}






\newcommand{\conjNaive}{\textsc{Conj-Naive}}
\newcommand{\disjEO}{\textsc{Disj-EO}}
\newcommand{\disjAdd}{\textsc{Disj-Add}}
\newcommand{\disjNaive}{\textsc{Disj-Naive}}

\newcommand{\conjEO}{\textsc{Conj-EO}}
\newcommand{\conjAdd}{\textsc{Conj-Add}}

\newcommand{\tup}{t}
\newcommand{\schemaOf}[1]{\ensuremath{\mathbf{#1}}}
\newcommand{\schema}{S}
\newcommand{\ains}{K}
\newcommand{\cinstance}{\ensuremath{\mathcal{I}}}

\newcommand{\db}{D}

\newcommand{\rel}{R}
\newcommand{\relarity}{k}
\newcommand{\relnum}{r}

\newcommand{\query}{Q}

\newcommand{\dom}{\textsc{Dom}}

\newcommand{\varset}{\mathcal{V}}
\newcommand{\lnset}{\mathcal{L}}
\newcommand{\constset}{\mathcal{C}}
\newcommand{\att}{{\tt A}}
\newcommand{\attrs}{{\bf  {\tt Attr}}}
\newcommand{\qvariable}{query variable}
\newcommand{\resolved}[1]{{}}
\newcommand{\aconst}{c}
\newcommand{\atable}{T}
\newcommand{\aCtable}{\textsf{T}}

\newcommand{\universalcond}{\phi}
\newcommand{\globalcond}{\phi}
\newcommand{\assignment}{\alpha}
\newcommand{\map}{\mu}
\newcommand{\Rep}{PWD}
\newcommand{\coverage}{\textsf{cov}}
\newcommand{\cov}{\textsf{C}}
\newcommand{\consistent}{\textsf{IsConsistent}}
\newcommand{\qcorrect}{\ensuremath{Q_{A}}}
\newcommand{\qincorrect}{\ensuremath{Q_{B}}}
\newcommand{\mrel}{\ensuremath{h}}
\newcommand{\aFOL}{\textsc{P}}
\newcommand{\folarity}{p}





\definecolor{lstpurple}{rgb}{0.5,0,0.5}
\definecolor{lstred}{rgb}{1,0,0}
\definecolor{lstreddark}{rgb}{0.7,0,0}
\definecolor{lstredl}{rgb}{0.64,0.08,0.08}
\definecolor{lstmildblue}{rgb}{0.66,0.72,0.78}
\definecolor{lstblue}{rgb}{0,0,1}
\definecolor{lstmildgreen}{rgb}{0.42,0.53,0.39}
\definecolor{lstgreen}{rgb}{0,0.5,0}
\definecolor{lstorangedark}{rgb}{0.6,0.3,0}	
\definecolor{lstorange}{rgb}{0.75,0.52,0.005}
\definecolor{lstorangelight}{rgb}{0.89,0.81,0.67}
\definecolor{lstbeige}{rgb}{0.90,0.86,0.45}

\DeclareFontShape{OT1}{cmtt}{bx}{n}{<5><6><7><8><9><10><10.95><12><14.4><17.28><20.74><24.88>cmttb10}{}

\lstdefinelanguage{smtlib2}{
  alsoletter=-,
  morekeywords={declare-const,define-fun,assert,minimize,maximize,check-sat,get-objectives,and,or,not,distinct},
  extendedchars=false,
  keywordstyle=\bfseries\color{lstpurple},
  deletekeywords={Int,Bool},
  keywords=[2]{Int,Bool},
  keywordstyle=[2]\color{lstblue},
}

\lstdefinestyle{psql}
{
tabsize=2,
basicstyle=\scriptsize\upshape\ttfamily,
language=SQL,
morekeywords={PROVENANCE,BASERELATION,INFLUENCE,COPY,ON,TRANSPROV,TRANSSQL,TRANSXML,CONTRIBUTION,COMPLETE,TRANSITIVE,NONTRANSITIVE,EXPLAIN,SQLTEXT,GRAPH,IS,ANNOT,THIS,XSLT,MAPPROV,cxpath,OF,TRANSACTION,SERIALIZABLE,COMMITTED,INSERT,INTO,WITH,SCN,UPDATED},
extendedchars=false,
keywordstyle=\bfseries,
mathescape=true,
escapechar=@,
sensitive=true
}

\lstdefinestyle{psqlcolor}
{
tabsize=2,
basicstyle=\scriptsize\upshape\ttfamily,
language=SQL,
morekeywords={PROVENANCE,BASERELATION,INFLUENCE,COPY,ON,TRANSPROV,TRANSSQL,TRANSXML,CONTRIBUTION,COMPLETE,TRANSITIVE,NONTRANSITIVE,EXPLAIN,SQLTEXT,GRAPH,IS,ANNOT,THIS,XSLT,MAPPROV,cxpath,OF,TRANSACTION,SERIALIZABLE,COMMITTED,INSERT,INTO,WITH,SCN,UPDATED},
extendedchars=false,
keywordstyle=\bfseries\color{lstpurple},
deletekeywords={count,min,max,avg,sum},
keywords=[2]{count,min,max,avg,sum},
keywordstyle=[2]\color{lstblue},
stringstyle=\color{lstreddark},
commentstyle=\color{lstgreen},
mathescape=true,
escapechar=@,
sensitive=true
}

\lstdefinestyle{datalog}
{
basicstyle=\footnotesize\upshape\ttfamily,
language=prolog
}

\lstdefinestyle{pseudocode}
{
  tabsize=3,
  basicstyle=\small,
  language=c,
  morekeywords={if,else,foreach,case,return,in,or},
  extendedchars=true,
  mathescape=true,
  literate={:=}{{$\gets$}}1 {<=}{{$\leq$}}1 {!=}{{$\neq$}}1 {append}{{$\listconcat$}}1 {calP}{{$\cal P$}}{2},
  keywordstyle=\color{lstpurple},
  escapechar=&,
  numbers=left,
  numberstyle=\color{lstgreen}\small\bfseries, 
  stepnumber=1, 
  numbersep=5pt,
}

\lstdefinestyle{xmlstyle}
{
  tabsize=3,
  basicstyle=\small,
  language=xml,
  extendedchars=true,
  mathescape=true,
  escapechar=£,
  tagstyle=\color{keywordpurple},
  usekeywordsintag=true,
  morekeywords={alias,name,id},
  keywordstyle=\color{lstred}
}

\lstdefinestyle{smtlib2}
{
tabsize=2,
basicstyle=\scriptsize\upshape\ttfamily,
numbers=left,
stepnumber=1,
breaklines=true,
stringstyle=\color{lstreddark},
commentstyle=\color{lstgreen},
mathescape=true,
escapechar=@,
sensitive=true
}




\settopmatter{printacmref=false} 
\renewcommand\footnotetextcopyrightpermission[1]{} 

\makeatletter
\let\@copyrightspace\relax

\begin{document}


\fancyhead{}

\title{Understanding Queries by Conditional Instances}
\settopmatter{authorsperrow=4}
\author{Amir Gilad}
\authornote{Both authors contributed equally to this research.}
\affiliation{%
\institution{Duke University}
\country{}
}
\email{agilad@cs.duke.edu}

\author{Zhengjie Miao}
\authornotemark[1]
\affiliation{%
  \institution{Duke University}
  \country{}
}
\email{zjmiao@cs.duke.edu}

\author{Sudeepa Roy}
\affiliation{%
  \institution{Duke University}
  \country{}
}
\email{sudeepa@cs.duke.edu }

\author{Jun Yang}
\affiliation{%
  \institution{Duke University}
  \country{}
}
\email{junyang@cs.duke.edu}

\renewcommand{\shortauthors}{}

\begin{abstract}
A powerful way to understand a complex query is by observing how it operates on data instances.
However, \emph{specific} database instances are not ideal for such observations: they often include large amounts of superfluous details that are not only irrelevant to understanding the query but also cause cognitive overload; and one specific database may not be enough.
Given a relational query, is it possible to provide a simple and \emph{generic} ``representative'' instance that (1) illustrates how the query can be satisfied, (2) summarizes all specific instances that would satisfy the query in the same way by abstracting away unnecessary details?
Furthermore, is it possible to find a collection of such representative instances that together completely characterize all possible ways in which the query can be satisfied?
This paper takes initial steps towards answering these questions.
We design what these representative instances look like, define what they stand for, and formalize what it means for them to satisfy a query in ``all possible ways.'' 
We argue that this problem is undecidable for general domain relational calculus queries, and  
develop practical algorithms for computing a minimum collection of such instances subject to other constraints.
We evaluate the efficiency of our approach experimentally,
and \common{show its effectiveness in helping users debug relational queries through a user study.}
\end{abstract}

\begin{CCSXML}
<ccs2012>
 <concept>
  <concept_id>10010520.10010553.10010562</concept_id>
  <concept_desc>Computer systems organization~Embedded systems</concept_desc>
  <concept_significance>500</concept_significance>
 </concept>
 <concept>
  <concept_id>10010520.10010575.10010755</concept_id>
  <concept_desc>Computer systems organization~Redundancy</concept_desc>
  <concept_significance>300</concept_significance>
 </concept>
 <concept>
  <concept_id>10010520.10010553.10010554</concept_id>
  <concept_desc>Computer systems organization~Robotics</concept_desc>
  <concept_significance>100</concept_significance>
 </concept>
 <concept>
  <concept_id>10003033.10003083.10003095</concept_id>
  <concept_desc>Networks~Network reliability</concept_desc>
  <concept_significance>100</concept_significance>
 </concept>
</ccs2012>
\end{CCSXML}



\renewcommand{\algorithmicrequire}{\textbf{Input: }}
\renewcommand{\algorithmicensure}{\textbf{Output: }}


\maketitle
\setcounter{section}{0}
\section{introduction}\label{sec:intro}
A powerful way to understand a complex query is by observing how it operates on data instances.
A further in-depth approach may also consider the provenance of the query \cite{buneman2001and,cui2003lineage,green2007provenance} as an indicator how different combinations of tuples in the database  satisfy the query and generate each result. 
Another approach \cite{MiaoRY19} finds a minimal satisfying instance of the query. 
However, these 
characterizations 
are highly dependent on the given database instance, even when 
provenance is employed. 
In addition, such database instances can contain many details that divert attention from the query features themselves. 
For example, some queries are satisfied by an empty instance, but there may be other satisfying instances that are not trivial. 
Thus, some parts of the query may be ignored since they are not satisfied by the instance.
Furthermore, the evaluation on an instance leads to a satisfaction of specific combination of query atoms, but a different combination of atoms that is not satisfied by the specific database instance may reveal new insights.
While there are approaches to find a query solution without a given database instance \cite{chu2017cosette}, they provide only one way to satisfy a query, thereby again possibly missing different paths toward satisfying the query. 
%
%

\par
On the other hand, a single query can have infinitely many satisfying instances so showing all of them will not just be confusing, it may be impossible. 
Therefore, to understand all possible solutions to a query, 
we study the question of whether it is possible to provide a simple and \emph{generic} ``representative'' instance that (1) illustrates how the query can be satisfied, and (2) summarizes all specific instances that would satisfy the query in the same way by abstracting away unnecessary details. Further, we ask how we can find a collection of such representative instances that together completely characterize all possible ways to satisfy the query. 
\cut{
To answer these questions, we propose a novel approach of understanding queries, that is based on \emph{conditional instances} or \emph{c-instances} by adapting the notion of 
\emph{c-tables} \cite{ImielinskiL84} for incomplete databases from the literature, which are abstract database instances comprising variables (labeled nulls) along with a condition on those variables. Specific database instances satisfying a query can be of arbitrary size, whereas our solution using a collection of c-instances provides an abstract representation  of all satisfying database (ground) instances along with the particular way they satisfy a query using a notion of \emph{} with a small collection of succinct c-instances that provide the same \emph{coverage} by satisfying different parts of a query.  
}


\par
To answer these questions, we propose a novel approach of understanding queries based on \emph{conditional instances} or \emph{c-instances}, by adapting the notion of 
\emph{c-tables} \cite{ImielinskiL84} from the literature on incomplete databases, which are abstract database instances comprising variables (labeled nulls) along with a condition on those variables.
Thus, each c-instance can be considered a representative of all grounded instances that replace its variables with constants that satisfy the conditions they are involved in. 
However, it may be impossible to capture all satisfying instances with a single c-instance. 
Therefore, we 
use the idea of {\em coverage},
borrowed from the field of software validation \cite{MillerM63,myers2004art,ammann2016introduction}, where it has been well-studied in the context of software testing. For example, a test suite is said to \emph{cover} a function if the function is invoked during the test. This idea can be abstracted to program flows, where an edge/branch in the control-flow graph (see \cite{allen1970control} for details) is said to be covered if the edge/branch has been executed. 
For our use, when given a query $Q$ and a satisfying c-instance $\cinstance$, the atoms and conditions of $Q$ that are satisfied by all ground instances that $\cinstance$ represents are said to be covered by $\cinstance$. 
We intend to find a set of c-instances such that for every grounded instance that satisfies $Q$ with some coverage $\cov$, we have a c-instance 
that satisfies $Q$ with the same coverage $\cov$. 

\par

The idea of providing a compact representation of all instances that satisfy a query is appealing not just from a theoretical perspective, but also for multiple practical reasons. 
\common{
\begin{itemize}[leftmargin=*]
\item First, this approach can be used for explaining why a given (wrong) query is different than another (correct) one, as studied in works on counterexamples \cite{chu2017cosette,MiaoRY19}. In this scenario, we are given a query $Q_1$ and another query $Q_2$, the output is an instance $K$ such that $Q_1(K) \neq Q_2(K)$ (i.e., $K$ is a satisfying instance for $Q_1-Q_2$ or $Q_2-Q_1$). In an educational setting, such instances would help instructors and students understand why a query is wrong and debug it, without revealing the correct query to students.
\item Second, developers and data scientists who work with complex queries can this approach to explore how various parts of the queries can be triggered by different data, or to help them debug or refine these queries. For example, if there are no instances that can trigger some part of a query, it may be possible to simplify the query to remove ``dead code'' that logically contradicts other necessary conditions in the query.
\item Third, this approach offers a method for generating a suite of test instances for a complex query such that together they ``exercise'' all parts of the query.  In the field of synthetic data generation, previous works have proposed different approaches to generating data for testing workload queries \cite{BinnigKLO07,LoCH10,SanghiSHT18}. Using our approach, given a workload query $Q$, we can generate a set of instances to provide coverage testing for all parts of $Q$; furthermore, given a set of workload queries, we can generate test instances where a given subset of queries are satisfied but others are not.  These generated instances can be used for automated, comprehensive testing of queries.
\end{itemize}
} 
We illustrate the first use case above with an example below.

\begin{figure}[t]\scriptsize\setlength{\tabcolsep}{3pt}
  \subfloat[\small \Drinker\ relation]{
    \begin{minipage}[b]{0.4\linewidth}\centering
    {\scriptsize
      \begin{tabular}[b]{|c|c|c|}\hline
        {\tt name} & {\tt addr}  \\ \hline
        Eve Edwards & 32767 Magic Way \\\hline
      \end{tabular}
      }
    \end{minipage}
  }
    \hfill
  \subfloat[\small \Beer\ relation]{
    \begin{minipage}[b]{0.4\linewidth}\centering
    {\scriptsize
      \begin{tabular}[b]{|c|c|c|}\hline
        {\tt name} & {\tt brewer}  \\ \hline
        American Pale Ale &  Sierra Nevada\\\hline
      \end{tabular}
      }
    \end{minipage}
  }
  \\
    \subfloat[\small \Bar\ relation]{
    \begin{minipage}[b]{0.45\linewidth}\centering
    {\scriptsize
      \begin{tabular}[b]{|c|c|c|}\hline
        {\tt name} & {\tt addr}  \\ \hline
        Restaurant Memory & 1276 Evans Estate\\
        Tadim & 082 Julia Underpass\\
        Restaurante Raffaele & 7357 Dalton Walks \\\hline
      \end{tabular}
      }
    \end{minipage}
  }
    \hfill
   \subfloat[\small \Likes\ relation]{
    \begin{minipage}[b]{0.4\linewidth}\centering
     {\scriptsize
      \begin{tabular}[b]{|c|c|c|}\hline
        {\tt drinker} & {\tt beer}  \\ \hline
       Eve Edwards & American Pale Ale \\\hline
      \end{tabular}
      }
    \end{minipage}
  }\\
   \hfill
   \subfloat[\small \Serves\ relation]{
    \begin{minipage}[b]{1\linewidth}\centering
     {\scriptsize
      \begin{tabular}[b]{|c|c|c|c|}\hline
         {\tt bar} & {\tt beer} & {\tt price} \\ \hline
         Restaurant Memory &American Pale Ale &2.25\\
Restaurante Raffaele & American Pale Ale & 2.75 \\
Tadim & American Pale Ale & 3.5\\\hline
      \end{tabular}
      }
    \end{minipage}
  }\\
  \vspace{-2mm}
  \caption{\label{fig:running}
    Database instance $\ains_0$ of the Beers dataset. 
    We assume natural foreign key constraints from \Serves\ and \Likes\ to \Drinker, \Bar, \Beer.
  }
  \end{figure}

\begin{figure}[t]
    \centering\hrule\vspace{-2mm}
    \begin{subfigure}{1\linewidth}
    \begin{scriptsize}
        \begin{align*}
        \qcorrect=&\{(x_1, b_1)  \ \mid \ 
        \exists d_1, p_1 \big(\Serves(x_1, b_1, p_1) \land \Likes(d_1, b_1) \land \\
        & d_1 \ \sql{LIKE} \ \text{'Eve\textvisiblespace\%'} \land 
        \forall x_2, p_2 (\neg \Serves(x_2, b_1, p_2) \lor p_1 \geq p_2 \big) \}
        \end{align*}
        \caption{\mdseries\itshape Query \qcorrect: for each beer liked by any drinker whose first name is Eve, find the bars that serve this beer at the \emph{highest} price}\label{fig:q_correct}
        
    \end{scriptsize}
    \end{subfigure}
    
    \begin{subfigure}{1\linewidth}
        \begin{scriptsize}
    \begin{align*}
    \qincorrect=\{(x_1, b_1) & \ \mid \ 
    \exists d_1, p_1 \big( 
    \exists x_2, p_2 ( \Serves(x_1, b_1, p_1) \land \Likes(d_1, b_1) \\
    &\land d_1 \ \sql{LIKE} \ 'Eve\%' \land \Serves(x_2, b_1, p_2) \land p_1 > p_2\big) \}
    \end{align*}
        \caption{\mdseries\itshape Query \qincorrect\ which is similar to \qcorrect\ but does not use the difference operator and instead, find beers served at a \emph{non-lowest} price 
        }
        
        \label{fig:q_incorrect}
            \end{scriptsize}
    \end{subfigure}
    \hrule\vspace{-3mm}
    \caption{Correct query \qcorrect\ and incorrect query \qincorrect. 
    Note that the formula in $\qcorrect$ has a space after `Eve' whereas $\qincorrect$ does not. Here and later, \textvisiblespace  denotes the space symbol. 
    }
    \label{fig:queries}
\end{figure}

\begin{figure}[t]\hrule\vspace{-2mm}
\begin{scriptsize}
    \begin{align*}
    \qincorrect-\qcorrect  =  \{(x_1, b_1)  \ \mid \ 
    \exists d_1, p_1 \big( 
    \exists x_2, p_2 ( \Serves(x_1, b_1, p_1) \land \Likes(d_1, b_1) \land d_1 \ \sql{LIKE} \ 'Eve \%' \\
      \land \Serves(x_2, b_1, p_2) \land p_1 > p_2\big)
    \land  
    \forall d_2, p_3 \big(  \neg \Likes(d_2, b_1) \lor \neg (d_2 \ \sql{LIKE} \ \text{`Eve\textvisiblespace \%'}) \lor \\  \neg \Serves(x_1, b_1, p_3) \lor (\exists x_3, p_4 ( \Serves(x_3, b_1, p_4) \land  p_3 < p_4 ) )  \big)
    \}
    \end{align*}    
\end{scriptsize}\hrule
\vspace{-3mm}
    \caption{The difference query $\qincorrect\ - \qcorrect$ from Figure~\ref{fig:queries}.
    }
    \label{fig:diff-query}
\end{figure}

\begin{figure}[t]\scriptsize\setlength{\tabcolsep}{3pt}
\centering
  \subfloat[\small \Drinker\ relation]{
    \begin{minipage}[b]{0.25\columnwidth}\centering
  {\scriptsize
      \begin{tabular}[b]{|c|c|}\hline
        {\tt name} & {\tt addr} \\ \hline
        $d_1$ & $*$  \\\hline
      \end{tabular}
      }
      \end{minipage}
      }
  \subfloat[\small \Bar\ relation]{
    \begin{minipage}[b]{0.2\columnwidth}\centering
    {\scriptsize
      \begin{tabular}[b]{|c|c|}\hline
        {\tt name} & {\tt addr}  \\ \hline
        $x_1$ & $*$\\
        $x_2$ & $*$\\
        $x_3$ & $*$\\\hline
      \end{tabular}
      }
    \end{minipage}
  }
   \subfloat[\small Serves relation]{
    \begin{minipage}[b]{0.3\columnwidth}\centering
     {\scriptsize
      \begin{tabular}[b]{|c|c|c|}\hline
         {\tt bar} & {\tt beer} & {\tt price} \\ \hline
         $x_1$ &$b_1$ &$p_1$ \\
        $x_2$ &$b_1$  &$p_2$  \\
        $x_3$ & $b_1$ & $p_3$\\\hline
      \end{tabular}
      }
    \end{minipage}
  }
  \subfloat[\small \Beer\ relation]{
    \begin{minipage}[b]{0.25\columnwidth}\centering
    {\scriptsize
      \begin{tabular}[b]{|c|c|}\hline
        {\tt name} & {\tt brewer} \\ \hline
        $b_1$ &  $*$ \\\hline
      \end{tabular}
      }
    \end{minipage}
  }\\
   \subfloat[\small \Likes\ relation]{
    \begin{minipage}[b]{.4\columnwidth}\centering
     {\scriptsize
      \begin{tabular}[b]{|c|c|}\hline
        {\tt drinker} & {\tt beer}  \\ \hline
       $d_1$ & $b_1$ \\\hline
      \end{tabular}
      }
    \end{minipage}
  }
  \subfloat[\small Global condition]{
    \begin{minipage}[b]{0.6\columnwidth}\centering
    {\scriptsize
      \begin{tabular}[b]{|c|}\hline
      $d_1$ LIKE `Eve\%'  
    $\land p_1 > p_2 \land p_2 > p_3$\\\hline
      \end{tabular}
      }
    \end{minipage}
    }
  \vspace{-2mm}
  \caption{\label{fig:c-instance}
  C-instance $\cinstance_0$ that satisfies $\qincorrect-\qcorrect$ and generalizes the counterexample $K_0$ in Figure \ref{fig:running}.}
  \vspace{-3mm}
\end{figure}

\begin{Example}
\label{eg:running}
Consider the database $K_0$ shown in Figure \ref{fig:running} containing information about drinkers (\Drinker\ table), beers (\Beer\ table), bars (\Bar\ table), which beer does a drinker like (\Likes\ table), 
and which bar serves which beer (\Serves\ table). 
Also consider the queries \qcorrect\ and \qincorrect\ 
written in Domain Relational Calculus (DRC) 
in Figures \ref{fig:q_correct} and \ref{fig:q_incorrect}, respectively.  
The correct query \qcorrect\ returns a list of bars that serve the most expensive beer liked by any drinker whose first name is `Eve', whereas \qincorrect\ is a very similar query that chooses bars serving beers not at the lowest price and only requires first names to have a prefix of `Eve'.   
Figure \ref{fig:diff-query} shows the formula for $\qincorrect-\qcorrect$ but is not easily understandable and does not clearly show the difference between the queries.
In this case, Figure \ref{fig:running} gives the minimum counterexample $K_0$ for the difference between \qcorrect\ and \qincorrect\ \cite{MiaoRY19}. In particular, \qincorrect\ returns the tuples (Restaurante Raffaele, American Pale Ale) and (Tadim, American Pale Ale) while \qcorrect\ only returns the latter tuple. 
\par
Now consider the more general counterexample as a \emph{c-instance} (defined in the next section) showing the differences between the queries  $\qincorrect-\qcorrect$ in Figure \ref{fig:c-instance}. This c-instance, $\cinstance_0$, shows abstract tuples with variables instead of constants ($*$ are `don't care' variables) and a condition that the variables must satisfy (there should be a drinker whose name is `Eve' with a space after and the order of the prices in Serves table should be $p_1 > p_2 > p_3$). Thus, $\cinstance_0$ not only generalizes the counterexample in Figure \ref{fig:running} (i.e., there exists an assignment to the variables that results in the instance in Figure \ref{fig:running} and satisfies the global condition), but, it also specifies the `minimal' condition for which \qincorrect\ differs from \qcorrect\ (the global condition). The ground instance in Figure \ref{fig:running} contains specific values that may confuse the user and divert attention from the core differences. 
This is one of the c-instances in our \emph{universal solution} that includes three c-instances. Each of the c-instances captures a facet of the difference between \qincorrect\ and \qcorrect. 



\end{Example}

\mypar{Our contributions}
Our contributions are summarized below. 
\begin{itemize}[leftmargin=2mm]
    \item 
    We propose a framework for characterizing all query answers using c-instances using the notion of \emph{coverage} and a \emph{universal solution} that captures different ways a given DRC query can be satisfied. 
    \cut{
    
    We then define a model for query characterization based on the notion of coverage that defines a universal solution as a set of c-instances that satisfy the query, such that for every ground instance that satisfies it with some coverage $\cov$, there is a counterpart c-instance in the set with coverage $\cov$. Each such c-instance has to be minimal, where the minimality definition for c-instances is given in terms of size, but any other reasonable minimality definition will also work with our model and algorithms. We further give a less restrictive definition of universal solution called a c-solution that is a set of minimal c-istances with different coverages.
    }
    \item We argue that deciding whether a universal solution or even any satisfying c-instance exists is undecidable for general DRC queries, by giving a reduction from the finite satisfiability problem for First Order Logic formulas \cite{T-undecidable50}. However, for the 
    class of conjunctive queries with negation $CQ^{\neg}$, the universal solution can be found in poly-time in the query size. 

    \item Since the problem in general is undecidable, we give two practical algorithms for finding a minimal set of satisfying c-instances. The first algorithm runs an exhaustive search subject to a size limit on the c-instances 
    inspired by the chase procedure from Data Exchange 
    \cite{FaginKMP03,FaginKP03}.
    Then, we provide a more efficient chase algorithm that may return a smaller set of satisfying c-instances. 
    \cut{We start with an empty instance and perform BFS, creating a chase tree \cite{BaranyCKOV17} where each c-instance in the queue is expanded using a recursive procedure that iterates over the nodes of the query syntax tree and handles each node ($\land$, $\lor$, $\exists$, $\forall$) appropriately. The algorithm gradually expands each c-instance and the set of c-instances by considering all expansion options for each node. 
    The minimality of all c-instances is checked in post-processing. 
    As this algorithm performs a comprehensive search for c-instances, its performance is lacking in practice. 
    In particular, handling the disjunctive connectives is very time consuming.
    Thus, we propose an optimization that changes the original syntax tree into a set of conjunctive trees such that each subtree rooted at a $\lor$ node $Q_1 \lor Q_2$, is converted into three conjunctive trees ($Q_1 \land Q_2$, $\neg Q_1 \land Q_2$, and $Q_1 \land \neg Q_2$). This reduction allows us to run the chase algorithm on each conjunctive tree. While this procedure is more efficient than the first one, it may miss c-instances that will be found with the first algorithm (demonstrated in the paper).
    }
    
    \item We 
    experimentally show scalability and quality of the c-instances returned by our algorithms varying different parameters.
    Our experiments use a real collection of wrong queries submitted by students of an undergraduate database course as well as \common{queries 
    over the TPC-H schema.
    \cut{
    We examine the effect of the query structure and number connectives and quantifiers on the runtime. We also measure the number of c-instances found in the c-solution outputted by our approach and its variations. 

    We further examine the properties of the obtained c-instances in terms of their size and coverage.} 
    Finally, we provide a comprehensive user study and a case study that show the usefulness of our approach. }
\end{itemize}

\section{Related Work}
\label{sec:related}

\mypar{Test data generation} 
QAGen \cite{BinnigKLO07} was among the first systems that focused on data generation in a query-aware fashion. The system aimed at testing the performance of a database management system given a database schema, one parametric Conjunctive Query, and a collection of constraints on each operator. 
MyBenchmark \cite{LoCH10} extends \cite{BinnigKLO07} by generating a {\em set} of database instances that approximately satisfies the cardinality constraints from a set of query results. 
HYDRA \cite{SanghiSHT18} uses a \textit{declarative approach} that allows for 
the generation of a database summary that can be used for dynamically generating data for query execution. 
Cosette \cite{chu2017cosette}, which targets checking SQL equivalence without any test instances, encodes SQL queries to constraints using symbolic execution, and uses a constraint solver to find one counterexample that differentiates two input queries. 
RATest \cite{MiaoRY19} proposes an instance-based counterexample for distinguishing two queries, where the emphasis is on the cardinality of the generated counterexample. 
\revb{
The main differences between our work and \cite{MiaoRY19} are that (1) we provide abstract instances with variables and conditions to pinpoint the source of error while comparing a wrong query against a correct query,  (2) we only use the query and the schema to generate the counterexample 
and {\em do not need a database instance} to provides a counterexample, whereas \cite{MiaoRY19} requires a database instance to output one sub-instance as the counterexample,  (3) \cite{MiaoRY19} provides one grounded instance where the correct and wrong queries differ, while we generate {\em a set of c-instances}
aims to 
show all possible ways two queries can differ.
}
\cut{
Conversely, our solution aims at characterizing queries by multiple minimal {\em abstract} instances that satisfy the query and cover it in different ways. Thus, finding the difference between two queries is one use-case of our solution and for it, we consider all c-instances that differentiate the queries, where one of them has the same coverage as the instance found by Cosette or RATest. 
}
XData \cite{chandra2015data} generates test data by covering different types of query errors that can commonly occur. Qex \cite{veanes2010qex} is a tool for generating input relations and parameter values for a parameterized SQL query and aims at unit testing of SQL queries. It also generates one instance that satisfies the query and does not support nested queries and set operations, and thus does not support the full class of DRC queries. 
Olston et. al. \cite{olston2009generating}
studied the problem of generating small example data for dataflow programs to help users understand the behavior of their programs. 
\par
\textbf{Explanations for query results using provenance} 
Data provenance has been studied from many aspects \cite{buneman2001and,green2007provenance,Userssemiring1,GS13,CheneyProvenance,Olteanu12,sarma2008exploiting,amsterdamer2011provenance}. 
A multitude of approaches have used provenance for {\em query answers explanations}. 
One approach \cite{RS14} is based on tuple interventions in the database in the goal finding the influencing tuples over the query answers. 
Another \cite{MeliouRS14} quantifies the responsibility of each input tuples to the query answer, an idea inspired by causality \cite{pearl2009causal}. Shapley values \cite{shapley1953value} have also been used in this measure the contribution of each input tuple to the query result \cite{LivshitsBKS20}. 
Natural Language (NL) explanations have also been proposed \cite{DeutchFG20} when users employ an NLIDB such as \cite{nalir}.
The problem of explaining why a certain tuple is not in the query answer, also referred to as `why-not provenance', has been studied using two approaches;
instance-based \cite{DBLP:journals/pvldb/HuangCDN08,DBLP:journals/pvldb/HerschelHT09,
DBLP:journals/pvldb/HerschelH10,LS17} where explanations are (missing) input tuples,
and query-based \cite{DBLP:conf/sigmod/ChapmanJ09,DBLP:conf/sigmod/TranC10,DeutchFGH20} where explanations are based on query predicates or operators. 
Our approach suggests a query characterization that is independent of a specific database instance and thus also its provenance. 
\par
\textbf{Coverage in software testing}
Test coverage is used to measure the percentage of a given software that is executed during the tests. 
Intuitively, the higher the test coverage, the lower the likelihood of the software containing bugs and unforeseen errors \cite{MillerM63,zhu1997software,malaiya2002software}. 
Different criteria for coverage have been proposed \cite{myers2004art,zhu1997software}, e.g., function coverage (checking the percentage of functions in the program that are executed during testing) and branch coverage (checking the percentage of branches, i.e., decisions that have true and false outcomes, in the program that are executed during testing). 
These ideas have certainly influenced our model. 
\par
\textbf{Chase in schema mappings}
\revb{
The chase procedure was originally suggested in the context of database dependencies \cite{AhoBU79,MaierMS79} and later used for generating schema mappings \cite{FaginKMP03,ChiticariuT06}. 
The latter application aims to map one database schema to another, by using tuple-generating and equality-generating dependencies. These dependencies are then used in a chase procedure to generate the mapping between the schemas. 
Previous work has explored the complexity of the chase procedure and the types of solutions it is able to generate \cite{DeutschNR08,FaginKP05,GottlobN06}. 
Our notion of a universal solution is therefore inspired by the notion of universal solution in the schema mapping problem \cite{FaginKMP03}. 
Recent work has proposed an abstraction of schema mappings \cite{AtzeniBPT19}, which allows reusing meta schema mappings. In this paper, the c-instances can be thought of as ``meta-instances'' that can be mapped to concrete ones (e.g., \cite{MiaoRY19}) as needed. 
}
\section{Model for Query Characterization}\label{sec:model}

\cut{
\begin{table}[t]
\begin{small}
\centering
\caption{Notations}\label{tab:table-Of-notation}
\vspace{-3mm}
\begin{tabularx}{\linewidth}{| c | X | c |}
\hline $\schemaOf{\rel}$ & Schema\\
\hline $\varset$ & An infinite set of variables\\
\hline $\dom(A)$ & Domain of attr. $A$\\
\hline $Q$ & DRC query\\
\hline $\cinstance$ & C-instance\\
\hline $\Rep(\cinstance)$ & The set of ground instances generated by the c-instance \cinstance\\
\hline $\mathcal{S}_\cinstance$ & A set of c-instances\\
\hline $\assignment$ & An assignment to a DRC query from a ground instance\\
\hline $\map$ & 
A mapping from a c-instance $\cinstance$ to a ground
instance. 
\\
\hline
\end{tabularx}
\end{small}
\end{table}
}


\subsection{Databases and Relational Calculus}

First we review and define domains and Domain Relational Calculus which will be used for express queries in this paper.
A database schema $\schemaOf{\rel}$ is a collection $(\rel_1, ... , \rel_\relnum)$ of relation schemas. Each $\rel_i$ is defined over a set of attributes denoted by $\attrs(\rel_i)$. For each attribute $\att \in \attrs(\rel_i)$, its {\bf domain} is a set of (possibly infinite) {\bf constants} and is denoted as $\dom(\att)$, and $\dom = \cup_{\att} \dom(\att)$; for simplicity, we will frequently use $\dom$ instead of $\dom(\att)$ with the implicit assumption that the constants are from the right domain $\dom(\att)$. 
Two relations can share the same attribute $\att$; we use $\rel_i.\att$ to explicitly denote an attribute $\att \in \attrs(\rel_i)$. Further, two attributes may share the same domain (e.g., when they share the same name or are related by foreign key constraints). A {\bf ground instance} (or simply an {\bf instance} when it is clear from the context) is a (possibly empty) finite set of tuples with constant attribute values that conform to the schema and corresponding domains. In addition, 
we allow standard constraints like key constraints, foreign key constraints, and functional dependencies in our framework.

\paratitle{DRC queries and tree representation.}
We next review the definition of Domain Relational Calculus (DRC) \cite{LacroixP77} and use it to define queries and syntax trees. 
It has been shown that DRC is equivalent to Relational Algebra \cite{codd1972relational}, which provides the theoretical foundation to query languages such as SQL. 

\begin{Definition}[DRC Queries]\label{def:drc}
Given a schema $\schemaOf{R}$, a {\bf DRC query} $Q$ has the form $Q = \{(x_1, x_2, ..., x_{\folarity}) \mid \aFOL_Q(x_1, ... x_{\folarity})\}$ where 
$\aFOL_Q$ is a standard first order logic (FOL) formula \cite{AHV} involving relation names $R_1, \cdots, R_r$, constants from $\dom$, a set of {\bf \qvariable s}  $\varset_Q$ for attribute values, quantifiers $\exists, \forall$, operators $\neg, =, >, \geq, <, \leq, \neq, LIKE$ etc., and connectives $\wedge, \vee$. Here, $\varset_Q^{out} = \{x_1, \cdots, x_p\} \subseteq \varset_Q$ denote {\bf output variables} of the query $Q$, which can be an empty set for \emph{a Boolean query}. The output variables are \emph{free variables} in $P_Q$; the remaining variables in $P$ are {\em quantified} under $\forall$ or $\exists$. 
\par
The formula $\aFOL_Q$ is built up from {\bf DRC atoms} of the following forms: (1) 
$\rel(y_1 ..., y_{\relarity})$ or $\neg \rel(y_1 ..., y_{\relarity})$, where $\rel \in \schemaOf{R}$ is a relation,  
and each $y_i \in \varset_Q \cup \dom$ is a  query variable or a constant, 
and (2) conditions $x_1 \ op \ x_2$ or $x_1 \ op \ \aconst$, where $x_1, x_2 \in \varset_Q$, $\aconst \in \dom$, and $op$ is a binary operator.
\par
A ground instance $D$ is said to {\bf satisfy} a DRC query $Q$ (denoted by $D \models Q$) if $Q(D) \neq \emptyset$ (for a Boolean query, $Q(D) = \{\{\}\}$ or \emph{true}), i.e., there is a {\bf satisfying assignment} $\assignment: \varset_Q^{out} \rightarrow \dom$  of the output variables of $Q$ 
to the constants in $D$ 
such that $P_Q$ evaluates to true. 
\end{Definition}
We make a few assumptions without loss of generality: (1) in the FOL formula $\aFOL_Q$, all negations appear in DRC atoms (which can be achieved by repeated applications of standard equivalences like $\neg (\forall x P(x)) = \exists x(\neg P(x))$ and De Morgan's laws like $\neg (a \vee b) = \neg a \wedge \neg b$, etc.); (2) the DRC queries are {\bf safe} or domain independent \cite{AHV}, i.e., any variable $y_i$ that appears in a negated relation $\neg \rel(y_1 ..., y_{\relarity})$ also appear under a positive relation, e.g., queries like $\{x : \neg R(x)\}$ are not allowed; and (3) each quantified variable is unique in $P_Q$ (which can be achieved by renaming). 





\begin{Example}\label{ex:query}
The queries \qcorrect\ and \qincorrect\ are shown in Figure \ref{fig:queries}, whereas Figure~\ref{fig:diff-query} gives the difference query $\qincorrect-\qcorrect$ that identifies beers at a non-lowest price but also not at the highest price. 
In Figure \ref{fig:diff-query}, the output variables are $x_1, b_1$ and the FOL formula is specified on the right hand side by renaming the variables in $\qcorrect, \qincorrect$ in  Figure \ref{fig:queries}.
The ground instance $\ains$ in Figure~\ref{fig:running} satisfies  $\qincorrect-\qcorrect$ in Figure \ref{fig:diff-query}, since there is a satisfying assignment $\assignment$ from the output variables $x_1, b_1$ in the query $\qincorrect-\qcorrect$, i.e., $\assignment(x_1) = \text{``Restaurant Raffaele''}$, $\assignment(b_1) = \text{``American Pale Ale''}$, such that the formula in the query is satisfied when $d_1$ = ``Eve\textvisiblespace Edwards'', $p_1$ = 2.75, $x_2$ = ``Restaurant Memory'', $p_2 = 2.25$, therefore  the first part of the FOL formula from $\qincorrect$ is true. The second part of the FOL formula with $\wedge$ is also true for all $d_2, p_3$: while the first two disjuncts under $\neg$ evaluates to false, $\neg \Serves(x_1, b_1, p_3) = true$ for $p_3 = 2.25, 3.5$, and for $p_3 = 2.75$ the fourth disjunct is satisfied with $x_3 = \text{``Tadim''}$ and $p_4 = 3.5$.
\end{Example}
\begin{Definition}[Syntax Tree of Query]\label{def:syntax-tree}
A syntax tree of a query $Q$ is tree for the FOL formula $\aFOL_Q$ satisfying the following rules:
\begin{enumerate}
    \item Each leaf node is a DRC atom.
    \item Each internal node is either a quantifier with a single variable (e.g., $\forall x$ and $\exists x$) with a single child, or a connective ($\land$ and $\lor$) with two children.
\end{enumerate}
Further, since in the formula $\aFOL_Q$ all negations appear in the DRC atoms, all negations in the syntax tree appear in the leaves; we do not use separate nodes for negation.
\end{Definition}

Given a DRC query $Q$, we can have a unique syntax tree following the order of quantifiers in $Q$ (e.g., for $\exists x y$, $\exists y$ appears as the child of $\exists x$, and a fixed order of associative connectives with appropriate parentheses, e.g., $(p_1 \vee p_2 \vee p_3)$ is always assumed to be $((p_1 \vee p_2) \vee p_3)$. However, two equivalent DRC queries may have different syntax trees, e.g., $\{x \mid R(x)\}$ and $\{x \mid (R(x) \land T(x)) \lor (R(x) \land \neg T(x))\}$.
The syntax tree for the query $\qincorrect - \qcorrect$ from Figure~\ref{fig:diff-query} is shown in Figure \ref{fig:syntax-tree-minus} (to save space, we put multiple quantifiers with variables in the same node). The special treatment of the negation operator $\neg$  is for the sake of convenience in our algorithms. 

\begin{figure}[t]
    \begin{minipage}{1.0\linewidth}
     \centering\hspace*{-2mm}
    \begin{tikzpicture}[
        scale=.6,
    level 1/.style={level distance=0.6cm, sibling distance=-9mm},
    level 2/.style={sibling distance=-10mm, level distance=0.6cm}, 
    level 3/.style={level distance=0.8cm,sibling distance=-15mm},
    level 4/.style={level distance=1.0cm,sibling distance=-3mm},
    level 5/.style={level distance=1.0cm, sibling distance=0mm},
    snode/.style = {shape=rectangle, rounded corners, draw, align=center, top color=white, bottom color=blue!20},
    coverednode/.style = {shape=rectangle, rounded corners, draw, align=center, top color=applegreen, bottom color=applegreen!20,dashed},
    logic/.style = {shape=rectangle, rounded corners, draw, align=center, top color=white, bottom color=red!70}]
    \Tree
    [.\node[logic] {$\boldsymbol{\land}$}; 
        [.\node[logic] {$\boldsymbol{\exists {d_1, p_1}}$}; 
            [.\node[logic] {$\boldsymbol{\land}$}; 
                [.\node[logic] {$\boldsymbol{\land}$}; 
                    [.\node[logic] {$\boldsymbol{\land}$}; \node[coverednode]{$\Likes(d_1, b_1)$}; \node[coverednode]{$d_1$ \sql{LIKE} \text{`Eve\%'}};]
                    \node[coverednode,xshift=0mm] {$\Serves(x_1, b_1, p_1)$};]
                [.\node[logic] {$\boldsymbol{\exists {x_2, p_2}}$};
                    [.\node[logic] {$\boldsymbol{\land}$}; \node[coverednode] {$\Serves(x_2, b_1, p_2)$}; \node[coverednode] {$p_1 > p_2$};
                    ]
                ]
            ]
        ]
        [.\node[logic,xshift=0mm, yshift=-6mm] {$\boldsymbol{\forall {d_2, p_3}}$}; 
            [.\node[logic,xshift=0mm, yshift=-6mm] {$\boldsymbol{\lor}$}; 
                [.\node[logic, xshift=0mm, yshift=-6mm] {$\boldsymbol{\lor}$}; 
                    [.\node[logic, yshift=-6mm] {$\boldsymbol{\lor}$}; \node[snode, yshift=-6mm]{$\neg \Likes(d_2, b_1)$}; \node[snode, yshift=-6mm]{$\neg$($d_2$ \sql{LIKE} `Eve\textvisiblespace\%')};]
                    \node[coverednode,xshift=0mm, yshift=-6mm] {$\neg \Serves(x_1, b_1, p_3)$};]
                [.\node[logic,xshift=0mm, yshift=-6mm] {$\boldsymbol{\exists {x_3, p_4}}$};
                    [.\node[logic, xshift=0mm, yshift=-6mm] {$\boldsymbol{\land}$};
                        \node[coverednode,xshift=0mm, yshift=-6mm] {$\Serves(x_3, b_1, p_4)$}; \node[coverednode,xshift=0mm, yshift=-6mm] {$p_3 < p_4$};
                    ]
                ]
            ]
        ]
    ]
    \end{tikzpicture}
    \vspace{-3.5mm}
    \caption{Syntax tree of $\qincorrect-\qcorrect$ in Example~\ref{ex:query}. Atoms covered by the c-instance in Figure~\ref{fig:c-instance} are  \revb{in green dashed boxes}.} \label{fig:syntax-tree-minus}
    \vspace{-2mm}
\end{minipage}
\end{figure}


\cut{
\begin{Definition}[Domain Sets]
Let $\schemaOf{\rel} = (\rel_1, ..., \rel_{\relnum})$ be a relational schema, a set $\schema$ is a domain set of $\schemaOf{\rel}$ if 
$$\schema \subseteq \bigtimes_{1 \leq i \leq \relnum} \times_{\att \in \attrs(\rel_i)} \dom(\att)$$
where $\dom(\att)$ is the domain of $\att$. 
We denote $\schema.\att$ as the domain of attribute $\att$ in $\schema$ \amir{isn't this denoted by $\dom(\att)$ as in the expression above?}. 
A database instance $\db$ is said to be valid under $\schema$ iff for any value $v$ in $\db$ of an attribute $\att$, $v \in \schema.\att$.
\end{Definition}
}

\cut{
\sout{
\emph{Valid database instances.}
A database instance $\db$ is valid under the schema $\schema$ and a set of Foreign Key constraints $\mathcal{C}$ iff for any value $v$ in $\db$ of an attribute $\att$, $v \in \dom(\att)$ and $\db$ satisfies $\mathcal{C}$. 
}

\begin{Example}\label{eg:domain-set}
\sout{
Consider the simplified beers schema containing just the Likes relation with its attributes as detailed in Example \ref{ex:domain}. 
A domain set of this schema is: $\dom(Likes.drinker) = \{$`Amy', `Eve'$\}$, $\dom(Likes.beer) = \{$`Corona', `Budweiser'$\}$. 

The instance $\big\{Likes('Eve', 'Corona')\big\}$ is a valid instance, but the instance $\big\{Likes(`Bob', `Espresso') \big\}$ is invalid under the domain set, since $`Bob' \notin \dom(Likes.drinker)$ and $`Espresso' \notin \dom(Likes.beer)$.
}
\end{Example}
}


\cut{
\begin{Example}\label{eg:domain-set}
The instance $\big\{Likes('Eve', 'Corona')\big\}$ is a valid instance, but the instance $\big\{Likes(`Bob', `Espresso') \big\}$ is invalid under the domain set, since $`Bob' \notin \dom(Likes.drinker)$ and $`Espresso' \notin \dom(Likes.beer)$.
\end{Example}
}

\subsection{Conditional Instances or C-Instances}

We next give the definition of a \emph{c-instance} adapting the concepts of \emph{v-tables} and \emph{c-tables} from the literature \cite{ImielinskiL84}.
We distinguish between query variables whose domain is denoted by $\varset$ (Definition~\ref{def:drc}), and variables in the c-instances whose domain is denoted by $\lnset$; we refer to the latter as {\bf labeled nulls} (called \emph{marked nulls} in \cite{ImielinskiL84}) for clarity. The c-instances involve conditions using {\bf atomic conditions}, which are either (1) an atom of the form $[x~op~c]$ ($\neg [x~op~c]$) or $[x~op~y]$($\neg[x~op~y]$) where $x$ and $y$ are labeled nulls in $\lnset$, $c$ is a constant in $\dom(x)$, and $op \in \{<, >, \leq, \geq, =, \neq, LIKE, ...\}$ is a binary operator, or (2) a condition of the form $\neg \rel(x_1, \ldots, x_{\relarity})$ where $\rel$ is a relation on $\relarity$ attributes.

\cut{
\begin{Definition}[Atomic Condition]
\label{def:atom-cond}
An atomic condition is either (1) an atom of the form $[x~op~c]$ ($\neg [x~op~c]$) or $[x~op~y]$($\neg[x~op~y]$) where $x$ and $y$ are labeled nulls in $\lnset$, $c$ is a constant in $\dom(x)$, and $op \in \{<, >, \leq, \geq, =, \neq, LIKE, ...\}$ is a binary operator, or (2) a condition of the form $\neg \rel(x_1, \ldots, x_{\relarity})$ where $\rel$ is a relation on $\relarity$ attributes.
\end{Definition}
}




\begin{Definition}[Conditional Instance or c-instance]
A 
\emph{v-table} with a relational schema $\rel_i \in \schemaOf{\rel}$ is a table $\aCtable_i$ in which for each tuple $\tup \in \atable_i$ and
each attribute $\att \in \attrs(\rel_i)$, $\tup[\att]$ is either a constant from $\dom(\att)$ or is a labeled null from $\lnset$. 

A {\bf c-instance} $\cinstance$ of $\schemaOf{\rel}$ is a tuple of the form $(\{\aCtable_1, \ldots , \aCtable_\relnum\}, \universalcond)$, where for
each $i \in [1, \relnum]$, $\aCtable_i$ is a v-table with schema $\rel_i$, and 
$\universalcond$ is a conjunction of atomic conditions, which is associated with the c-instance, denoted as the {\bf global condition}.
\end{Definition}



Our definition of c-instances is slightly different from those found in previous literature \cite{ImielinskiL84}, as we only associate the instance with a global condition, while there are no local conditions associated with a single tuple or even a single table in the instance. 
Table-level conditions might still appear in the global condition as a conjunct that contains labeled nulls from a single table, e.g. in Figure \ref{fig:c-instance}, the condition $p_1 > p_2$ is only relevant to the $Serves$ table. 
\resolved{\SR{this is unclear -- local tables can also have conditions involving two labeled nulls, they can easily be included in global conditions with a conjunction $\wedge$ -- we need to say this, E.G. $p_1 > p_2$ belongs to the same table. did you mean conditions specifying whether a tuple or a table is present or absent?all tables are always present --- tuples are present if they do not have a condition that evaluates to false.} \amir{Revised}}

A conditional table (c-table) is a special case of a c-instance when there is only one relation in the instance, hence we only discuss c-instances in the rest of this paper.
Note that we also allow for 
labeled nulls that do not affect the evaluation of $\universalcond$, and are not needed for joins between tables. These are called ``don't care'' labeled nulls and are denoted by $*$ for simplicity instead of having unique names. 

\begin{figure}[t]\scriptsize\setlength{\tabcolsep}{3pt}
\centering
 \vspace{-2mm}
  \subfloat[\small \Drinker\ relation]{
    \begin{minipage}[b]{0.25\columnwidth}\centering
  {\scriptsize
      \begin{tabular}[b]{|c|c|}\hline
        {\tt name} & {\tt addr} \\ \hline
        $d_1$ & $*$  \\\hline
      \end{tabular}
      }
      \end{minipage}
      }
  \subfloat[\small \Bar\ relation]{
    \begin{minipage}[b]{0.2\columnwidth}\centering
    {\scriptsize
      \begin{tabular}[b]{|c|c|}\hline
        {\tt name} & {\tt addr}  \\ \hline
        $x_1$ & $*$\\
        $x_2$ & $*$\\\hline
      \end{tabular}
      }
    \end{minipage}
  }
   \subfloat[\small \Serves\ relation]{
    \begin{minipage}[b]{0.28\columnwidth}\centering
     {\scriptsize
      \begin{tabular}[b]{|c|c|c|}\hline
         {\tt bar} & {\tt beer} & {\tt price} \\ \hline
         $x_1$ &$b_1$ &$p_1$ \\
        $x_2$ &$b_1$  &$p_2$\\\hline
      \end{tabular}
      }
    \end{minipage}
  }
  \subfloat[\small \Beer\ relation]{
    \begin{minipage}[b]{0.25\columnwidth}\centering
    {\scriptsize
      \begin{tabular}[b]{|c|c|}\hline
        {\tt name} & {\tt brewer} \\ \hline
        $b_1$ &  $*$ \\\hline
      \end{tabular}
      }
    \end{minipage}
  }\\
    \hfill
   \subfloat[\small \Likes\ relation]{
    \begin{minipage}[b]{0.30\columnwidth}\centering
     {\scriptsize
      \begin{tabular}[b]{|c|c|}\hline
        {\tt drinker} & {\tt beer}  \\ \hline
       $d_1$ & $b_1$ \\\hline
      \end{tabular}
      }
    \end{minipage}
  }
  \subfloat[\small Global condition]{
    \begin{minipage}[b]{0.7\columnwidth}\centering
    {\scriptsize
      \begin{tabular}[b]{|c|}\hline
      $d_1$ LIKE `Eve\%' $\land \neg$($d_1$ LIKE `Eve\textvisiblespace\%') $\land p_1 > p_2$\\\hline
      \end{tabular}
      }
    \end{minipage}
    }
  \vspace{-2mm}
  \caption{\label{fig:c-instance-1}
  C-instance $\cinstance_1$ that satisfies $\qincorrect-\qcorrect$.}
  \vspace{-5mm}
\end{figure}

\begin{Example}
Consider the c-instance $\cinstance_0$ shown in Figure \ref{fig:c-instance}.
The single tuple in the drinker table says that the name of the drinker is $d_1$, its address is a ``don't care'' token, and the condition says that the name $d_1$ must start with
``Eve''.
The condition also enforces an order on the prices. 
$b_1$ cannot be replaced with $*$ because all three beers in Serves must be the same.
\end{Example}

C-instances define a set of possible worlds, each defined by a \emph{mapping} (see below) to the labeled nulls in the c-instance.
\begin{Definition}[Mapping for C-instances]\label{def:map}
Given a c-instance $\cinstance = (\aCtable_1, \ldots , \aCtable_\relnum, \universalcond)$ over a schema $\schemaOf{\rel}$, a {\bf mapping} $\map: \lnset \rightarrow \dom$ 
maps the labeled nulls $\lnset$ in $\cinstance$ to their respective domains. 
\end{Definition}
Extending $\map$ to the c-instance $\cinstance$, we denote $\map(\cinstance)$ as the ground instance $\map(\cinstance) = \{\map(x): x \text{ is a labeled null in } \aCtable_i$, $\text{ for }i \in [1, \relnum]\}$. Let $\lnset' \subseteq \lnset$ be the labeled nulls appearing in $\globalcond$ of $\cinstance$. Then, 
$\phi_{\map} = \phi(\map(\lnset'))$ yields the evaluation of $\phi$ (True or False) with the mappings given by $\map$.



\begin{Definition}[Possible Worlds and Consistency for C-instances]\label{def:possible-world}
Given a c-instance $\cinstance = (\aCtable_1, \ldots , \aCtable_\relnum, \universalcond)$ of schema $\schemaOf{\rel}$, the {\bf set of possible worlds} for $\cinstance$ is
$$\Rep(\cinstance) = \{\map(\cinstance): \map \text{ is a mapping} \land \universalcond_\map=True \}.$$
A c-instance $\cinstance$ is said to be {\bf consistent}, denoted by $\consistent(\cinstance)$ \cut{or satisfiable }if $\Rep(\cinstance) \neq \emptyset$. 
\end{Definition}
Note that ground instances in $\Rep(\cinstance)$ cannot contain extra tuples that are not in  the c-instance $\cinstance$. \resolved{ \amir{why do we need this? I forgot...}. \sr{ok to say this.}}
They may, however, have fewer tuples than the c-instance mapped to them because we consider set semantics, where a mapping may map two tuples with labeled nulls to the same tuple with constant values. 

\begin{Example}
One possible world of the c-instance in Figure \ref{fig:c-instance} is the ground instance in Figure~\ref{fig:running}. In this ground instance, all labeled nulls have a mapping such that the global condition is satisfied.
\end{Example}


\subsection{Query Characterization by C-Instances}\label{sec:characterization}

We next give 
definitions for our framework of query characterization through c-instances. The intuition is that the c-instances should be \emph{sound}, i.e., they should correctly capture a query, as well as \emph{``complete''} in terms of \emph{different ways} a query can be satisfied by ground instances, which requires more careful considerations using \emph{``coverage''} of ground and c-instances as we discuss below.

\resolved{\red{in terms of both soundness and completeness.} \amir{Didn't understand this}}

\begin{Definition}
\label{def:satisfy}
[Satisfying c-instances]
Given a query $Q$, a c-instance $\cinstance$ is said to {\bf satisfy} $Q$ (denoted $\cinstance \models Q$) if $\cinstance$ is consistent (Definition~\ref{def:possible-world}) and 
for every ground instance $D$ in $\Rep(\cinstance)$, $D \models Q$. 
\end{Definition}

\begin{Example}
Consider the query $\qincorrect-\qcorrect$ in Figure \ref{fig:diff-query}, 
and the c-instance $\cinstance_0$ shown in Figure \ref{fig:c-instance}. 
$\cinstance_0 \models \qincorrect-\qcorrect$ since every mapping of constants from the domain of its labeled nulls that satisfies the global condition will generate a ground instance $D$ such that $D \models \qincorrect-\qcorrect$; $D = K_0$ in Figure~\ref{fig:running} is an example.  
\cut{For example, the ground instance $D$ in Figure~\ref{fig:running} satisfies 
$\qincorrect-\qcorrect$ and for the c-instance $\cinstance$ in Figure \ref{fig:c-instance}, it holds that $D \in \Rep(\cinstance)$ since there is a mapping that maps, among other things, $d_1$ in Figure \ref{fig:c-instance} to `Eve Edwards' in Figure~\ref{fig:running} and $p_1, p_2, p_3$ in Figure \ref{fig:c-instance} to $3.5, 2.75, 2.25$ respectively in Figure~\ref{fig:running}.
}
\end{Example}

\resolved{\sr{You can briefly mention this below the defn using Example 8 -- can be shortened.}\amir{Didn't understand this}}


\cut{
\begin{Definition}[Homomorphism and c-homomorphism]

Given two c-instances 
$\cinstance_1, \cinstance_2$, a {\em homomorphism} from $\cinstance_1$ to $\cinstance_2$ is a function $h:Vars(\cinstance_1) \cup Cons(\cinstance_1) \rightarrow$ 
$Vars(\cinstance_2) \cup Cons(\cinstance_2)$, where $Vars(\cinstance)$ (or $Cons(\cinstance)$) is the set of variables (or constants) of a c-instance $\cinstance$, such that: (1) $h$ is an identity function on constants, and (2) $\forall x \in Vars(\cinstance_1) \cup Cons(\cinstance_1)$, $h(x)$ \red{=  a constant} $\in \dom(x)$ or $h(x)$ is a variable and $\dom(h(x)) = \dom(x)$. Note that $\forall A \in Atoms(\cinstance_1)$ where $A=\rel(x_1, ..., x_{\relarity})$ is a tuple in a c-table with relation $R$, it holds that $h(A) = \rel(h(x_1), ..., h(x_{\relarity})) \in Atoms(\cinstance_2)$, i.e., $h$ can be lifted to tuples (and in particular, preserves relation names) and further lifted to c-instances. 

\SR{do not follow the `i.e.' part, especially for formula, for which you need to define what happens for connectives.}

\SR{ADDED -- (3) should not appear in the above sentence, can appear in the next sentence. Do not use R(x1,.., xk) as relational atom, use only R or R with attribute names, R(x1,.., xk) means a tuple. Did you mean mapping of tuples or relation names? if it is relation name why do you need the mapping? the sentence is not correctly phrased if you meant mapping for tuples.}
\amir{Revised}
$h$ can also be extended over ground instances by only looking at the tuples but not atomic conditions.


A homomorphism $h$ \red{from $\cinstance_1$ to $\cinstance_2$} is said to be a {\em c-homomorphism} if $\consistent(\cinstance_1)$ is implied by $\consistent(\cinstance_2)$, i.e., any mapping of the labeled nulls of $\cinstance_2$, $\map_2$, that satisfies the global conditions in $\cinstance_2$ can be converted into the assignment $\assignment_1 =  \assignment_2 \circ h$ that satisfies the global conditions in $\cinstance_1$. 
We say that $\cinstance_1$ c-homomorphically maps to $\cinstance_2$ if there is a c-homomorphism from $\cinstance_1$ to $\cinstance_2$. Note that for a ground instance $\ains$, $\consistent(\ains)$ is always true, 
\sr{say that it is alwaays true as $\cinstance_1 \in Rep(\cinstance_1)$} \amir{Revised}
so a homomorphism $h$ mapping a c-instance $\cinstance_1$ to a ground instance $\ains$ is a c-homomorphism if the condition in $\cinstance_1$ evaluates to True in $\ains$. 
\sr{do not follow this paragraph. if $I_2$ is a ground intance, $I_1$ is already satisfiable and therefore c-homomorphism holds?} \amir{Revised in and after the def.}
\end{Definition}

In particular, the definition says that any homomorphism from a c-instance $\cinstance$ to a ground instance $\ains \in Rep(\cinstance)$ is a c-homomorphism since $\ains$ is satisfiable. 
}

\begin{figure}[t]\scriptsize\setlength{\tabcolsep}{3pt}
\centering
 \vspace{-2mm}
  \subfloat[\small \Drinker\ relation]{
    \begin{minipage}[b]{0.25\columnwidth}\centering
  {\scriptsize
      \begin{tabular}[b]{|c|c|}\hline
        {\tt name} & {\tt addr} \\ \hline
        $d_1$ & $*$  \\
        $d_2$ & $*$  \\\hline
      \end{tabular}
      }
      \end{minipage}
      }
  \subfloat[\small \Bar\ relation]{
    \begin{minipage}[b]{0.2\columnwidth}\centering
    {\scriptsize
      \begin{tabular}[b]{|c|c|}\hline
        {\tt name} & {\tt addr}  \\ \hline
        $x_1$ & $*$\\
        $x_2$ & $*$\\
        $x_3$ & $*$\\\hline
      \end{tabular}
      }
    \end{minipage}
  }
   \subfloat[\small \Serves\ relation]{
    \begin{minipage}[b]{0.25\columnwidth}\centering
     {\scriptsize
      \begin{tabular}[b]{|c|c|c|}\hline
         {\tt bar} & {\tt beer} & {\tt price} \\ \hline
         $x_1$ &$b_1$ &$p_1$ \\
        $x_2$ &$b_1$  &$p_2$  \\
        $x_3$ & $b_1$ & $p_3$\\\hline
      \end{tabular}
      }
    \end{minipage}
  }
\subfloat[\small \Beer\ relation]{
    \begin{minipage}[b]{0.25\columnwidth}\centering
    {\scriptsize
      \begin{tabular}[b]{|c|c|}\hline
        {\tt name} & {\tt brewer} \\ \hline
        $b_1$ &  $*$ \\\hline
      \end{tabular}
      }
    \end{minipage}
  }\\
   \subfloat[\small \Likes\ relation]{
    \begin{minipage}[b]{0.25\columnwidth}\centering
     {\scriptsize
      \begin{tabular}[b]{|c|c|}\hline
        {\tt drinker} & {\tt beer}  \\ \hline
       $d_1$ & $b_1$ \\\hline
      \end{tabular}
      }
    \end{minipage}
  }
  \subfloat[\small Global condition]{
    \begin{minipage}[b]{0.6\columnwidth}\centering
    {\scriptsize
      \begin{tabular}[b]{|c|}\hline
      $d_1$ \sql{LIKE} `Eve\%' $\land$ $d_1$ \sql{LIKE} `Eve\textvisiblespace\%' $\land \neg \Likes(d_2, b_1)$ 
      $\land$ \\$\neg$ ($d_2$ \sql{LIKE} `Eve\textvisiblespace\%') 
      $\land p_1 > p_2 \land p_2 > p_3$\\\hline
      \end{tabular}
      }
    \end{minipage}
    }
  \caption{\label{fig:c-instance-2}
  C-instance $\cinstance_2$ that satisfies $\qincorrect-\qcorrect$. It is not minimal since one of the \Serves\ tuples can be removed without changing the coverage.}
\end{figure}

\cut{
\begin{Example}
Considering the c-instance $\cinstance_0$ in Figure~\ref{fig:c-instance} and the c-instance $\cinstance_1$ in Figure~\ref{fig:c-instance-1}. There is a homomorphism from $\cinstance_0$ to $\cinstance_1$ that maps all variables to themselves except mapping $(b_3, p_3)$ to either $(b_1, p_1)$ or $(b_2, p_2)$. However, there is no c-homomorphism from $\cinstance_0$ to $\cinstance_1$ because the atomic conditions on $d_1$ in $\cinstance_0$  ($d_1$ LIKE `Eve\textvisiblespace\%') and $\cinstance_1$ ($\neg( d_1$ LIKE `Eve\textvisiblespace\%'$)$) are conflicting, and $p_2 > p_3$ does not appear in the condition in $\cinstance_1$. The identity mapping is a c-homomorphism from $\cinstance_0$ to $\cinstance_2$ in Figure~\ref{fig:c-instance-2}. 

Consider the ground instance $\ains$ in Figure~\ref{fig:running}. There is a c-homomorphism, $h_1$, from $\cinstance_0$ to $\ains$, where some of the mappings are
$\{d_1 \to \text{Eve Edwards}, b_1 \to \text{Restaurant Raffaele}, b_2 \to \text{Tadim}, b_3 \to \text{Restaurant Memory},   $
$b_1 \to \text{American Pale Ale}, p_1 \to 3.5, p_2 \to 2.75, p_3 \to 2.25\}$ (don't-cares are skipped). 
\end{Example}
}

\paratitle{Coverage.} Given a DRC query, the \emph{Coverage} of a (ground or c-) instance captures different parts of a DRC query, 
that are satisfied by the instance, and helps us define the \emph{completeness} of  a set of c-instances. 
An inductive definition of coverage is given below. 



\begin{Definition}\label{def:coverage-ground-instance}[Coverage of ground instances]
Given a DRC query syntax tree $Q$, a ground instance $K$ such that $K \models Q$, and a satisfying assignment $\assignment: \varset_Q^{out} \rightarrow \dom$ of the output variables of $Q$,  the {\bf coverage $\coverage(Q, K, \assignment)$
of $Q$ by $K$ under $\assignment$} identifies a subset of the DRC atoms (leaves) of the query syntax tree recursively top-down by extending $\assignment$ to all free variables in a subtree as follows: 
\begin{enumerate}
    \item If $Q$ consists of a single DRC atom (here all variables in $Q$ are free), then $\coverage(Q, K, \assignment)$ contains the DRC atom of $Q$ if:
    \begin{enumerate}
        \item $Q=\rel(x_1, ... , x_\relarity)$ and $\rel(\assignment(x_1), ... , \assignment(x_\relarity))$ appears in $K$, or
        \item $Q=\neg\rel(x_1, ... , x_\relarity)$ and $\rel(\assignment(x_1), ... , \assignment(x_\relarity))$ is not in $K$, or 
        \item $Q=x~op~y$ and $\assignment(x)~op~\assignment(y)$ evaluates to True;
    \end{enumerate}
    otherwise $\coverage(Q, K, \assignment) \gets \emptyset$.
    \item If $Q = Q_1 \land Q_2$ or $~Q = Q_1 \lor Q_2$ (here the sets of free variables in $Q, Q_1, Q_2$ are the same), then $\coverage(Q, K, \assignment) = \coverage(Q_1, K, \assignment) \cup \coverage(Q_2, K, \assignment)$.
    \item If $Q = \exists x Q'(x)$ or $~Q = \forall x Q'(x)$ (here $x$ is a new free variable in $Q'$), then \\$\coverage(Q, K, \assignment)$ = $\cup_{c \in \dom_K} \coverage(Q', K, \assignment \cup \{x \to c\})$, where $\dom_K$ denotes the constants appearing in $K$.
\end{enumerate}
The {\bf coverage of $K$ for $Q$} is defined as $\coverage(Q, K) \gets \bigcup_{\assignment} \coverage(Q, K, \assignment)$.
\end{Definition}


Intuitively, the coverage $\coverage(Q, K)$ is the set of atoms and conditions of $Q$ that can be covered by a ground instance $K$, eventually leading to a satisfying assignment of the output variables of $Q$. 
Therefore, we use union to combine the coverages in Definition \ref{def:coverage-ground-instance} for all cases, since we are interested in all possible ways to satisfy a query. 
Since $\assignment$ is a satisfying assignment of the output variables, the coverages of $Q'$ in case (3) in Definition \ref{def:coverage-ground-instance} for both $\exists, \forall$, and for both $Q_1, Q_2$ for a $\wedge$ node and for at least one of them for a $\vee$ node in case (2) above is non-empty.
For universal quantifiers, it is worth noting that when the quantified variable takes different constants, different branches of the inner query ($Q'$) may be satisfied, and thus provide different coverages, and to take all of them into account, we again employ union.

\begin{Example}\label{ex:coverage_ground}
The only satisfying assignment to the difference query $\qincorrect-\qcorrect$ depicted in Figure \ref{fig:diff-query} w.r.t. the ground instance shown in Figure \ref{fig:running} is given by the assignment $\assignment(x_1) = \text{``Restaurant Raffaele''}$, $\assignment(b_1) = \text{''American Pale Ale''}$ described in Example \ref{ex:query}. By applying the recursive top-down process implied by Definition \ref{def:coverage-ground-instance}, 
the DRC atoms of the query covered by this assignment are the leaves colored in green in Figure \ref{fig:syntax-tree-minus}. Note that different assignments of a $\forall$ variable can cover different leaves, e.g., for $\forall p_3$ node in the right subtree,  $p_3 = 2.25, 3.5$ covers the   node $\neg \Serves(x_1, b_1, p_3)$ atom whereas $p_3 = 2.75$ covers $\Serves(x_3, b_1, p_4)$ and $p_3 < p_4$ atoms as discussed in  Example \ref{ex:query}. 
\end{Example}

\begin{Definition}\label{def:coverage-c-instance}[Coverage of satisfying c-instances]
Given a DRC query $Q$ and a c-instance $\cinstance$ such that $\cinstance \models Q$,  the {\bf coverage of $\cinstance$ for $Q$} is defined as $\coverage(Q, \cinstance) = \bigcap_{K \in \Rep(\cinstance)} \coverage(Q, K)$.
\end{Definition}

Since $\Rep(\cinstance)$ can contain ground instances with different coverages, the coverage of $\cinstance$ is defined as the common coverage of all possible worlds.
Therefore, the coverage of a c-instance $\cinstance$ is always a (not necessarily strict) subset of any ground instance in $\Rep(\cinstance)$.

\begin{Example}\label{ex:coverage-1}
Reconsider the syntax tree of the query $\qincorrect-\qcorrect$ shown in Figure \ref{fig:syntax-tree-minus}. 
Every ground instance in $\Rep(\cinstance_0)$ ($\cinstance_0$ is depicted in Figure \ref{fig:c-instance}) has to have exactly three \Serves\ tuples due to the global conditions (recall that the mapping from a c-instance to each ground instance in $\Rep$ has to be onto). 
Thus, the coverage of the c-instance $\cinstance_0$ depicted in Figure \ref{fig:c-instance} is exactly the coverage of the ground instance in Figure \ref{fig:running}/Example \ref{ex:coverage_ground} and highlighted in green \revb{with dashed frames} in the tree shown in Figure \ref{fig:syntax-tree-minus}.
The c-instance $\cinstance_2$ in Figure \ref{fig:c-instance-2} covers all DRC atoms in the query $\qincorrect-\qcorrect$. To see this, note that there are two drinker variables $d_1, d_2$ in $\cinstance_2$ such that $\neg \Likes(d_2, b_1)$ as well as $\neg (d_2$ \sql{LIKE} `Eve\textvisiblespace\%'$)$ hold in all 
ground instances in $\Rep(\cinstance_2)$. Therefore, for the above satisfying assignments of output variables, when the $\forall d_2$ in the right subtree iterates over the constants corresponding to $d_2$, the two remaining uncovered leaves are covered as well.   
\end{Example}


  


\paratitle{Minimality of a c-instance.} 
We now define minimality w.r.t. the coverage of a c-instance. 
In the next definition, we denote by $|\cinstance|$ the {\bf size of a c-instance $|\cinstance|$}, defined as the total number of tuples and atomic conditions of $\cinstance$. For instance, $|\cinstance_0| = 12$ in Figure~\ref{fig:c-instance} (9 tuples and 3 atomic conditions).

\begin{Definition}[Minimal satisfying c-instance]\label{def:minimal-cinstance}
Given a DRC query $Q$, a c-instance $\cinstance$ with coverage $\cov$ satisfying $Q$ is {\bf minimal} if for every other satisfying c-instance $\cinstance'$ that has coverage $\cov$, it holds that $|\cinstance| \leq |\cinstance'|$. 
\end{Definition}

\begin{Example}
Following Example~\ref{ex:coverage-1}, the c-instance $\cinstance_0$ in Figure~\ref{fig:c-instance} is a minimal satisfying c-instance assuming natural foreign key constraints from $\Serves, \Likes$ to $\Drinker, \Beer, \Bar$, since any other c-instance with the same coverage has a larger size. In particular, $\cinstance_1$ of smaller size  (=10) in Figure \ref{fig:c-instance-1} does not have the same coverage, since in the $\forall$ nodes of right subtree when $d_2$ (in query) = $d_1$ (in $\cinstance_1$) and $p_3$  (in query) = $p_1$ (in $\cinstance_1$), the two rightmost leaves $\Serves(x_3, b_1, p_4)$ and $p_3 < p_4$ are not covered by any ground instance in $\Rep(\cinstance_1)$. 
\end{Example}

Minimality ensures that we do not include redundant tuples or conditions in our c-instances. While we adopt the above simple notion of minimality, it is ensured by a post-processing step in our algorithms, so any other reasonable form of minimality can also be used in this framework.

\cut{
\paratitle{Minimality of a c-instance.} 
Minimality of a c-instance is defined w.r.t. other c-instances. 
To compare between different c-instances, we first define the concept of epimorphism. 

\begin{Definition}[Epimorphism]\label{def:homomorphism}
Given two c-instances $\cinstance_1, \cinstance_2$ on the same schema with labeled nulls $\lnset_1,  \lnset_2$, 
constants $\constset_1, \constset_2$, and global conditions $\universalcond_1, \universalcond_2$ respectively, an {\bf epimorphism $h$ from $\cinstance_1$ to $\cinstance_2$} is a function 
$h: \lnset_1 \cup \constset_1 \rightarrow \lnset_2 \cup \constset_2$ such that, (1) for each constant $c \in \constset_1$, $h(c) = c$, 
(2) $h$ is onto for $\lnset_2$, i.e., for each $y \in \lnset_2$ there is a $x \in \lnset_1$ such that $h(x) = y$, (3) for each tuple
$t=\rel(x_1, ..., x_{\relarity})$ in $\cinstance_1$, $h(t) = \rel(h(x_1), ..., h(x_{\relarity}))$ is in the same table $\rel$ in $\cinstance_2$,
and 
(4) $\globalcond_2 \Rightarrow h(\globalcond_1)$, where $h(\globalcond_1)$is the Boolean formula $\globalcond_1$ obtained by replacing the labeled nulls $x \in \lnset_1$ with $h(x)$.
If there exists such an $h$ we say that {\bf $\cinstance_1$ is epiomorphic to $\cinstance_2$}.
\end{Definition}


\sr{please check the prposition below.}\amir{Looks good}

\sout{For two functions $h_1, h_2$ mapping labeled nulls to other labeled nulls and constants, let us denote by $h_1 \circ h_2$ the composition of $h_1$ and $h_2$, i.e., for a labeled null $x$, $(h_1 \circ h_2)(x) = h_1(h_2(x))$.} \amir{This seems redundant}

\begin{proposition}
For c-instances $\cinstance_1, \cinstance_2$, if $h$ is an epimorphism from $\cinstance_1$ to  $\cinstance_2$, and ground instance $K \in \Rep(\cinstance_2)$ by the mapping $\map$, then $K \in \Rep(\cinstance_1)$ by the mapping $\map \circ h$. Therefore, $\Rep(\cinstance_1) \supseteq \Rep(\cinstance_2)$ and $\consistent(\cinstance_2) \Rightarrow \consistent(\cinstance_1)$ 
\end{proposition}

\begin{proof}
Let $\lnset_1, \lnset_2$ be the labeled nulls, and $\globalcond_1, \globalcond_2$ be the global conditions of $\cinstance_1$ and $\cinstance_2$ respectively. 
Since $K \in \Rep(\cinstance_2)$ by the  mapping $\map: \lnset_2 \rightarrow \dom$, then $\globalcond_2(\map(\lnset_2'))$ = true, where $\lnset_2' \subseteq \lnset_2$ is the set of labeled nulls used in $\globalcond_2$. Since $\globalcond_2 \Rightarrow h(\globalcond_1)$, any mapping $\map$ that satisfies $\globalcond_2$ also satisfies $h(\globalcond_1)$. Therefore, it follows that the formula $\globalcond_1$ where labeled nulls $x$ in $\lnset_1$ are replaced by $h(x)$, and in turn by constants in $\map(h(x))$ is true, which gives the ground instance $K$. Therefore, $K \in \Rep(\cinstance_1)$ by the mapping $\map \circ h$.

\end{proof}



Note that a mapping $\map$ is a special form of epimorphism that maps a c-instance to a satisfying ground instance by mapping all labeled nulls to constants.

\begin{Example}
Considering the c-instance $\cinstance_0$ in Figure~\ref{fig:c-instance} and the c-instance $\cinstance_1$ in Figure~\ref{fig:c-instance-1}. There is an epimorphism from $\cinstance_0$ to $\cinstance_1$ that maps all variables to themselves except $b_3, p_3$, and maps $(b_3, p_3)$ to either $(b_1, p_1)$ or $(b_2, p_2)$. However, there is no epimorphism from $\cinstance_0$ to $\cinstance_1$ because the atomic conditions on $d_1$ in $\cinstance_0$  ($d_1$ LIKE `Eve\textvisiblespace\%') and $\cinstance_1$ ($\neg( d_1$ LIKE `Eve\textvisiblespace\%'$)$) are conflicting, and $p_2 > p_3$ does not appear in the condition in $\cinstance_1$. The identity mapping is an epimorphism from $\cinstance_0$ to $\cinstance_2$ in Figure~\ref{fig:c-instance-2}. 

\end{Example}

We now define minimality using the concept of epimorphism. 

\begin{Definition}[Minimal satisfying c-instance]
\cut{
}

Given a query $Q$, a c-instance $\cinstance$ with coverage $\cov$ satisfying $Q$ is minimal if for every other c-instance $\cinstance'$ with coverage $\cov'$ such that 
$\cinstance$ is epimorphic to $\cinstance'$, it holds that $\cov' \subsetneq \cov$. 
\end{Definition}

Intuitively, an epimorphism `shrinks' the c-instance and thus, if we are able to shrink the c-instance without losing coverage, it is not minimal. 
If $\cov' = \cov$, we would rather take $\cinstance'$ as it is `smaller' with the same coverage. 

\sr{the example below has to be fixed, not correct} \amir{revised}
\begin{Example}
Following Example~\ref{ex:coverage-1}, the c-instance $\cinstance_0$ in Figure~\ref{fig:c-instance} is a minimal satisfying c-instance, since any epimorphic c-instance does not satisfy $\qincorrect-\qcorrect$ (e.g., mapping any set of tuples in \texttt{Bar} or \texttt{Serves} to a single tuple). In particular, $\cinstance_0$ is not epimorphic to the c-instance $\cinstance_1$ in Figure \ref{fig:c-instance-1} since the global condition in $\cinstance_1$ does not imply the one in $\cinstance_0$, and $\cinstance_0$ is not epimorphic to the c-instance $\cinstance_2$ in Figure \ref{fig:c-instance-2} since there are two Drinker tuples in $\cinstance_2$, and thus any function from $\cinstance_0$ to $\cinstance_2$ cannot be onto $\cinstance_2$.


\end{Example}
} 




\mypar{Universal solution} Our framework for finding solutions for a query $Q$ is given in terms of a set of minimal c-instances $\cinstance \models Q$, which immediately ensures soundness of our solutions, since any output c-instance is guaranteed to satisfy the query. Conversely, the notion of completeness is more challenging since there can be satisfying ground instances of unbounded size with redundant tuples that do not affect the query answer. The notion of coverage helps us define a notion of completeness using a \emph{universal solution}, which ensures that for all satisfying ground instances of a certain coverage, a c-instance with the same coverage is included.

\begin{Definition}[Minimal c-Solution and Universal Solution]\label{def:universal}
Let $Q$ be a DRC query over a schema $\schemaOf{\rel}$ 
and domain $\dom$. 
A {\bf minimal c-solution} of $Q$ is a set of minimal c-instances of $Q$, $\mathcal{S}_{\cinstance} = \{\cinstance_1, \ldots, \cinstance_k\}$, such that  for all $\cinstance_i$, $\cinstance_i \models Q$, and for any two $\cinstance_i, \cinstance_j$ where $i \neq j$, $\coverage(Q, \cinstance_i) \neq \coverage(Q, \cinstance_j)$. 
\par
A {\bf universal solution} of $Q$ is a minimal c-solution $\mathcal{S}_{\cinstance} = \{\cinstance_1, \ldots, \cinstance_k\}$ such that 
(1) if there exists a ground instance $K$, where  $K \models Q$, with coverage $\cov$, there is a c-instance $\cinstance_i \in \mathcal{S}_{\cinstance}$ 
with coverage 
$\cov = \cov_i$, 
(2) if we remove any c-instance from $\mathcal{S}_{\cinstance}$, condition (1) does not hold.
\end{Definition}
\cut{
A {\bf universal solution} of $Q$ is a set of minimal satisfying c-instances of $Q$, $\mathcal{S}_{\cinstance} = \{\cinstance_1, \ldots, \cinstance_k\}$, such that (1) if there exists a ground instance $K$, where  $K \models Q$, with coverage $\cov$, there is a c-instance $\cinstance_i \in \mathcal{S}_{\cinstance}$ 
with coverage 
$\cov = \cov_i$, 
(2) if we remove any c-instance from $\mathcal{S}_{\cinstance}$, condition (1) does not hold.
}

Note that for the universal solution, we do not require that $\ains \in \Rep(\cinstance_i)$ since $\ains$ may contain more tuples than $\cinstance_i$ and thus may not be part of the set $\Rep(\cinstance_i)$. 

\begin{Example}\label{ex:universal-sol}
Reconsider the $\qincorrect-\qcorrect$ shown in Figure \ref{fig:diff-query}. The set $\{\cinstance_0, \cinstance_1\}$ is a minimal c-solution for  $\qincorrect-\qcorrect$ since they have different coverages and both satisfy $\qincorrect-\qcorrect$.  Two of the three c-instances in the universal solution are $\cinstance_0$, $\cinstance_1$ shown in Figures \ref{fig:c-instance}, \ref{fig:c-instance-1}, \ref{fig:c-instance-2}, respectively (the rest are shown in Section \ref{sec:case-study} in the case study). 
\end{Example}





\subsection{Computability of the Universal Solution}

\begin{proposition}\label{prop:complexity}
The computability and complexity of finding a universal solution is as follows:
\begin{itemize}

    \item[(1)] Finding a universal solution is poly-time in the size of the query for $CQ^{\neg}$ (and therefore also for $CQ$s), where $CQ^{\neg}$ is the class of conjunctive queries with negation that includes 
    operators  $\exists$, $\land$, $\neg$ for individual atoms and conditions.
    \item[(2)] 
    Checking whether a universal solution exists (or even any minimal c-solution exist) is undecidable for general DRC queries that may include the operators $\forall$, $\exists$, $\lor$, $\land$, and $\neg$.
\end{itemize}
\end{proposition}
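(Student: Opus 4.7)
\medskip\noindent\textbf{Proof plan.} I would handle the two parts by very different routes.

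For Part~(1), I exploit the fact that a $CQ^{\neg}$ query has the shape $\{\vec{x}\mid \exists \vec{y}\, \bigwedge_i L_i\}$ where each $L_i$ is a positive DRC atom, a negated atom, or a comparison. Its syntax tree contains no $\lor$ or $\forall$, so any witnessing assignment for a satisfying ground instance forces every leaf to evaluate to true; hence the only achievable coverage is the full leaf set, and any universal solution has at most one element. To build it I would ``freeze'' the query: introduce a fresh labeled null per query variable, insert one v-table tuple per positive atom using those nulls, and take the global condition $\phi$ to be the conjunction of every negated atom and every comparison appearing in $Q$. Checking whether $\phi$ is satisfiable (equivalently, whether $\cinstance$ is consistent) reduces to satisfiability of a conjunction of literals in the allowed operators, which is polynomial. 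A postprocessing step that unifies tuples whose labeled nulls can be unified and discards implied conjuncts then yields a minimum-size c-instance with full coverage, all in polynomial time in $|Q|$.

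For Part~(2), I would reduce from finite satisfiability of first-order logic, which is undecidable by Trakhtenbrot's theorem. Given an FOL sentence $\phi$ over a relational vocabulary containing at least one binary symbol, form the Boolean DRC query $Q_\phi = \{()\mid \phi\}$, applying standard safety-preserving rewriting (e.g.\ introducing a domain predicate and relativizing quantifiers) so that $Q_\phi$ is a legal safe DRC query whose finite models are in bijection with the finite models of $\phi$. Reading ``a solution exists'' as ``a nonempty solution exists,'' I then establish the chain: (i)~$\phi$ has a finite model; (ii)~some c-instance satisfies $Q_\phi$; (iii)~a minimal c-solution of $Q_\phi$ exists; (iv)~a universal solution of $Q_\phi$ exists. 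Equivalence (i)$\Leftrightarrow$(ii) holds since a finite model is itself a ground c-instance, and conversely every $\cinstance \models Q_\phi$ has $\Rep(\cinstance) \neq \emptyset$ with every element a finite model of $\phi$. Equivalence (ii)$\Leftrightarrow$(iii) holds because any satisfying $\cinstance$ admits a minimum-size representative at its coverage (sizes are nonnegative integers), yielding a singleton minimal c-solution. Equivalence (ii)$\Leftrightarrow$(iv) holds because the syntax tree of $Q_\phi$ has finitely many leaves, hence finitely many achievable coverages, and picking a minimum-size representative per achievable coverage produces a universal solution. Undecidability of (i) then transfers to (iii) and (iv).

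The main obstacle sits in Part~(2) and concerns the universal-solution direction. I must preclude the pathological reading of Definition~\ref{def:universal} in which the empty set is vacuously a minimal c-solution or universal solution; the intended fix is to insist on nonempty solutions, which both matches the intent of the framework and collapses (iii) and (iv) to (ii). I also have to verify that every achievable coverage admits a minimum-size c-instance (not just some c-instance), which I handle by observing that any ground instance realizing that coverage is already a c-instance candidate, and the infimum over a nonempty set of nonnegative integers is attained. A secondary technical nuisance is enforcing safety/domain-independence of $Q_\phi$, which is handled by standard relativization tricks from finite model theory and does not affect finite satisfiability.
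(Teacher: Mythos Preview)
Your proposal is correct and follows essentially the same route as the paper: for Part~(1) you freeze the query into a single c-instance (positive atoms as tuples, negated atoms and comparisons as the global condition), and for Part~(2) you reduce from finite satisfiability of FOL via Trakhtenbrot's theorem. Your treatment is in fact more careful than the paper's brief sketch---you explicitly address consistency checking and minimality in Part~(1), and in Part~(2) you flag the safety-rewriting step and the empty-solution pathology in Definition~\ref{def:universal}, both of which the paper leaves implicit.
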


\begin{proof} 
(1) The universal solution of $q \in CQ^{\neg}$ is a single c-instance comprising all relational atoms $R(x_1, \cdots, x_k)$ of the query,  and a global condition that is the conjunction of all comparisons ($x~ op~ y$, $x~ op~ c$ with or without negation) and negated relational atoms $\neg R(x_1, \cdots, x_k)$ in the query. This implies a poly-time complexity in the size of the query. 
\par
\cut{
(2) 
The universal solution of $Q = \{q_1, \ldots, q_u\} \in UCQ_{\neg}$ is composed of a set of c-instances where each of them satisfies and covers a conjunction of a subset of CQs from $\{q_1, \ldots, q_u\}$ that can be satisfied together (which is a CQ itself). This is equivalent to enumerating all satisfying assignments to the UCQ (every CQ is a conjunct in the formula). We can also reduce the problem of counting the number of satisfying assignments to a DNF formula to the problem of funding a universal solution for a UCQ. Thus, the problem is \#P-complete. 
\par
(3)
}
(2) Finding whether a universal solution exists is an undecidable problem for general DRC queries due to a reduction from the 
\emph{finite satisfiability problem in first order logic} (FOL) that is known to be undecidable by the Trakhtenbrot’s Theorem \cite{T-undecidable50}. 
An FOL sentence $\phi$ is finitely satisfiable if there exists
a finite 
ground instance
$K$ such that $\phi$ is true over $K$, which is true if and only if there is a satisfying c-instance $\cinstance$. Hence the universal solution for $Q$ is non-empty if and only if the global condition $\phi_Q$ is finitely satisfiable, which is undecidable when $Q$ is a general DRC query. 
\end{proof}
\cut{
If we take two queries, $Q_1$ and $Q_2$, the universal solution for $Q_1 - Q_2$ is empty if and only if $Q_1 \subsetneq Q_2$. 
To show this, consider the case where the universal solution for $Q_1 - Q_2$ is empty. 
Thus, there is no satisfying c-instance for this query, and in particular, there is no ground instance that satisfies it. 
Suppose now that $Q_1 \subsetneq Q_2$, so we know that there is no ground instance satisfying $Q_1 - Q_2$, and therefore, there is no c-instance that satisfies $Q_1 - Q_2$.
}

\revc{Note that the inclusion of $\forall$ operators and arbitrary positioning of $\neg$
    make general DRCs harder than $CQ^{\neg}$. In $CQ^{\neg}$, negations are only allowed in front of relational atoms and conditions subject to standard `safety' constraints \cite{AbiteboulHV95}. The query that returns all beers that are not liked by some drinker: $ \{(b) ~|~ \exists x, d, a~ (\Beer(b, x) \wedge \Drinker(d, a)\wedge \neg \Likes(d, b)) \}$ in the DRC form, and in the equivalent (safe) Datalog with negation form: $Q(b) ~{:-}~ \Beer(b, x)$, $\neg \Likes(d, b),$ $\Drinker(d, a)$, is an example of 
$CQ^{\neg}$.}

Since the problem of finding a universal solution for general DRC queries is undecidable, in Section~\ref{sec:disjunct} we give an algorithm that 
builds an exhaustive minimal c-solution up to a certain limit on the size of the c-instances to ensure halting by checking all possible assignments of variables. Further, we give a more efficient algorithm in Section~\ref{sec:conjunct} that relaxes the requirement of generating all possible c-instances by providing a subset of satisfying c-instances. 


\section{Algorithm for Minimal C-Solution}\label{sec:disjunct}
In this section we show how to compute an exhaustive set of satisfying c-instances up to a size limit for a DRC query $Q$ by adapting ideas from the \emph{chase} procedure \cite{FaginKMP03,FaginKP03}.

\subsection{Basic Notions and Overview}
Given a query syntax tree $Q$ (Definition \ref{def:syntax-tree}), our algorithm constructs an exhaustive set of c-instances in the minimal c-solution by recursively extending each c-instance in multiple ways. 
To explore the different options of extending each c-instance, our algorithm takes a similar approach to that of the chase procedure, 
that was originally proposed for database dependencies \cite{AhoBU79,MaierMS79}. 
It constructs the different options for c-instances using a breadth first search (BFS) procedure, thereby implicitly generating a chase tree \cite{BaranyCKOV17}. 
However, unlike the classic chase algorithm that directly adds or modifies tuples, our procedure converts the syntax tree into a conjunction of atoms and then maps the atoms in the conjunction to tuples and conditions which are added to the c-instance. 
Each quantifier and connector in the tree triggers a tailored recursive call. 
While creating the c-instances, the algorithm keeps track of the mappings between the query variables  and the labeled nulls in the c-instances. 
We next formally define this homomorphism between query and c-instance; since we build the homomorphism in steps, we define it as a partial function.

\begin{Definition}[Homomorphism between query and c-instance]
Given a query $Q$ and a c-instance $\cinstance$ on the same schema and domain $\dom$, with query variables  $\varset_Q$ and labeled nulls $\lnset_{\cinstance}$, and 
constants $\constset_Q, \constset_\cinstance \subseteq \dom$ respectively, a {\bf homomorphism $\mrel$ from $Q$ to $\cinstance$} is a partial function
$h: \varset_Q \cup \constset_Q \rightarrow \lnset_\cinstance \cup \constset_\cinstance$ such that, (1) for each constant $c \in \constset_Q$, $h(c) = c$, and (2) for an atom
$a = \rel(x_1, ..., x_{\relarity})$ in $Q$, if 
all of $h(x_1), \cdots, h(x_k)$ are defined, $h(a) = \rel(h(x_1), ..., h(x_{\relarity}))$ is in relation $\rel$ in $\cinstance$. 
\end{Definition}

As opposed to assignments from queries to ground instances, homomorphisms to c-instances are not restricted to output variables but can also map quantified variables to labeled nulls. This allows the algorithms to consider multiple different homomorphisms of variables to different labeled nulls. For universally quantified variables $\forall x$, the algorithm keeps track of multiple homomorphisms even within the same c-instance. 

\cut{
\begin{Example}
A homomorphism $\mrel$ from the difference query shown in Figure \ref{fig:diff-query} to the c-instance in Figure \ref{fig:c-instance} maps the variables to labeled nulls of the same name as follows $h(x_1) = x_1$, $h(b_1) = b_1$, $h(d_1) = d_1$, $h(p_1) = p_1$ etc.
\end{Example}
}

We also abuse terminology as follows. Given a query $Q$ and a c-instance $\cinstance$ such that there is a homomorphism $\mrel$ from $Q$ to $\cinstance$, we refer to the {\bf domain of a query variable in $Q$} as the set of labeled nulls with the same domain in $\cinstance$ according to $\mrel$, i.e., the labeled nulls in the same attribute or in corresponding attributes in different tables that share the same domain (e.g., by foreign key dependencies). 

\cut{
\begin{figure}
    \centering
    \includegraphics[width=.8\linewidth]{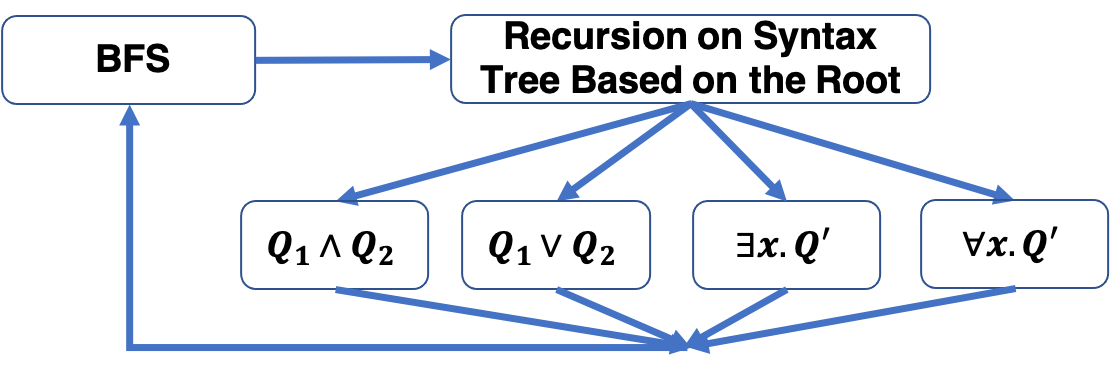}
    \caption{Illustration of our algorithm for generating an exhaustive set of satisfying c-instances. We start with an empty c-instance and generate more c-instances using a BFS approach. The BFS calls a procedure that handles each connector and quantifier separately and recurses over the syntax tree}
    \label{fig:chase_illustration}
\end{figure}
}

The procedure
starts by mapping the free variables of the query to fresh labeled nulls in the c-instance. 
Then, it performs a BFS where for each c-instance in the queue, it expands the homomorphism and the c-instance using a recursive procedure. 
For a syntax tree with no quantifiers, the recursive procedure adds its atoms as tuples to the c-instance, ensuring that the variables in the atoms are converted to their labeled null counterparts according to the homomorphism. 
Otherwise, it handles the syntax tree based on its root: each quantifier ($\forall$, $\exists$) or connective ($\land$, $\lor$) is handled separately. 


To ensure the minimality of each c-instance in the obtained set and the minimality of the set itself (Definitions~\ref{def:minimal-cinstance} and  \ref{def:universal}), we use a post-processing procedure that checks the coverage of each c-instance, and 
for any coverage, it keeps a single c-instance with minimum size (breaking ties arbitrarily). 

\subsection{Exhaustive Chase for C-Instances}\label{sec:complete-algo}
Algorithm~\ref{alg:rd-tree-chase-main} and Algorithm~\ref{alg:rd-tree-chase} form the main body of our `chase' procedure. The procedure starts by calling Algorithm~\ref{alg:rd-tree-chase-main} ($\proc{Tree-Chase-BFS}$) on the schema $\schemaOf{\rel}$, the entire query $Q$, an empty instance $\cinstance_0$, and an empty mapping $h_0$. In addition, the size bound \emph{limit} sets the maximum number of tuples and atomic conditions in the global condition allowed in the c-instance, and is meant to ensure halting of the algorithm since finding a satisfying c-instance for a general DRC query is undecidable (Proposition~\ref{prop:complexity}).  

\begin{algorithm}[t]\caption{Tree-Chase-BFS}\label{alg:rd-tree-chase-main}
{\footnotesize
\begin{codebox}
\algorithmicrequire $\schemaOf{\rel}$: the database schema; 
$Q$: a DRC query;\\ $\cinstance_0$: a c-instance of schema of $\schemaOf{\rel}$;
$h_0$: a mapping; \\
$limit$: the maximum number of tuples and conditions in the c-instance;\\
\algorithmicensure A list of satisfying c-instances for $Q$. \\
\Procname{$\proc{Tree-Chase-BFS}(\schemaOf{\rel}, Q, h_0, \cinstance_0, limit)$}
\li res $\gets []$, queue $\gets $ an empty queue
\li \For $x \in FreeVar(Q)$\label{l:iter-free}
    \Do
          \li Create a fresh variable $x'$ in the domain of $x$
          \li $\cinstance_0.domain(x) \gets \cinstance_0.domain(x) \cup \{x'\}$
          \li $h_0 \gets h_0 \cup \{x \to  x'\}$\label{l:add-fresh}
    \End
\li queue.push($\cinstance_0$)
\li visited $\gets \emptyset$
\li \While $\neg$ queue.isEmpty()
    \Do
        \li $I \gets $ queue.pop()
        \li \If I $\in$ visited or $|I| > limit$
        \li \Then continue
        \End
        \li visited $\gets $ visited $\cup \{I\}$
        \li \If $\proc{Tree-SAT}(Q, I, \emptyset)$ and $\consistent(I)$ 
            \Then  
                \li $res.append(I)$
                \li continue
            \End
        \li $Ilist \gets \proc{Tree-Chase}(\schemaOf{\rel}, Q, I, h_0, limit)$
        \li \For $J \in Ilist$
        \Do
           \li \If 
           $\consistent(J)$ and $|J| \leq limit$ and $J \notin visited$
            \Then
                \li queue.push($J$)
            \End
        \End
    \End
\li \Return $res$
\end{codebox}
}
\end{algorithm}

\mypar{Breadth-first search}
First, to initialize the instance $\cinstance_0$ and the mapping $h_0$ from free variables in the query to labeled nulls and constants in $\cinstance_0$, for each free variable $x$ in $Q$, Algorithm~\ref{alg:rd-tree-chase-main} will create a new labeled null and add it to the domain of $x$ in $\cinstance_0$ and update $h_0$ (Line 2-5). Then, the algorithm runs in a Breadth-first search manner: $\cinstance_0$ is initially added to the empty queue; every time the algorithm takes the c-instance from the head of the queue, checks whether the instance has already been generated and its size does not exceeds $limit$ (Line 10). 
The procedure for checking $I \in visited$ takes into account renaming of variables; it first compares certain properties of the c-instances (e.g., number of tuples, size of conditions etc.) and filters out candidates that cannot be equivalent to $I$, and then it checks all possible mappings to previously generated c-instances. 
It also checks (Line 13) (1) whether $I \models Q$ by the \proc{Tree-SAT} procedure, and 
(2) whether it is consistent, i.e., $\Rep(I) \neq \emptyset$, (we use an SMT solver in our implementation). 
It then runs the recursive procedure on the current c-instance (Line 16) and adds each one of the resulting c-instances to the queue (Lines 17--19).

\newcounter{lastcodelinenumber}

\newif\ifcodeboxcontinued

\xapptocmd{\endcodebox}{%
  \setcounter{lastcodelinenumber}{\value{codelinenumber}}%
  \global\codeboxcontinuedfalse
}{}{}
\xapptocmd{\codebox}{%
  \ifcodeboxcontinued
  \setcounter{codelinenumber}{\value{lastcodelinenumber}}%
  \fi
}{}{}
\newcommand\continuecodebox{\global\codeboxcontinuedtrue}

\begin{algorithm}[t]\caption{Tree-Chase }\label{alg:rd-tree-chase}
  \centering
{\footnotesize
\begin{codebox}
\algorithmicrequire $\schemaOf{\rel}$: the database schema; 
$Q$: the query as its syntax tree; \\
$I$: current c-instance; 
$\mrel$: current homomorphism from $Q$ to $I$;\\
$limit$.\\
\algorithmicensure a list of c-instances\\
  \Procname{$\proc{Tree-Chase}(\schemaOf{\rel}, Q, I, h, limit)$}
  \li res $\gets []$
  \li \If there are no quantifiers  in $Q$
  \Then
    \li $L \gets \proc{tree-to-conj}(Q)$
    \li \For $\psi \in L$
    \Do
        \li $J \gets \proc{Add-to-Ins}(\schemaOf{\rel}, I, h(\psi))$
        \li \If $\consistent(J)$ 
            \Then \li res.append($J$)
        \End
    \End
  \li \ElseIf Q.root.operator $\in \{\land\}$
  \Then
        \li res $\gets \revc{\proc{Handle-Conjunction}}(\schemaOf{\rel}, Q, I, h, limit)$
\li \ElseIf Q.root.operator $\in \{\lor\}$
  \Then
        \li res $\gets \revc{\proc{Handle-Disjunction}}(\schemaOf{\rel}, Q, I, h, limit)$
  \li \ElseIf Q.root.operator $\in \{\exists\}$
  \Then
        \li res $\gets \proc{Handle-Existential}(\schemaOf{\rel}, Q, I, h, limit)$
  \li \ElseIf Q.root.operator $\in \{\forall\}$
  \Then 
        \li res $\gets \proc{Handle-Universal}(\schemaOf{\rel}, Q, I, h, limit)$
    \End\label{alg:rd-tree-chase:line:forall-add-fresh-end}
\End
\li \Return res

\end{codebox}
}
\end{algorithm}

\mypar{Recursive generation of c-instances}
The recursive procedure Algorithm~\ref{alg:rd-tree-chase} (\proc{Tree-Chase}) handles the query according to its root operator. 
It gets as input the schema of the relational database $\schemaOf{\rel}$, the syntax tree of a DRC query $Q$, the current c-instance $I$,
and the current homomorphism from $Q$ to $I$.
For the case where the query has no quantifiers (Line 2-7), the algorithm converts the syntax tree into a list of conjunction of atoms/atomic conditions, and then an instance is created for each conjunction under the homomorphism $\mrel$ by \proc{Add-to-Ins}. 
The algorithm proceeds according to 
the root of the syntax tree ($\wedge, \vee, \exists, \forall$) by calling procedures that get the same input as Algorithm \ref{alg:rd-tree-chase}, call Algorithm~\ref{alg:rd-tree-chase-main} recursively, and output a list of c-instances that are sent back to Algorithm~\ref{alg:rd-tree-chase-main}. 

\begin{algorithm}[t]\caption{\revc{Handle-Conjunction}}\label{alg:and-op}
  \centering
{\footnotesize
\begin{codebox}
  \Procname{$\proc{Handle-Conjunction}(\schemaOf{\rel}, Q, I, h, limit)$}
    \li res $\gets []$
    \li lres $\gets \proc{Tree-Chase-BFS}(\schemaOf{\rel}, Q.root.lchild, I, h, limit)$
    \li \For $J \in lres$
    \Do
        \li \If $\consistent(J) = false$ 
        \Then
            \li Continue
        \End
        \li rres $\gets \proc{Tree-Chase-BFS}(\schemaOf{\rel}, Q.root.rchild, J, h, limit)$
        \li \For $K \in rres$ 
        \Do
            \li \If $\consistent(K)$ 
            \Then
              \li res.append($K$)
            \End
        \End
    \End
\li \Return $res$
\end{codebox}
}
\end{algorithm}

\mypar{Handling conjunction ($\wedge$)}
The \proc{Handle-Conjunction} procedure recursively calls \revc{Algorithm \ref{alg:rd-tree-chase-main}} on both children of the root, 
and every pair of solutions to each child is merged into one instance by taking a union of every solution to the left subtree with every solution to the right subtree, 
and adding the consistent instances to the list of c-instances. 
\cut{Note that although the mapping 
$\mrel$ guarantees that solutions to each child map the free variables in $Q$ to the same labeled nulls in the instance, there can be newly created labeled nulls in the recursive calls and hence we have to validate the consistency of the resulting instance. 
}

  

\mypar{Handling disjunction ($\vee$)}
\cut{For the case where the root node is $\lor$, Algorithm \ref{alg:rd-tree-chase} calls the algorithm that handled the disjunctive connector\footnote{\label{note1}The pseudo code can be found in the full version \cite{?} \amir{cite}} which uses a procedure to 
}
The \proc{Handle-Disjunction} procedure (Algorithm \ref{alg:or-op})
reduces the disjunctive tree into three conjunctive trees by replacing $Q =Q_1 \lor Q_2$ with 
$Q_1 \land Q_2$, $Q_1 \land \neg Q_2$, and $\neg Q_1 \land Q_2$, 
since one of the three formulae is True iff $Q_1 \lor Q_2$ is True. 
This conversion introduces negation to some subtrees, such a negated subtree 
is translated into a 
syntax tree with negation only on the leaves.
Then, Algorithm \ref{alg:rd-tree-chase-main} is called with each of the modified trees, and each set of c-instances obtained from 
the three cases is added to the result set.

\mypar{Handling existential ($\exists$) and universal ($\forall$) quantifiers}
If the root node 
has $\exists x$, Algorithm~\ref{alg:rd-tree-chase} calls  
\proc{Handle-Existential} (Algorithm \ref{alg:exists-op}) that iterates over all labeled nulls or constants in the domain of 
$x$, also adds a fresh labeled null, updates the homomorphism (as the quantified variable becomes free in the subtree), and recursively calls Algorithm~\ref{alg:rd-tree-chase-main}. 
\proc{Handle-Universal} (Algorithm \ref{alg:forall-op})
handles the case when the root has $\forall x$. 
The difference with $\exists$ is that, like the $\land$ case, the solutions to all 
labeled nulls and constants that 
$x$ is mapped to are merged into one instance.
The algorithm first checks whether there is no root and returns the inputted c-instance in that case (Lines 2--3) adds each mapping from $x$ to a labeled null or constant to the homomorphism, runs the recursive procedure to find all c-instances with this mapping and merges it with other c-instances generated with other homomorphism that  map $x$ to other labeled nulls or constants (Lines 5--14). 
It further generates c-instances by mapping $x$ to a fresh labeled null (Lines 19--24).

\mypar{Ensuring minimality in post-processing}
After Algorithm \ref{alg:rd-tree-chase-main} returns a set of c-instances, we remove the c-instances that are not minimal in the following manner. 
For each c-instance in the set, we compute a hash string for its coverage (we keep track of the coverage of each c-instance as it is created). 
Then, for each c-instance in the set, we get all other c-instances in the set with the same string representing its coverage and remove all but the minimal one according to their size. 
\revc{Note that the hash function is applied to the {\em coverage of the c-instance rather than the c-instances themselves}, i.e., the function hashes the covered atoms of the query. Thus, it allows us to efficiently detect c-instances with the same coverage and remove those that are not minimal in terms of their number of tuples and atomic conditions (ref.  
Section \ref{sec:characterization}).}

\paratitle{Soundness, termination, and complexity.}
Given a syntax tree $Q$ of a DRC query, Algorithm~\ref{alg:rd-tree-chase-main} when given $(\schemaOf{R}, Q, \emptyset, \emptyset)$ will output a list of c-instances that are consistent, minimal, and satisfy the query denoted by $Q$ (validated in Line 13), i.e., our procedure is sound and generates a valid minimal c-solution. 
\par
Although the problem of verifying if a satisfying c-instance exists is undecidable (Proposition~\ref{prop:complexity}),  Algorithm~\ref{alg:rd-tree-chase-main} is  guaranteed to terminate given the $limit$ parameter. There are finitely many distinct c-instances (that are not isomorphic in terms of renaming of variables)  up to size $limit$ given a query. If the size of a c-instance increases over $limit$, the algorithm will ignore this c-instance and not push it into $queue$ (Lines 10-11). 
The algorithm will also not get into an infinite loop because of the $visited$ set. Every generated c-instance is placed into this set and c-instance already found is the set are ignored (including renaming of variables) and are not pushed into $queue$ (Lines 18--19). Since the size of the schema is constant, given a limit on size of the c-instances, the domains of all labeled nulls in the c-instances are also finite since
the domain is derived from existing labeled nulls in the c-instance. 
\par
The running time of the algorithm is exponential in the number of operators and size of the c-instances (bounded by $limit\ \times$ no. of relations $\times$ max no. of attributes) since the algorithm performs an exhaustive search on c-instances subject to the size limit, resulting in a high complexity. This motivates us to design a more efficient algorithm by generating a possibly smaller minimal c-solution that we describe in the next subsection.

\cut{

\subsection{Analysis of Algorithm \ref{alg:rd-tree-chase-main}}\label{sec:analysis}

\mypar{Soundness and completeness}
Given a syntax tree $Q$ of a DRC query, Algorithm~\ref{alg:rd-tree-chase-main} when given $(\schemaOf{R}, Q, \emptyset, \emptyset)$ will output a list of c-instances that are consistent and satisfy the query denoted by $Q$ (validated in Line 13), i.e., our procedure is sound. 


\cut{
\begin{Theorem}[Soundness]\label{thm:soundness}
Let $Q$ be a syntax tree of a DRC query over a schema $\schemaOf{R}$, and let $\cinstance$ be a c-instance in the set outputted by Algorithm \ref{alg:rd-tree-chase-main} when given $(\schemaOf{R}, Q, \emptyset, \emptyset)$, then $\cinstance$ satisfies $Q$.
\end{Theorem}

\begin{proof}[proof sketch]
To prove the proposition, it is enough to show that in Line 14 of Algorithm \ref{alg:rd-tree-chase-main} the procedure for checking that the c-instance $I$ satisfies the query $Q$\footnote{The pseudo code of the procedure, along with a proof of correctness are included in the full version \cite{?} \amir{cite}}. 
\end{proof}
}


\begin{Proposition}
[Completeness]\label{theorem:completeness}
Given a syntax query tree $Q$ and a ground instance $\ains$ such that $\ains \models Q$ and 
its coverage $\coverage(Q, k) = \cov$, and there exists a c-instance $\cinstance^\star$ of size $\leq limit$ with coverage $\cov$, 
then the set of c-instances $\mathcal{S}_{\cinstance} = \{\cinstance_1, \ldots, \cinstance_k\}$ returned by Algorithm \ref{alg:rd-tree-chase-main} with the parameter $limit$ contains a c-instance $\cinstance_i \in \mathcal{S}_{\cinstance}$ with coverage $\cov$.

\cut{
\red{Given a syntax query tree $Q$ and the set of c-instances $\mathcal{S}_{\cinstance} = \{\cinstance_1, \ldots, \cinstance_k\}$ returned by Algorithm \ref{alg:rd-tree-chase-main} with  parameter $limit$, if there is a ground instance $K$ such that $K \models Q$ and 
its coverage $\coverage(Q, k) = \cov$, if there exists a c-instance of size $\leq limit$ with coverage $\cov$
then there exists a c-instance $\cinstance_i \in \mathcal{S}_{\cinstance}$ with coverage $\cov' = \cov$.} \sr{please check -- how does limit affect K?}

Given a syntax query tree $Q$ and the set of c-instances $\mathcal{S}_{\cinstance} = \{\cinstance_1, \ldots, \cinstance_k\}$ outputted by Algorithm \ref{alg:rd-tree-chase-main} with the parameter $limit$, if a ground instance $\ains$ that satisfies $Q$ with coverage $\cov$,
then there exists $\cinstance_i \in \mathcal{S}_{\cinstance}$ with coverage $\cov' = \cov$. \amir{Rephrased. Verify}
}
\end{Proposition}

\sr{do not follow the proof yet -- stopping here}
\begin{proof}[proof sketch]
The proof is by induction over the size of the syntax tree, where the base case is proven by considering a c-instance of size $1$ generated in lines 2--7 in Algorithm \ref{alg:rd-tree-chase}. 
The induction then proceeds to consider the cases where $Q = Q_1\land Q_2$, $Q = Q_1\lor Q_2$, $Q = \exists x.~Q'$, and $Q = \forall x.~Q'$. 

\amir{This can be removed if we need space:}
As an example, we give the proof for the conjunctive case: Assume $Q = Q_1 \land Q_2$ is of size $n$ and there is a ground instance $\ains$ covering $Q$ with coverage $\cov_\ains$. 
Thus, $\ains$ covers $Q_1$ with coverage $\cov^1_\ains$ and covers $Q_2$ with coverage $\cov^2_\ains$. 
Based on the induction hypothesis, there exist $\cinstance_1, \cinstance_2$ found by Algorithm 
\ref{alg:rd-tree-chase-main} with coverages $\cov^1_\ains$ and $\cov^2_\ains$ respectively (as $Q_1$ and $Q_2$ both have size smaller than $n$). 
To compute the c-instances for $Q$, Algorithm 
\ref{alg:rd-tree-chase-main} calls Algorithm 
\ref{alg:rd-tree-chase} that calls Algorithm 
\ref{alg:and-op} to handle the conjunction. 
So, the c-instance $\cinstance_1$ is part of the list generated in line 1 of the algorithm, and  $\cinstance_2$ is merged into it in line 5. 
Our goal is then to show that the algorithm generates a new c-instance $\cinstance^\star$ that is  of $\cinstance_1$ and $\cinstance_2$ is satisfiable and has coverage $\cov_\ains$. 
The merge is defined as adding the tuples in $\cinstance_1$ to $\cinstance_2$ along with its mappings, and joining their global conditions using a conjunction \amir{check with Zhengjie}. 

Since $\ains$ is a ground instance, it cannot contain a contradiction and cannot satisfy contradicting conditions of $Q$, even from different assignments. 
The conditions included in $\cinstance_1$ and in $\cinstance_2$ are subsets of the conditions satisfied by $\ains$ and their conjunction is also satisfied by $\ains$ and thus they do not contradict each other. 

For coverage, observe that $\cov_\ains = \cov^1_\ains \cup \cov^2_\ains$. This is exactly the coverage of $\cinstance^\star$, obtained from merging $\cinstance_1$ and $\cinstance_2$.
\end{proof}

\mypar{Termination}
Recall that in Section \ref{sec:model} we have determined that that the problem of finding a universal solution (Definition \ref{def:universal}) is undecidable. 
Despite this, Algorithm \ref{alg:rd-tree-chase-main} is guaranteed to terminate due to the $limit$ parameter (Line 11) and the $visited$ set.

\begin{proposition}
Given a query syntax tree $Q$, Algorithm \ref{alg:rd-tree-chase-main} is guaranteed to terminate.
\end{proposition}

\begin{proof}
Since the $limit$ parameter is finite, there are finitely many options for c-instances that it can generate. If the size of the c-instance increases over $limit$, the algorithm will ignore this c-instance and not push it into $queue$ (Lines 11--12). 
The algorithm will also not get into an infinite loop as well because of the $visited$ set. Every generated c-instance is placed into this set and c-instance already found is the set are ignored and are not pushed into $queue$ (Lines 11--12). 
Finally, note that the domain of all labeled nulls in the c-instance is derived from existing labeled nulls representing the same attribute in the c-instance, and thus cannot be infinite.
\end{proof}

\mypar{Complexity}
\amir{Verify:} 
The complexity of Algorithm \ref{alg:rd-tree-chase-main} is $\Theta(3^{d \cdot |Dom|^{u + e}} \cdot |Dom|^2)$, where $d$ is the number of disjunctions, $|Dom|$ is the number of cells in the c-instance (the maximum number is $limit \cdot k \cdot \schemaOf{\rel}$ where $k$ is the number of tables in the c-instance), and $u$ ($e$) is the number of universal (existential) quantifiers. This bound is also tight. 
To explain why it is a lower bound, consider the query $\forall x_1 \ldots \forall x_m R(x_1) \lor R(x_2) \lor \ldots \lor R(x_m)$. Given this query, denoted by $Q$, Algorithm \ref{alg:rd-tree-chase-main} first generates a c-instance, where the free variables in $Q$ are mapped to fresh variables in the domain (there are no free variables in $Q$). Then, it performs a BFS using Algorithm \ref{alg:rd-tree-chase}. 
Algorithm \ref{alg:rd-tree-chase} recurses over the tree, and for each $\forall$ quantifier, it calls Algorithm \ref{alg:forall-op} that performs $|Dom|$ iterations and for each iteration, generates a tree (due to the loop in Line 5 and the recursive call in Line 7). For each tree, the algorithm iterates over the c-instances generated in Lines 9--10, resulting in the need for $|Dom|^2$ iterations. 
Then, for each $\lor$ operator, it calls Algorithm \ref{alg:or-op} that expands the subtree into $3$ trees resulting in $3^d$ trees. For each one of these trees, Algorithm \ref{alg:and-op} that performs a recursive call on the left child resulting in $|Dom|$ instances and for each of these instances, recurses on the right child (Line 5). Thus, we have $(3^{|Dom|^u})^{d} \cdot |Dom|^2$ operations. 
Since $\land$ does not expand the current number of trees and $\exists$ expands the number by $|Dom|$, the above expression is an upper bound on the number of trees as well. 

}

\subsection{Optimization by Conjunctive Tree Chase}\label{sec:opt-conj-tree-chase} 
\label{sec:conjunct}

\cut{
In the previous section, we have 
given a solution to obtain a universal solution subject to a size constraintx1, h(), however, the solution involves enumerating all possible c-instances that can be generated from the syntax tree. This leads to slow runtimes in practice (see Section \ref{sec:experiments} \amir{Do we say that there?}). 
For a more interactive approach we propose a solution that forgoes completeness but maintains soundness. 
At its heart, this 
}

The optimized approach converts the original syntax tree into a set of syntax trees where each tree does not contain disjunctions ($\vee$), then performs the chase procedure described in Algorithm \ref{alg:rd-tree-chase-main} on each one of the trees. This speeds up the solution dramatically since there is no need to expand every disjunctive operator in the tree into a set of trees that do not contain disjunction, thus simplifying the process of generating c-instances.
\cut{
Our approach can be summarized as follows: (1) convert the syntax tree into a set of conjunctive trees, (2) perform the chase procedure described in Algorithm \ref{alg:rd-tree-chase-main} on each one of the trees. 
}

\mypar{Conversion of a tree with $\vee$ into conjunctive trees}
For sets, $Q_1\lor Q_2$ is equivalent to three sets $\{Q_1\land Q_2, \neg Q_1\land Q_2, Q_1\land \neg Q_2\}$. 
However, this is not always true for FOL formulae, since $f_1 = \forall x (P(x) \vee Q(x))$ is not equivalent to $f_2 = (\forall x P(x)\wedge Q(x)) \vee (\forall x \neg P(x)\wedge Q(x)) \vee (\forall x P(x)\wedge \neg Q(x))$ and only $f_2 \Rightarrow f_1$ holds; 
\revc{
we demonstrate this below, first with a toy example and then using our running example.

\begin{Example}[Toy example]\label{ex:loss-of-completeness-toy}
Consider $f=\forall x~ (even(x) \lor odd(x))$ where the predicate $even(x)$ ($odd(x)$) is true if $x$ is even (odd). Suppose the domain of $x$ is $\{3,4\}$. $f$ is clearly satisfied with this domain. 
Now consider the conversion of $f$ into three conjunctions: $f_1 = \forall x~ (even(x) \land odd(x))$, $f_2 = \forall x~ (\neg even(x) \land odd(x))$, and $f_3 = \forall x~ (even(x) \land \neg odd(x))$. 
$f_1$ is not satisfied since neither $3$ nor $4$ are both even and odd, $f_2$ is not satisfied since $4$ is even, and $f_2$ is not satisfied since $3$ is odd, thereby losing completeness in the conversion.
\end{Example}
}

\begin{Example}\label{ex:loss-of-completeness-running}
Consider the syntax tree of $\qincorrect - \qcorrect$ in Figure~\ref{fig:syntax-tree-minus}, and a sub-formula of its right subtree:
$$\forall p_3 \big(\neg \Serves(x_1, b_1, p_3) \lor 
\exists x_3, p_4 (\Serves(x_3, b_1, p_4) \land p_3 < p_4 ) \big)$$ 
We can convert this into the following three formulae:
$$\{\forall p_3 \big(\neg \Serves(x_1, b_1, p_3) \land \exists x_3, p_4 (\Serves(x_3, b_1, p_4) \land p_3 < p_4 ) \big),$$ $$\forall p_3 \big(\neg \Serves(x_1, b_1, p_3) \land 
\forall x_3, p_4 (\neg \Serves(x_3, b_1, p_4) \lor p_3 \geq p_4 ) \big),$$ $$\forall p_3 \big(\Serves(x_1, b_1, p_3) \land 
\exists x_3, p_4 (\Serves(x_3, b_1, p_4) \land p_3 < p_4 ) \big)\}.$$

These three formulae are equivalent to the original one if there is only one variable that $p_3$ can be mapped to in the price domain (in this case the original formula would be unsatisfiable). However, when there are more than two variables or constants in the domain, this conversion is not equivalence-preserving and will miss satisfying c-instances. 
For example, the c-instance $\cinstance_0$ in Figure \ref{fig:c-instance} satisfies the original formula but does not satisfy any of the three conjunctive formulae - assigning $p_1$ from $\cinstance_0$ to the universally quantified $p_3$ will not satisfy any of the formulae.
\end{Example}

Bearing this in mind, 
we describe an algorithm that performs this conversion.
Given a syntax tree $Q$, the algorithm converts it into a set of conjunctive syntax trees based on the above principal. 

We start with a syntax tree $Q$ that may contain the $\lor$ operator in different nodes. 
The algorithm recurses on the tree where the base case is a $Q$ that contains a single atom, then the algorithm simply creates a conjunctive tree from this atom, or its negation if it is negated. 
If the root of $Q$ is an $\land$ node, the algorithm continues to recurse over the two children and joins each pair of obtained subtrees.
If the root of $Q$ is an $\lor$ node, the algorithm considers three cases, as mentioned above: (1) converting the root into $\land$ and recursing over both of its children, 
(2) converting the root into $\land$, negating the right child, and recursing over both of its children, 
and, finally, (3) converting the root into $\land$, negating the left child, and recursing over both of its children, 
All solutions to the three cases are added to the list of c-instances $res$. 
If the root of $Q$ is a $\forall$ or $\exists$ quantifier, the algorithm recurses over the child of the root and adds the resulting trees to $res$. 


\mypar{Chase for conjunctive trees}
The main chase procedure utilizes Algorithm \ref{alg:rd-tree-chase-main} and applies it on the conjunctive tree obtained from the conversion algorithm.
It gets a schema, a syntax tree, and a limit as input. 
\revb{It first converts} the input tree into a set of conjunctive trees using 
the conversion algorithm.
It then calls Algorithm \ref{alg:rd-tree-chase-main} with each one of the conjunctive trees and adds the resulting c-instances to the list of results which is then outputted. 

\cut{
\begin{algorithm}[t]\caption{Conjunctive-Tree-Chase-Step}\label{alg:conj-tree-chase-step}
{\small

\begin{codebox}
\algorithmicrequire $R$: the schema of the relational database; \\
$Q$: a conjunctive tree of a DRC query; \\
$I$: current c-instance; \\
$f$: current mapping from $var(Q)$ to $var(I) \cup cons(I)$.\\
\algorithmicensure A c-instance\\
  \Procname{$\proc{Conj-Tree-Chase-Step}(R, Q, I, f)$}
  \li \If Q.root is an atom $r(x_1, ..., x_n)$ or negation of an atom $\neg r(x_1, ..., x_n)$
  \Then 
    \li Add $f(Q.root) to I$ if it has not been added
    \li \If $I$ is satisfiable
        \Then
            \li \Return $I$
        \li \Else
            \li \Return UNSAT
        \End
  \End
  \li \If Q.root.operator $\in \{\land\}$
  \Then
    \li $J \gets \proc{Conj-Tree-Chase-Step}(R, Q.root.lchild, I, f)$
   \li $J \gets \proc{Conj-Tree-Chase-Step}(R, Q.root.rchild, J, f)$
     \li \If $J$ is UNSAT
        \Then
            \li \Return UNSAT
    \li \Else
            \li \Return $J$
        \End
  \li \ElseIf Q.root.operator $\in \{\exists\}$
  \Then
    \li \For $x \in I.domain(Q.root.variable)$
    \Do
        \li $g \gets f \cup \{Q.root.variable \to x\}$
        \li J $\gets \proc{Conj-Tree-Chase-Step}(R, Q.root.lchild, I, g)$ 
        \li \If $J$ is not UNSAT
        \Then
            \li \Return $J$
        \End
    \End
    \li Add a fresh variable $x'$ to $I.domain(Q.root.variable)$
    \li $g \gets f \cup \{Q.root.variable \to x'\}$
    \li J $\gets \proc{Conj-Tree-Chase-Step}(R, Q.root.lchild, I, g)$ 
        \li \If $J$ is UNSAT
        \Then
            \li \Return UNSAT
        \li \Else
            \li \Return $J$
        \End
  \li \ElseIf Q.root.operator $\in \{\forall\}$
  \Then 
    \li \If $I.domain(Q.root.variable) = \emptyset$
        \Then
            \li Add $Q.root.variable$ to $I.domain(Q.root.variable)$
        \End
    \li $J \gets I$
    \li \For $x \in I.domain(Q.root.variable)$
    \Do
        \li $g \gets f \cup \{Q.root.variable \to x\}$
        \li $J \gets \proc{Conj-Tree-Chase-Step}(R, Q.root.lchild, J, g)$ 
        \li \If $J$ is UNSAT
        \Then
            \li \Return UNSAT
        \End
    \End
    \li \Return $J$
    
  \End
\end{codebox}
}
\end{algorithm}
}

\mypar{Soundness and complexity}
The soundness of the algorithm, i.e., that it returns a set of minimal satisfying c-instances, follows from the fact that the final set is returned by Algorithm~\ref{alg:rd-tree-chase-main}. Although there is an exponential dependency in the number of operators, the algorithm gives a better running time by avoiding recursive calls to split a query tree into three query trees recursively for the disjunction operator, at the cost of possibly not generating some satisfying c-instances that are generated by Algorithm~\ref{alg:rd-tree-chase-main}.
\cut{
We next prove that every c-instance outputted by the conjunctive version of the chase algorithm satisfies $Q$ and that each c-instance outputted by Algorithm 

For soundness, note that this is immediately implied by the soundness of Algorithm \ref{alg:rd-tree-chase-main}, as we call it for each one of the conjunctive trees. 

While Example \ref{ex:} shows that the conversion to conjunctive trees hinders completeness, a limited version of completeness can also be shown. Following Proposition \ref{theorem:completeness}, we can conclude that the chase procedure outputs a complete solution for each individual conjunctive tree.

\mypar{Complexity}
The complexity of Algorithm \ref{alg:conj-tree-cons} is 
$\mathcal{O}(3^{d}\cdot |Dom|^{u + e + 2})$
where $d$ is the number of disjunctions in $Q$, $|Dom|$ is the maximum domain size, and $u$ ($e$) is the number of universal (existential) quantifiers. 
Since we only consider conjunctive trees, Algorithm \ref{alg:rd-tree-chase} always skips Line 13 and never calls Algorithm \ref{alg:or-op}. Thus, there is no extension of the original tree beyond the one done by Algorithm \ref{alg:conj-tree-cons}. 
The first factor is due to Algorithm \ref{alg:conj-tree-cons} generating $3$ trees for each $\lor$ operator. The second factor stems from the iterations over the entire domain size for each of the universal quantifiers in each conjunctive tree when Algorithm \ref{alg:or-op} is called.
}




\mypar{Other optimizations}
The time complexity is also largely affected by the number of labeled nulls in each domain, especially when handling universal quantifiers. Hence, we try to optimize the algorithm by disallowing the universal quantifier to add new labeled nulls, though this might lose completeness. Moreover, notice that our $\proc{Tree-Chase-BFS}$ is always initially called with $\cinstance_0 =$an empty c-instance, we could manipulate it to achieve different coverage by calling $\proc{Tree-Chase-BFS}$ on a c-instance that is properly initialized with the tuples or atomic conditions we target to cover. We further evaluate these optimizations in Section~\ref{sec:experiments}.

\mypar{\revb{Setting the limit parameter in Algorithm \ref{alg:rd-tree-chase-main}}}
\revb{The $limit$ parameter can be set in several different manners in practice. 
One approach is setting a default $limit$ according to query complexity. $limit$ determines the maximal size of the c-instance. In our experimental results (Section \ref{sec:experiments}) we have seen that it can be set to some multiple of the number of query atoms to safely allow for a c-instance to covers all the atoms of the query (we have used a multiple of 2 in our evaluation). 
Another alternative, aimed at an interactive experience, is to set a timeout parameter instead of the $limit$ (as done in Section \ref{sec:experiments}), thus allowing Algorithm \ref{alg:rd-tree-chase-main} to explore higher limits as needed up to the allotted time. 
}
\section{Experiments}\label{sec:experiments}

We investigate the performance of our approach and compare it to different variations of our approach 
in the following aspects: (1) runtime for varying query complexity (2) varying the limit parameter in Algorithm \ref{alg:rd-tree-chase-main}, (3) properties of the output c-instances, and (4) case studies showing the actual obtained c-instances for a sample of the experimental queries.

\mypar{Setup} We implemented our methods in Python 3.7.
We ran all experiments locally on a 64-bit Ubuntu 18.04
LTS server with 3.20GHz Intel Core i7-8700 CPU and 32GB
2666MHz DDR4 RAM. We compare
the following variants of our algorithms and optimizations. 

\begin{itemize}
\item \textbf{\disjNaive{}}: this method implements the 
exhaustive chase procedure described in Section~\ref{sec:complete-algo}.
\item \textbf{\conjNaive{}}: this method implements the 
optimized conjunctive tree chase procedure described in Section~\ref{sec:conjunct} that converts the original syntax tree into a set of syntax trees without disjunction.
\item \textbf{\disjEO{}/\conjEO{}}: this method adapts \disjNaive{}/\conjNaive{} by only allowing the algorithm to add 
labeled nulls to the c-instance when handling an existential quantifier node.
\item \textbf{\disjAdd{}/\conjAdd{}}: this method first runs \disjEO{}(or \conjEO{}) on the empty c-instance, then gets the minimal c-solution. If there are still leaf atoms not covered by any of the c-instance in the c-solution, then it iterates over every remaining uncovered leaf atom, creating corresponding labeled nulls and adding the atom to the initial c-instance, and runs \disjEO{} (\conjEO{} resp.) on the 
initialized c-instance.
\end{itemize}

We evaluate both the efficiency/scalability of our algorithms in
terms of runtime and the quality of results with respect to different
measures, and compare them against systems from related work.

\mypar{Datasets} We used two datasets in our experiments, \textbf{Beers} and \textit{TPC-H}.
For Beers dataset,
queries in these experiments come from submissions by students for 
an assignment in an undergrad database course. 
We picked 5 questions (skipped those with only simple selection and join) and sampled a few students' queries, then manually rewrote them into domain relational calculus.
There were 5 (correct) standard queries and 10 students' wrong queries; we also considered the difference between the standard queries and the wrong queries (and also the opposite direction),
resulting in additional 20 queries.
Some queries are very complex as they use the difference operator multiple times,
resulting in nested universal quantification in the DRC query. 
 \common{Similarly, for TPC-H, we picked 4 queries (Q4, Q16, Q19, and Q21) and dropped their aggregate functions, then made two wrong queries each, resulting in 28 test queries in total. The statistics of the datasets are in Table~\ref{tab:datasets}.}

\begin{table}[t]
\centering
\small
\resizebox{0.47\textwidth}{!}{
\begin{tabular}{cccccc}\toprule
Dataset & \# Queries & Mean \# Atoms  & Mean \# Quantifiers & Mean \# Or & Mean Height   \\ \midrule
Beers & 35 & 6.40 & 13.94 & 2.17 & 9.54 \\
TPC-H & 28 & 11.96 & 23.07 & 4.18 & 12.07
\\ \bottomrule
\end{tabular}}
\common{\caption{\label{tab:datasets}\small Dataset statistics.}}
\vspace{-8mm}
\end{table}

\begin{figure*}
\begin{minipage}{0.96\linewidth}
\begin{minipage}{0.24\linewidth}
  \centering
  \includegraphics[width=\linewidth]{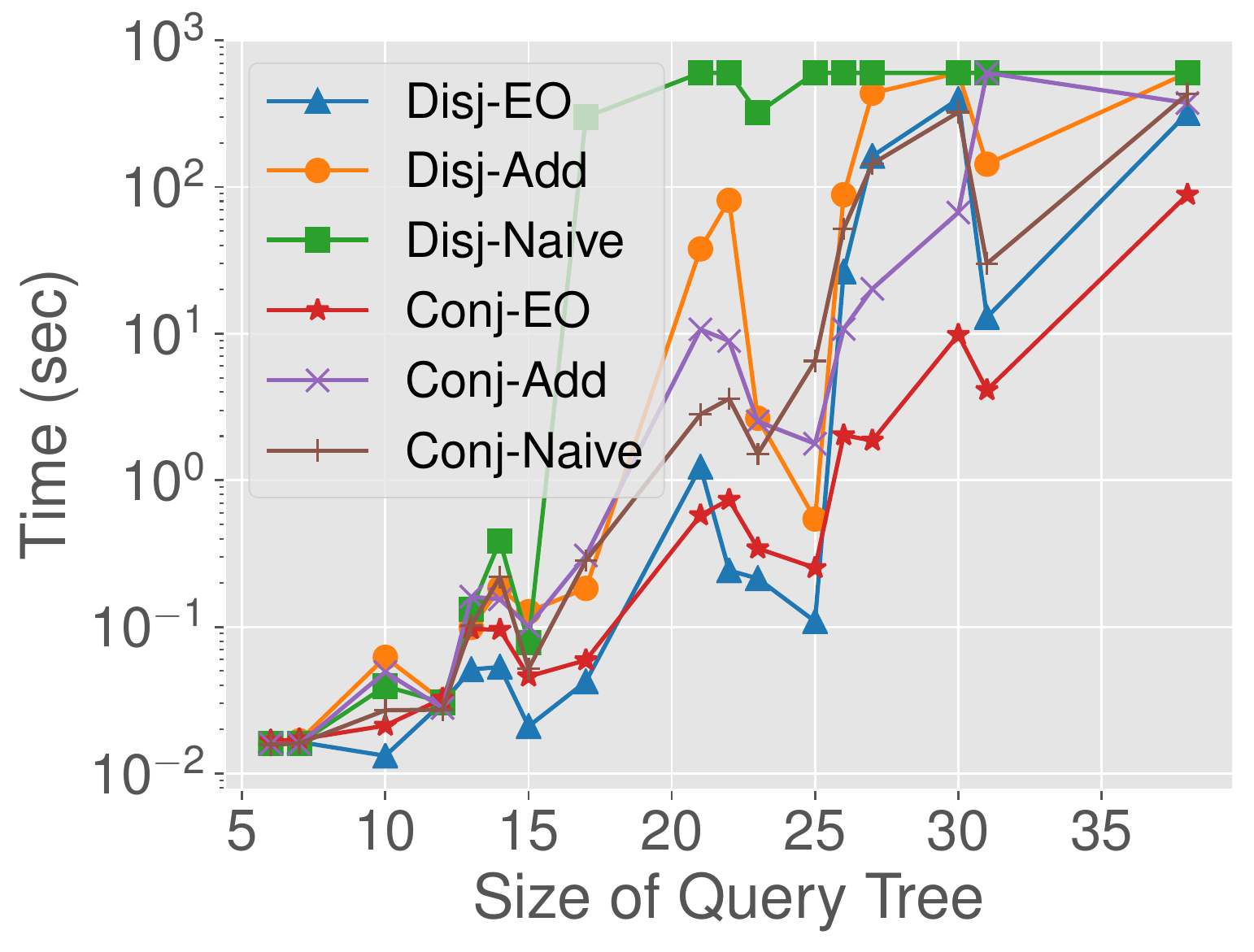}
\end{minipage}
  \hfill
\begin{minipage}{0.24\linewidth}
  \centering
  \includegraphics[width=\linewidth]{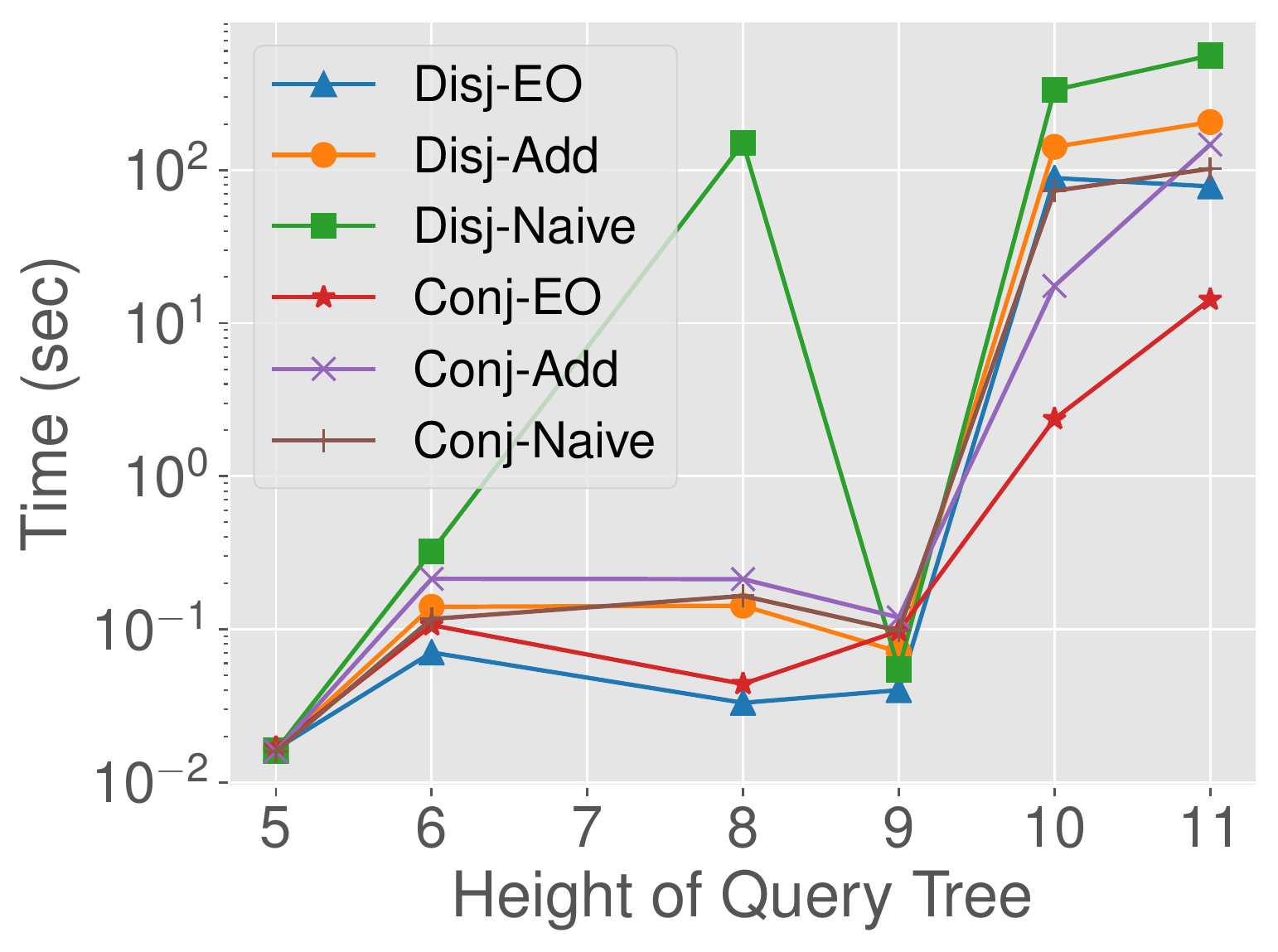}
\end{minipage}
\hfill
  \begin{minipage}{0.24\linewidth}
    \centering
    \includegraphics[width=\linewidth]{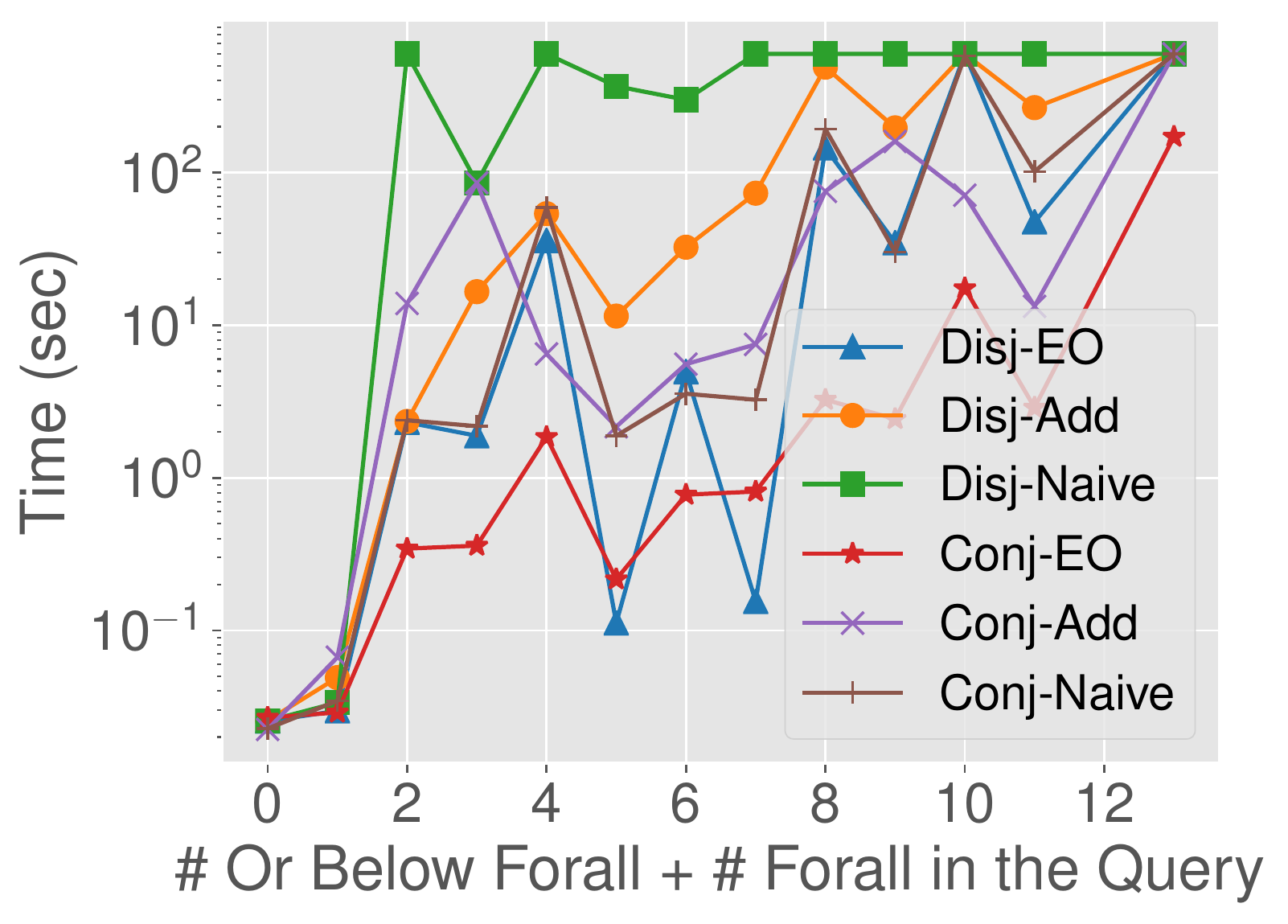}
  \end{minipage}
  \hfill
  \begin{minipage}{0.24\linewidth}
    \centering
    \includegraphics[width=\linewidth]{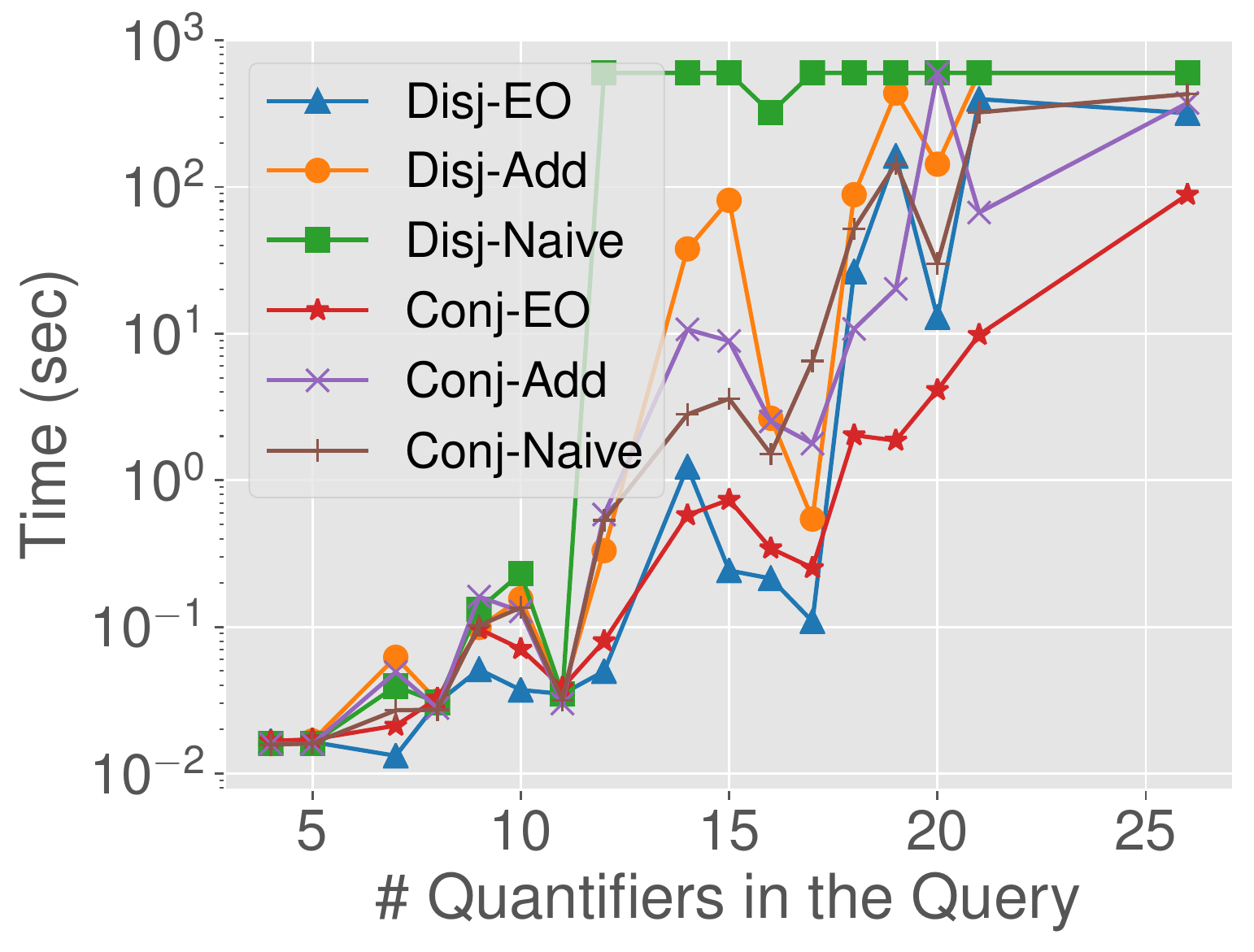}
  \end{minipage}
 \vspace{-5mm}
\caption{\label{fig:query-complexity-vs-time}\small Running time 
vs.\ various measures of query
  complexity.\mdseries\ $limit=10, timeout=600 sec$.}
       \vspace{-3.5mm}
\end{minipage}
\end{figure*}

\subsection{Performance Evaluation}\label{sec:evaluation}

\mypar{Scalability} To evaluate the scalability of our approach, we study how query complexity
affects the running time. We consider four measures of query complexity: 
\reva{
(1) number of nodes in the query tree, (2) the height of the query tree, 
(3) number of universal quantifiers plus number of disjunction that is below a
universal quantifier, and (4) the number of both universal and existential quantifiers. Although most of these parameters are specific to our algorithms that operate on DRC queries, the number of universal quantifiers also has a corresponding complexity notion in the SQL form since each universal quantifier in DRC would lead to at least one negated sub-query in SQL.

\begin{Example}\label{ex:query-complexity}
Recall the query in Figure \ref{fig:diff-query} and its syntax tree in Figure \ref{fig:syntax-tree-minus}. 
The number of nodes in the tree is $27$ (measure (1)),
the height of the tree is $8$ (measure (2)), the query contains $2$ universal quantifier, $3$ disjunctions below it, and $6$ existential quantifiers, so measure (3) is $5$ and (4) is $8$. 
For reference, the queries \qcorrect\ and \qincorrect\ from our running example are shown in SQL in Figure \ref{fig:sql-queries}.
\end{Example}
}

\begin{figure}
    \centering
    \begin{subfigure}{.5\linewidth}
     \begin{scriptsize}
    \begin{center}
    \begin{tabular}{c}
    \begin{lstlisting}[
              frame=none,
              language=SQL,
              showspaces=false,
              basicstyle=\ttfamily,
              numbers=none,
              commentstyle=\color{gray},
              mathescape=true
            ]
    SELECT l.beer, s.bar
    FROM Likes l, Serves s
    WHERE l.drinker LIKE 'Eve %' AND 
    l.beer = s.beer
    AND NOT EXISTS(
        SELECT * FROM Serves
        WHERE beer = s.beer AND price > s.price);
    \end{lstlisting}
    \end{tabular}
    \end{center}
    \end{scriptsize}
    \caption{Correct query \qcorrect}
    \end{subfigure}%
    \begin{subfigure}{.5\linewidth}
    \begin{scriptsize}
    \begin{center}
    \begin{tabular}{c}
    \begin{lstlisting}[
              frame=none,
              language=SQL,
              showspaces=false,
              basicstyle=\ttfamily,
              numbers=none,
              commentstyle=\color{gray}
            ]
    SELECT S1.beer, S1.bar 
    FROM Likes L, Serves S1, Serves S2 
    WHERE L.drinker LIKE 'Eve%' AND 
    L.beer = S1.beer AND L.beer = S2.beer
          AND S1.price > S2.price;
    \end{lstlisting}
    \end{tabular}
    \end{center}
    \end{scriptsize}
    \caption{Incorrect query \qincorrect}
    \end{subfigure}
    \vspace{-4mm}
    \caption{\reva{Queries from our running example in SQL.}}
    \label{fig:sql-queries}
    \vspace{-3mm}
\end{figure}

 \common{We set the limit threshold to be 10 for the Beers dataset and 15 for TPC-H, and stops the algorithm 
if it does not finish in 10 minutes (20 minutes for TPC-H).
The results are shown in Figure~\ref{fig:query-complexity-vs-time} and Figure~\ref{fig:query-complexity-vs-time-and-quality-tpch}, respectively.} We report the average \reva{running time for different queries} with the same value 
of the complexity measure.

As shown in in Figure~\ref{fig:query-complexity-vs-time}, the running time increases 
with query complexity. 
\disjNaive{} has the worst time complexity (more than exponential), which did not finish for most of the complex queries with more than 10 quantifiers, followed by \disjEO{} and \disjAdd{}, whereas \conjNaive{}, \conjEO{}, \conjAdd{} perform better.
The running time of \conjNaive{}, \conjEO{}, \conjAdd{} increases exponentially as expected since their complexity largely depends on the number of conjunctive syntax trees generated from the original query.
Compared to the total number of nodes and the height, 
the number of universal quantifiers and the number of disjunction nodes 
are more crucial to the growth in the running time.
\common{Similar trends are illustrated in Figure~\ref{fig:query-complexity-vs-time-and-quality-tpch}, whereas \conjEO{} still performs better than \disjEO{}, while the running time of \conjAdd{} is very close to \disjAdd{}. We conjecture that this is because the overall complexity of queries in the TPC-H dataset is much higher than the Beers dataset, as shown in Table~\ref{tab:datasets}, leading to more generated conjunctive syntax trees. Our results indicate that our solution scales well for complex schemas and queries (except for very complex and long queries: for only 4 extremely complex cases out of the 28 cases in the TPC-H dataset, our algorithm failed to return any results) .
}

\begin{figure}
\begin{minipage}{0.96\linewidth}
\begin{minipage}{0.46\linewidth}
  \centering
  \includegraphics[width=\linewidth]{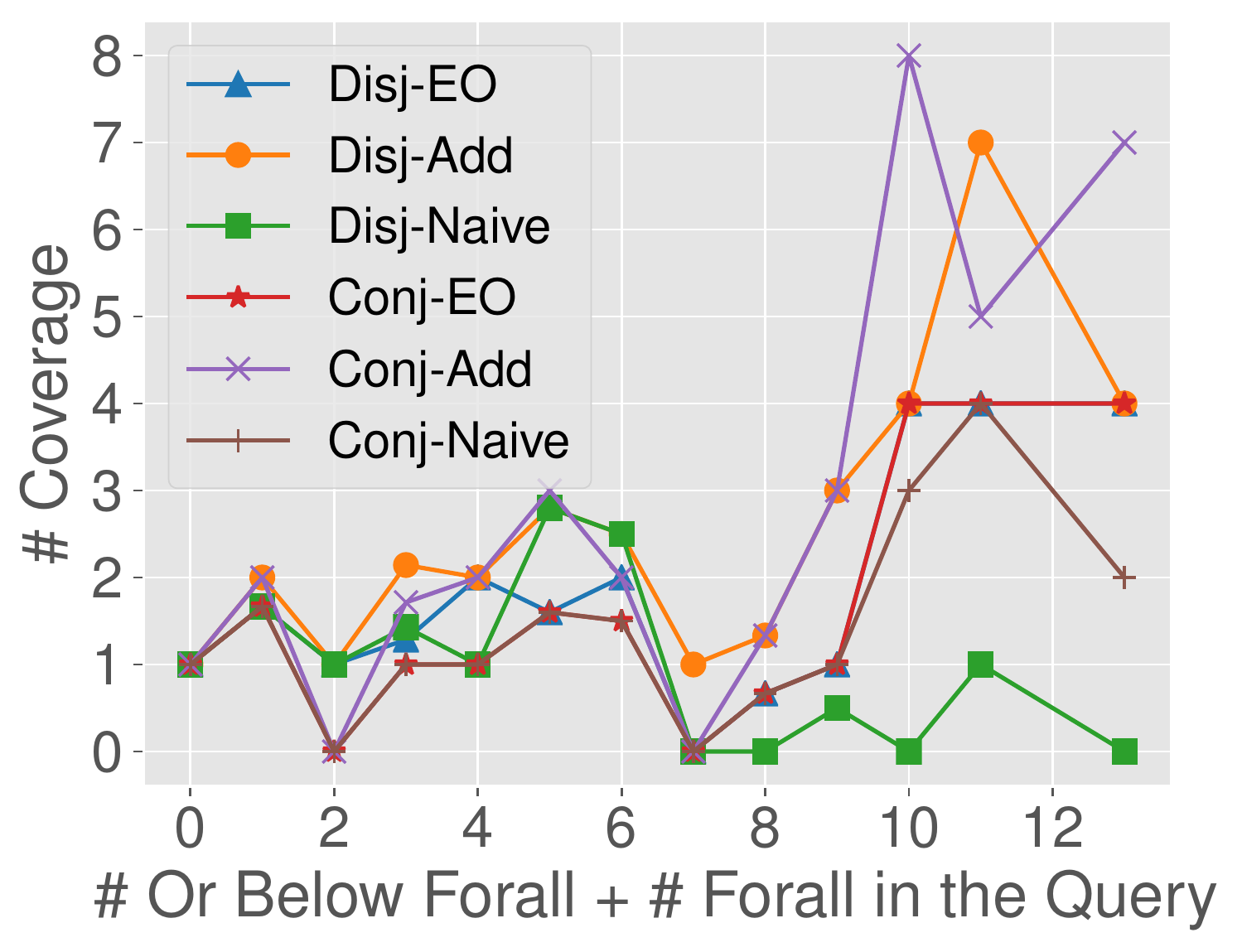}
\end{minipage}
  \hfill
\begin{minipage}{0.46\linewidth}
  \centering
\includegraphics[width=\linewidth]{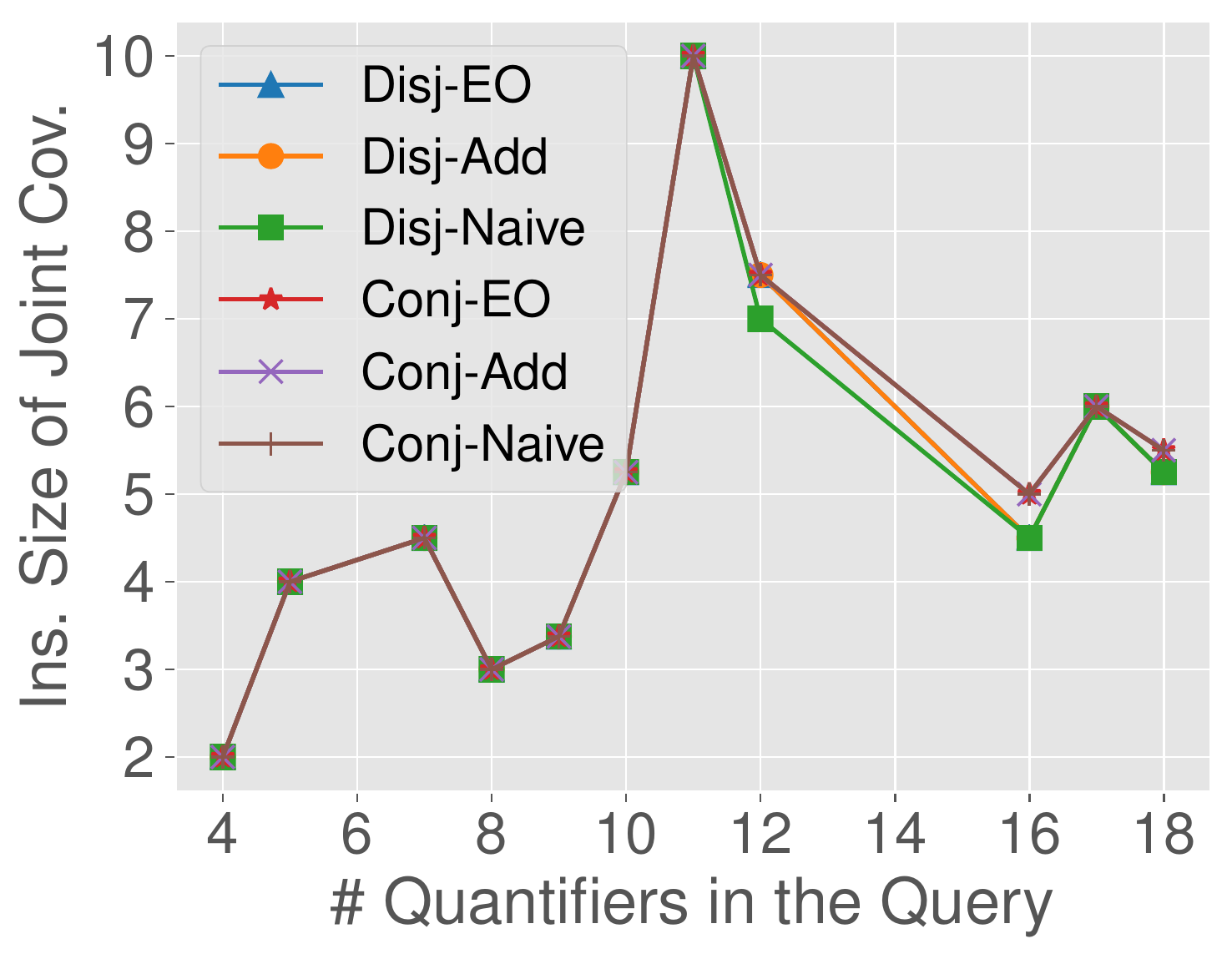}
\end{minipage}
\vspace{-5mm}
\caption{\label{fig:query-result-quality}\small Result quality by query complexity.
\mdseries\ $limit=10, timeout=600 sec$.}
       \vspace{-4.5mm}
\end{minipage}
\end{figure}

\begin{figure}
\begin{minipage}{0.96\linewidth}
\begin{minipage}{0.46\linewidth}
  \centering
  \includegraphics[width=\linewidth]{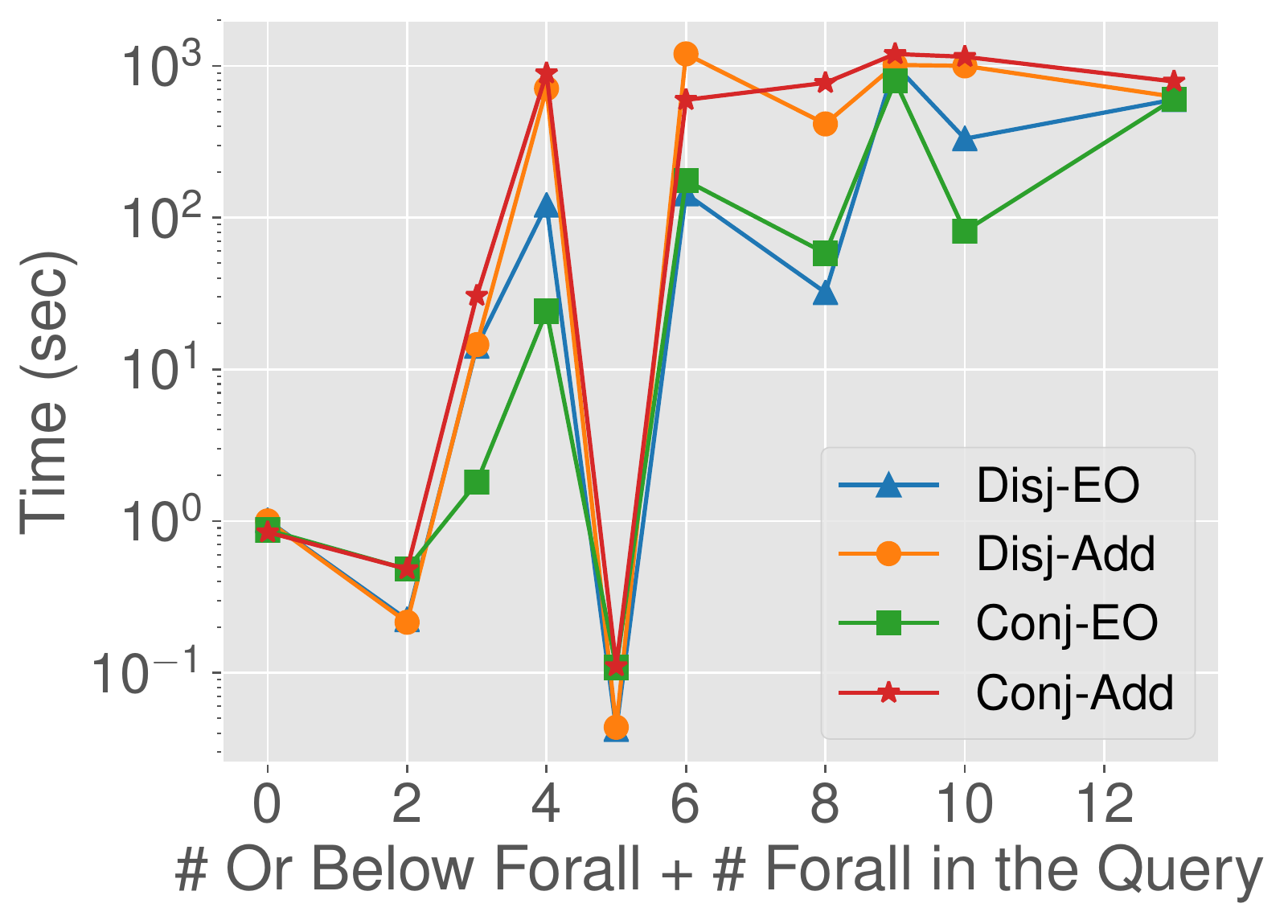}
\end{minipage}
  \hfill
\begin{minipage}{0.46\linewidth}
  \centering
\includegraphics[width=\linewidth]{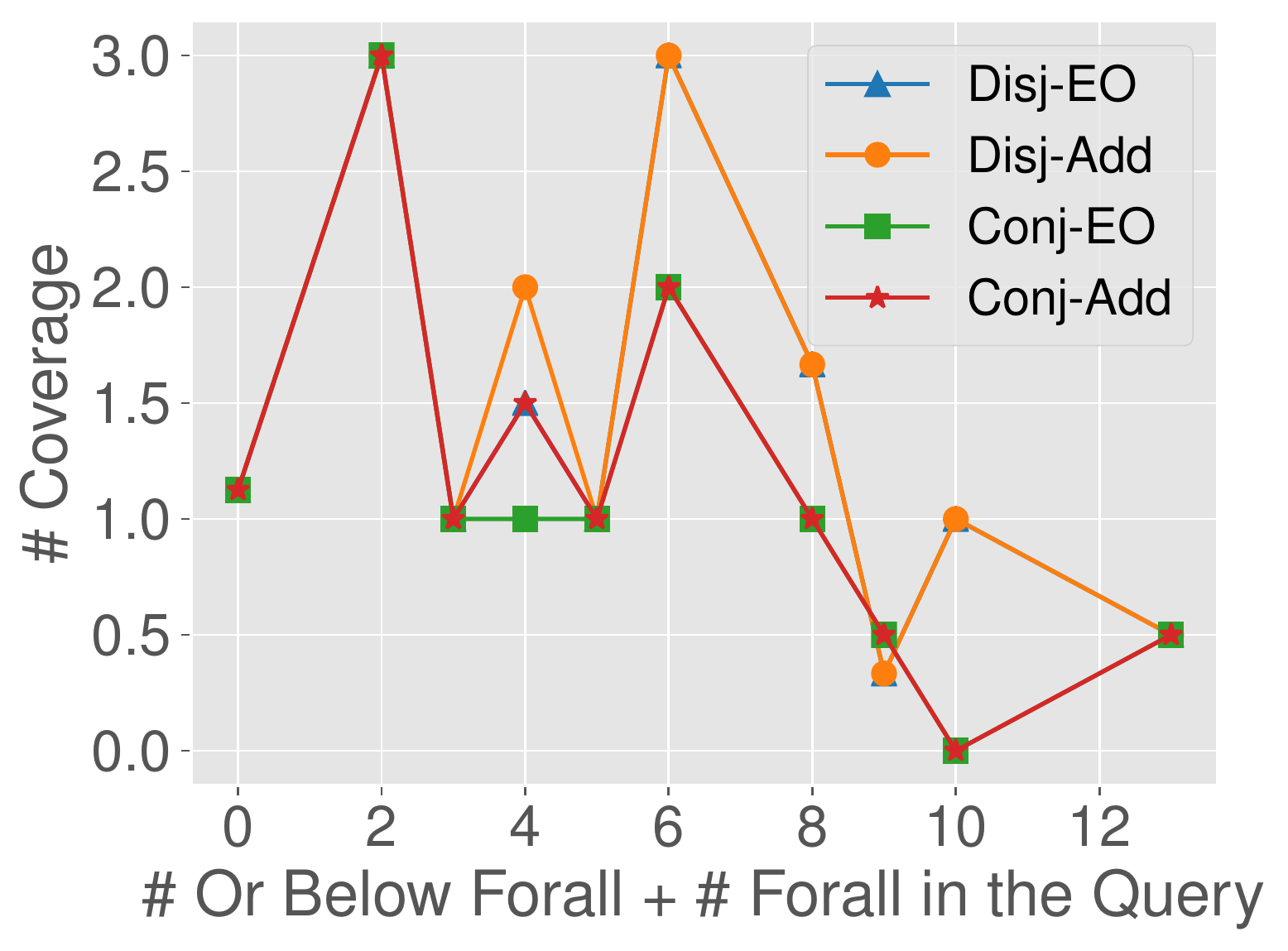}
\end{minipage}
\vspace{-5mm}
\caption{\label{fig:query-complexity-vs-time-and-quality-tpch}\small \common{Running time and result quality by query complexity on TPC-H dataset
\mdseries\ $limit=15, timeout=1200 sec$.}}\label{fig:tpch-runtime-quality}
       \vspace{-4.5mm}
\end{minipage}
\end{figure}


\mypar{Result quality} Our optimized approaches (\conjEO{}, \disjEO{}, \conjAdd{} and \disjAdd{}) run much faster than \disjNaive{} by 
compromise on the completeness of the minimal c-solution to different extents. To evaluate the result quality of these approaches in
terms of both completeness and minimality, we show in Figure~\ref{fig:query-result-quality} the number of distinct coverage from the returned c-solutions and the average size of the c-solutions. Notice that the number of returned minimal c-solution of each variant can be different (either they are unable to find some results, e.g. \disjEO{} returns a subset of \disjAdd{}; or some variants finish before the timeout but the others do not), to guarantee a fair comparison, for each query we only consider c-solutions with a coverage set returned by all of the variants.
For example, for a query $Q$ if \conjNaive{} returns two c-solutions with coverage $C_1$ and $C_2$, and \disjAdd{} returns three c-solutions with coverage $C_1, C_2, C_3$, we will only report the average c-solution size of the two with coverage $C_1$ and $C_2$ for $Q$.

Figure~\ref{fig:query-result-quality} shows that \disjAdd{} returns more distinct coverage sets in most cases, while \conjAdd{}, \conjNaive{},\conjEO{}, and \disjEO{} might fail to return any satisfying instances. There are a few exceptions for some very complex queries where \disjAdd{} did not finish before timeout and thus \conjAdd{} returns more distinct coverage sets. Although \disjNaive{} did not finish running in most cases, the c-solutions it returns can be smaller than other variants when there are more than 10 quantifiers in the query. There are few cases where the \disjAdd{} and \disjEO{} return smaller c-solutions compared to the variants using conjunctive trees.

\begin{figure}
\begin{minipage}{0.46\linewidth}
  \centering
  \includegraphics[width=\linewidth]{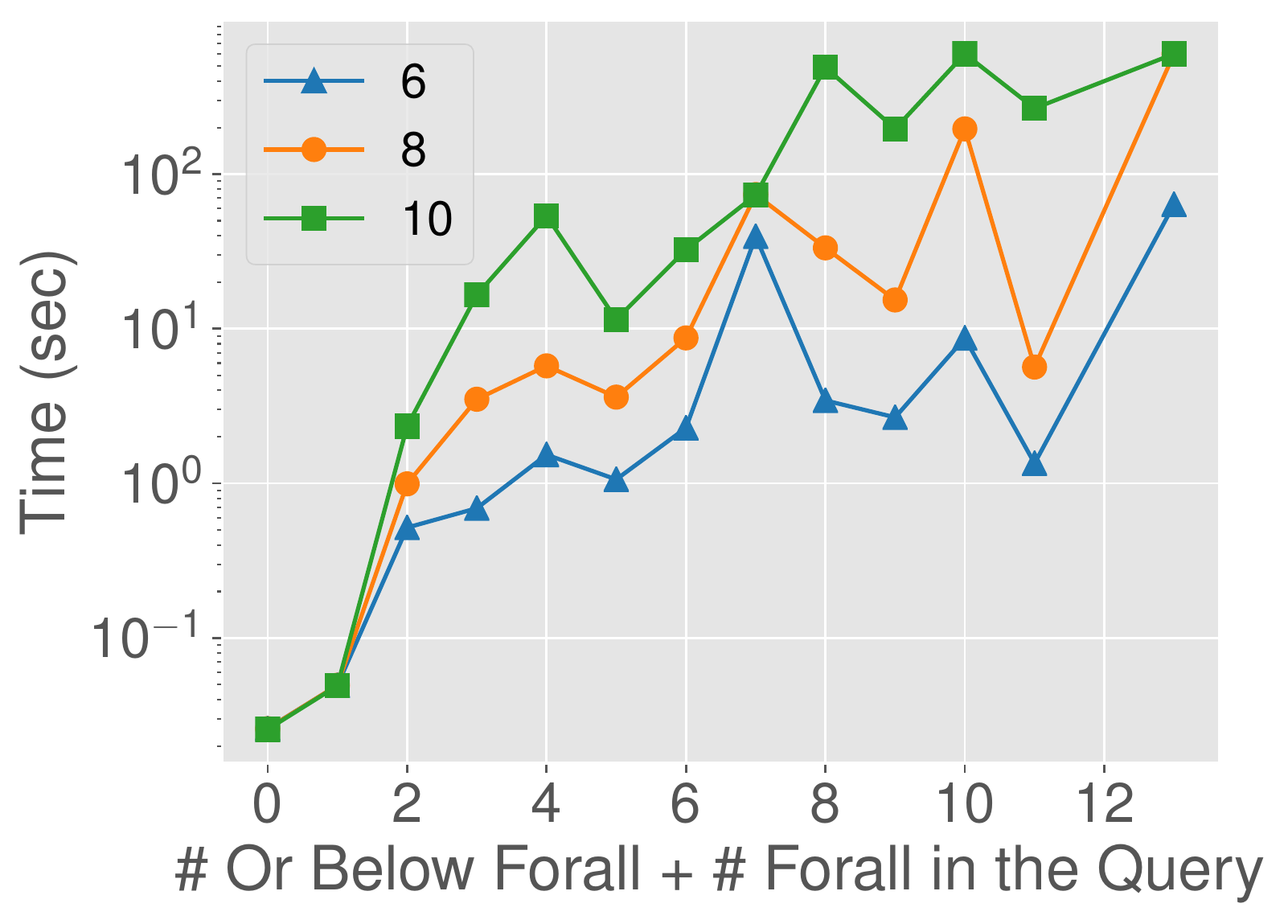}
\end{minipage}
  \hfill
\begin{minipage}{0.46\linewidth}
  \centering
  \includegraphics[width=\linewidth]{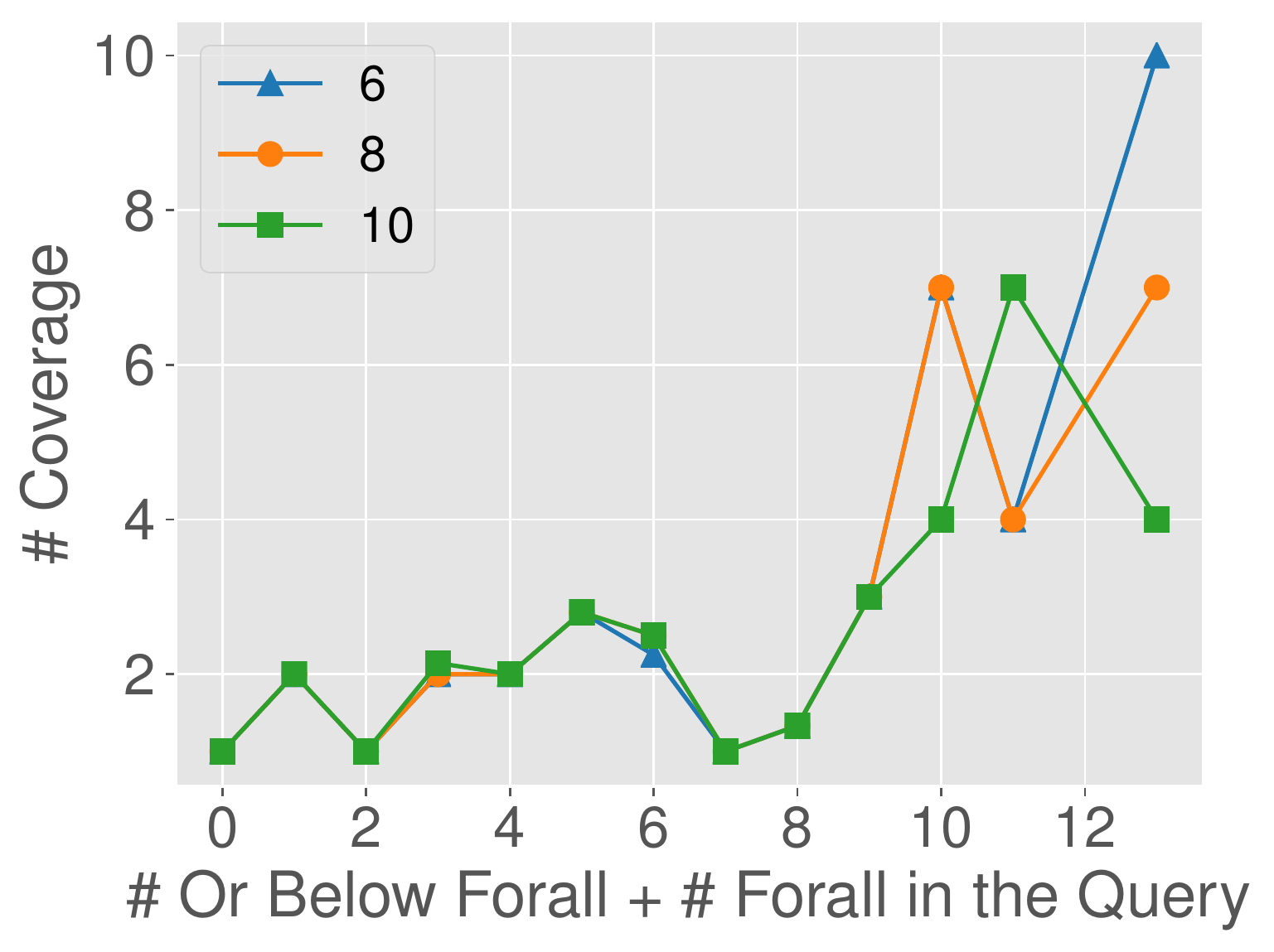}
\end{minipage}
\vspace{-5mm}
\caption{\label{fig:disj-limit-parameter}\small Parameter sensitivity varying limit.\mdseries\ $\disjAdd{}$.}
\vspace{-4.5mm}
\end{figure}

\begin{figure}
  \begin{minipage}{0.46\linewidth}
    \centering
    \includegraphics[width=\linewidth]{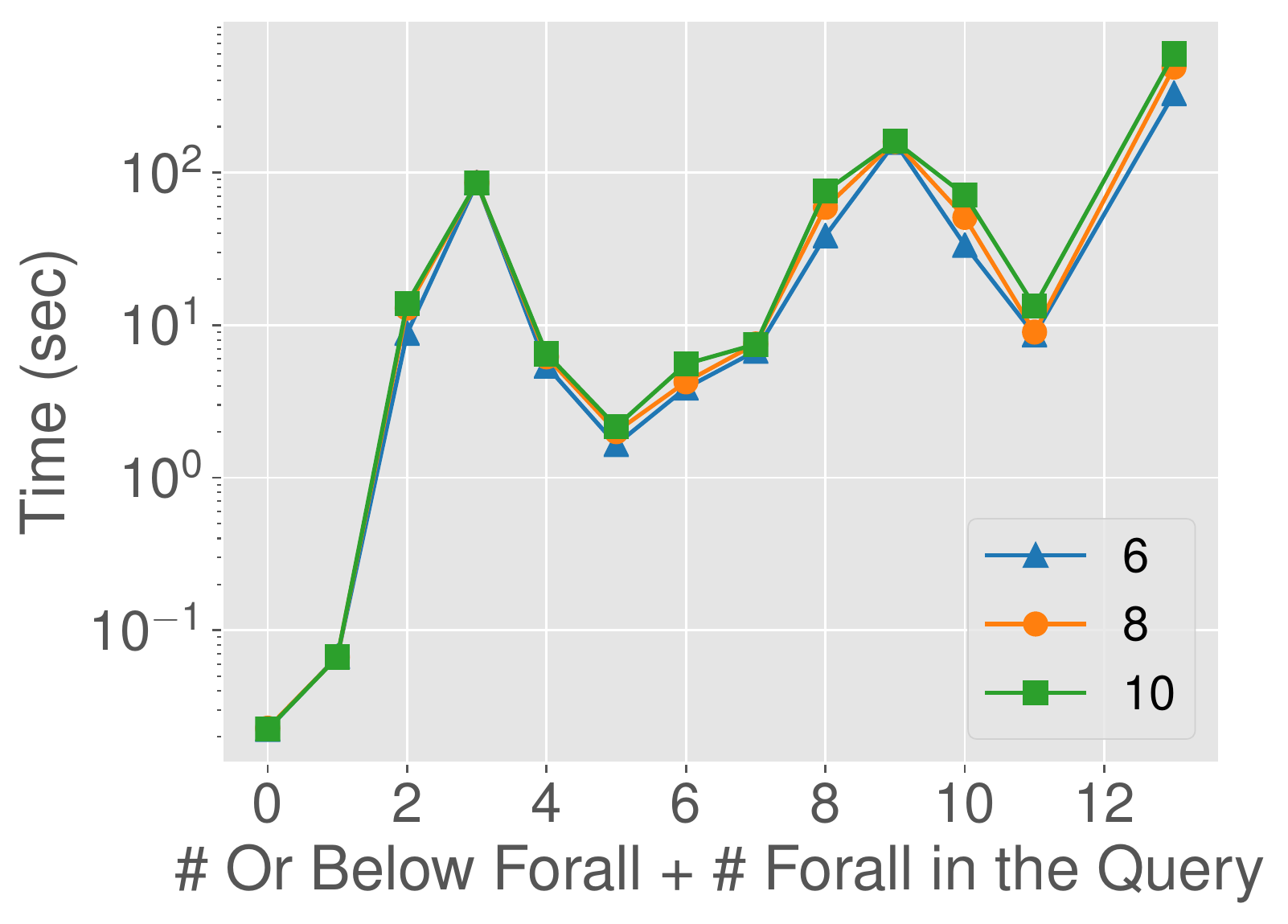}
  \end{minipage}
  \hfill
  \begin{minipage}{0.46\linewidth}
    \centering
    \includegraphics[width=\linewidth]{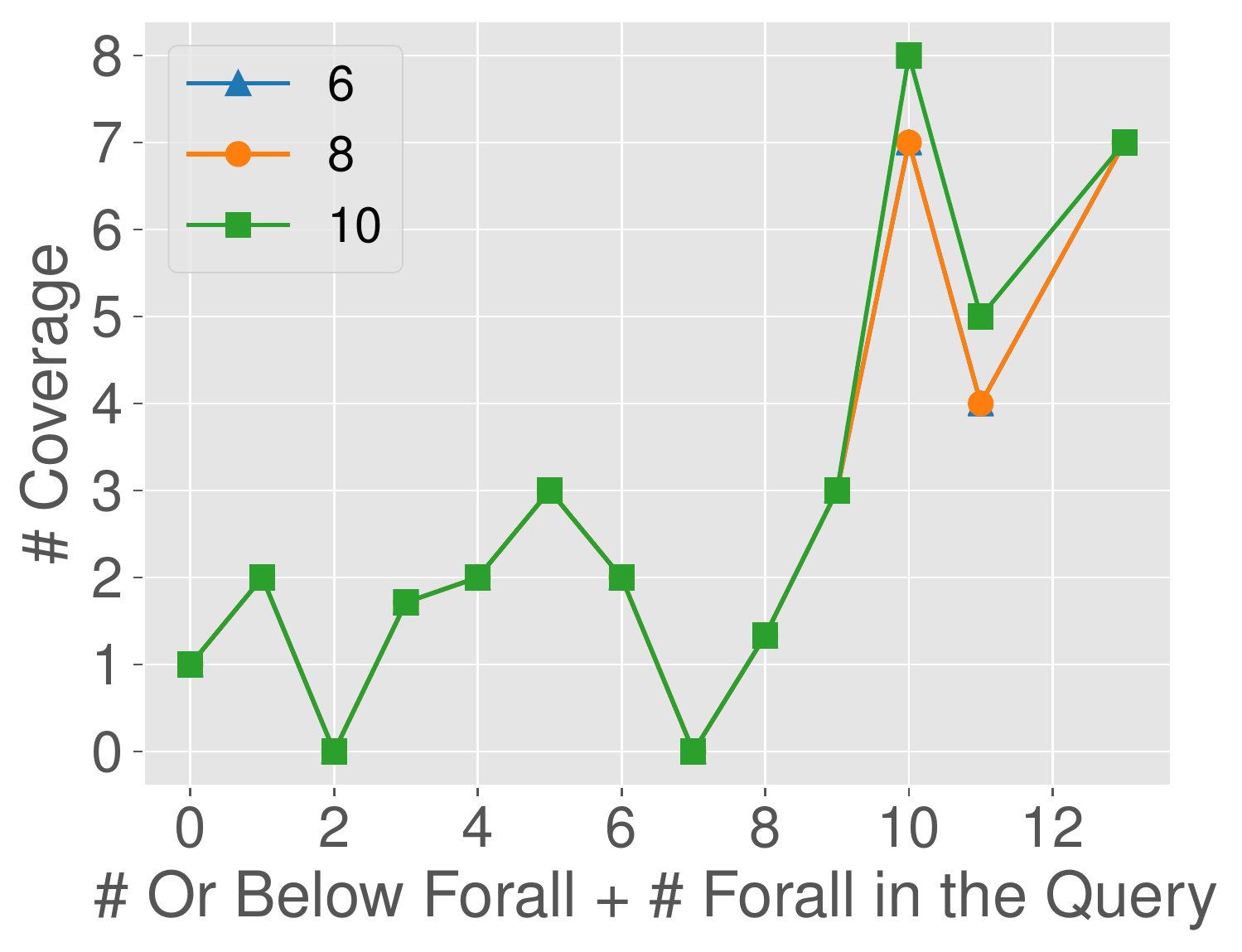}
  \end{minipage}
\vspace{-5mm}
\caption{\label{fig:conj-limit-parameter}\small Parameter sensitivity varying limit.\mdseries\ $\conjAdd$.}
\vspace{-4mm}
\end{figure}

\mypar{Parameter sensitivity} To ensure that the algorithm terminates, we used a \textit{limit} parameter to restrict the size of the c-instances.
Figure~\ref{fig:disj-limit-parameter} and Figure~\ref{fig:conj-limit-parameter} show how this limit affects the running time and completeness for \disjAdd{} and \conjAdd{}. Although the running time grows exponentially with the query complexity, \disjAdd{} runs one order of magnitude faster when limit=6 than limit=10, losing completeness only when the query tree is very complex. For \conjAdd{}, the difference in running time when varying the limit is negligible in most cases, and the number of distinct returned coverage sets only changes for the most complex queries.

\mypar{\common{Interactivity}} \common{To improve interactivity, our algorithms can output the instances one at a time as soon as they are generated, so users can start exploring immediately and have a more interactive experience. The time to produce the first instance for our algorithms on the Beers dataset is only $4.78$ seconds on average (DisjAdd) or $0.77$ seconds (ConjAdd), and the average delay between two consecutive output instances with different coverage is $18.34$ seconds (DisjAdd) or $5.22$ seconds (ConjAdd). While on the TPC-H dataset, the time to produce the first instance is longer but still tolerable: $101.25$ seconds on average (DisjAdd) or $88.02$ seconds (ConjAdd), and the average delay between two consecutive output instances with different coverage is $19.12$ seconds (DisjAdd) or $54.16$ seconds (ConjAdd).
Note that doing so may risk returning non-minimal instances, as minimality is verified in postprocessing (Section \ref{sec:complete-algo}). 
Another option is to start with the optimized version (Section \ref{sec:conjunct}) and if further insights are needed, run the exhaustive search (Section \ref{sec:complete-algo}).
We also note that slightly longer wait times might be acceptable in some scenarios, e.g., providing offline feedback to student solutions, or to help students/instructors when manual debugging would take significantly more effort for complex queries or subtly wrong solutions.
}


\vspace{-1mm}
\subsection{Case Study}\label{sec:case-study}

By providing the ``basis'' to a query $Q$, our work yields
a set of abstract instances that can help users understand and 
debug their query in practice. 
To evaluate the usefulness of the set of abstract instances (the minimal c-solution returned by the algorithms, providing a proxy for the universal solution),
we report one case study on the same real-world dataset as the performance evaluation from an undergrad database course.
We pick two most complex standard solution queries from an assignment each with one wrong query from student submissions.
Table~\ref{tab:case-study-queries} shows the solution queries, wrong queries, and the universal solution for the difference query of the standard and wrong queries. 

The universal solution captures different errors in the wrong query. To compare, we use the ground instances that serve as ``counterexamples'' 
for the wrong queries by a previous system~\cite{MiaoRY19} based on a randomly generated testing database instance.

For $Q_1$ (the same as our running example), the first and the second c-instances pinpoint that if the drinker's first name is not `Eve' but has `Eve' as its prefix. While the first c-instance does not contain the first name condition, it shows that if all three bars serve the same beer at different prices, the query would go wrong. Note that if we add to the last instance the condition $\neg (d_1 \sql{LIKE} \text{`Eve\textvisiblespace\%'})$, it is still a satisfying c-instance, but it is not minimal because its coverage is the same as the second c-instance. 
In comparison, while the ground instance by \cite{MiaoRY19} (as in Figure~\ref{fig:running}) is in the represented world of the first c-instance, it does not highlight that the reason behind the wrong query result is that the prices are ordered in a particular way, but the actual values are unimportant.
 

For $Q_2$, the c-instances in the universal solution indicate that the query would go wrong if there is a drinker frequents to a bar that does not serve any beer, no matter the drinker likes a beer or not (the 1st and 3rd instances). This may pinpoint the error that the Likes table does not interact with the Serves table. Furthermore, the 2nd, 5th, and 6th c-instances imply that if there is a beer served at a bar, to make the query return a wrong result, the drinker should not frequent this bar, which could be interpreted as the correct solution uses the frequents table together with negation in a different way. Actually, the wrong query joins Frequents with Serves, while the correct solution joins Likes with Serves. Therefore, the universal solution provides different perspectives in understanding the query and formulates a hint on how to modify the wrong query.
The ground instance by \cite{MiaoRY19} only consists of four tuples: \Drinker(``Bryan'', ``39934 Main St.''),
\Beer(``Amstel'', ``A. Brewer''), \Bar(``The Edge'', ``802 Morris St.''), and \Frequents(``Bryan'', ``The Edge'', 3), which is in the represented world of the third c-instance from our universal solution in Table~\ref{tab:case-study-queries}. Such a simple counterexample might be less helpful for users to understand why the query goes wrong, as one would benefit from the explicit conditions with negation.

\begingroup\addtolength{\jot}{-1ex}
\begin{table*}[t]
  {    
  \centering
    \scriptsize
    \begin{tabular}{|p{13em}|p{19em}|p{46em}|}
      \hline
      \textbf{Query description}& \textbf{Queries (DRC)} & \textbf{c-instances} \\\hline
      \multirow{3}{13em}{$\query_1$: for each beer liked by any drinker whose first name is Eve, find the bars that serve this beer at the highest price} &
\multirow{3}{*}{
\parbox{19em}{
Correct query: $Q_A$ in Figure~\ref{fig:q_correct}.
\\
Wrong query: $Q_B$ in Figure~\ref{fig:q_incorrect}.}
}
    & 
    \begin{minipage}{0.6\textwidth}
    $I_0$ in Figure~\ref{fig:c-instance}

\end{minipage}
 \\\cline{3-3}
    & &     \begin{minipage}{0.6\textwidth}
    $I_1$ in Figure~\ref{fig:c-instance-1}
\end{minipage}
     \\\cline{3-3}
& & \begin{minipage}{0.6\textwidth}
$ 
\begin{aligned}
&\Drinker(d_1, *), \Drinker(d_2, *), \Beer(b_1, *), \Bar(x_1, *), \Bar(x_2, *), \Likes(d_1, b_1), \Serves(x_1, b_1, p_1),\\& \Serves(x_2, b_1, p_2), d_1 \sql{LIKE} \text{`Eve\%'} \land \neg (d_1 \sql{LIKE} \text{`Eve\textvisiblespace\%'}) \land \neg(\Likes(d_2, b_1))  \land p_1 < p_2
\end{aligned}$
\end{minipage}
     \\\hline

           \multirow{8}{10em}{$\query_2$: Find names of all drinkers who frequent only bars that serve some beer they like} &
           
\multirow{8}{12em}{
\begin{minipage}{0.6\textwidth}
$\begin{aligned}
&\text{Correct Query: }\\
&Q_{2A}=\{(d_1)  \ \mid \ 
         \exists a_1 \big( \Drinker(d_1, a_1) \land\\
         &\forall x_1 \forall t_1 \big( \neg \Frequents(d_1, x_1, t_1) 
         \lor \exists b_1, p_1 \\& (\Serves(x_1, b_1, p_1) \land   \Likes(d_1, b_1) ) \big) 
         \Big)\} \notag \\
         \\
&\text{Wrong Query:} \\
     &Q_{2B}=\{(d_1)  \ \mid \ 
          \exists a_1 \Big( \Drinker(d_1, a_1) \land\\
         &\forall b_1 \big ( \forall t_1, t_1, p_1( \neg \Frequents(d_1, x_1, t_1) \lor \\
         &\neg \Serves(x_1, b_1, p_1) )
         \lor  \Likes(d_1, b_1) \big) \Big)\} \notag \\
         &\text{Showing universal solution for } Q_{2B} - Q_{2A}
\end{aligned} $
     \end{minipage}
\     }
    & 
  \begin{minipage}{0.6\textwidth}
$ \begin{aligned} &\Drinker(d_1, *), \Beer(b_1, *), \Bar(x_1, *), \Likes(d_1, b_1), \Frequents(d_1, x_1, t_1)\\
\end{aligned}$
\end{minipage}
    \\ \cline{3-3}
 &
 &  
\begin{minipage}{0.6\textwidth}
$ 
\begin{aligned} &\Drinker(d_1, *), \Beer(b_1, *), \Bar(x_1, *), \Bar(x_2, *), \Likes(d_1, b_1),
\Serves(x_1, b_1, p_1), \\
&\Frequents(d_1, x_2, t_1),
\neg(\Frequents(d_1, x_1, t_1))
\end{aligned}$
\end{minipage}
 \\\cline{3-3}
    &  &   
\begin{minipage}{0.6\textwidth}
$ 
\begin{aligned} &\Drinker(d_1, *), \Beer(b_1, *), \Bar(x_1, *),  \Frequents(d_1, x_1, *),
\end{aligned}$
\end{minipage}
    \\\cline{3-3}
    &  &
\begin{minipage}{0.6\textwidth}
$ 
\begin{aligned} &\Drinker(d_1, *), \Beer(b_1, *), \Bar(x_1, *), \Bar(x_2, *), \Likes(d_1, b_1),\\ &\Serves(x_1, b_1, p_1), \Frequents(d_1, x_1, t_1), \Frequents(d_1, x_2, t_1)
\end{aligned}$
\end{minipage}
     \\\cline{3-3}
& & 
 \begin{minipage}{0.6\textwidth}
$ 
\begin{aligned} &\Drinker(d_1, *), \Beer(b_1, *), \Bar(x_1, *), \Bar(x_2, *), \Serves(x_1, b_1, p_1), \Frequents(d_1, x_2, t_1),
\\& \neg(\Likes(d_1, b_1)) \land \neg(\Frequents(d_1, x_1, t_1))
\end{aligned}$
\end{minipage}
 \\\cline{3-3}
& & \begin{minipage}{0.6\textwidth}
$ 
\begin{aligned} &\Drinker(d_1, *),\Drinker(d_2, *), \Beer(b_1, *), \Bar(x_1, *), \Bar(x_2, *), \Likes(d_2, b_1),\\ &\Serves(x_1, b_1, p_1), \Frequents(d_2, x_2, t_1), \neg(\Frequents(d_2, x_1, t_1)) \land \neg(\Likes(d_1, b_1))
\end{aligned}$
\end{minipage}

 \\\cline{3-3}
& & \begin{minipage}{0.6\textwidth}
$ 
\begin{aligned} &\Drinker(d_1, *), \Beer(b_1, *),\Beer(b_2, *), \Bar(x_1, *), \Bar(x_2, *), \Likes(d_1, b_1),\\ &\Serves(x_1, b_1, p_1), \Frequents(d_1, x_1, t_1), \Frequents(d_1, x_2, t_1),
\neg(\Likes(d_1, b_2))
\end{aligned}$
\end{minipage}


\\\hline

    \end{tabular}
    \caption{Queries used in the case study; universal solutions generated using \disjAdd{}, $limit=10$. Note that the symbols in queries (representing the query variables) and the symbols in c-instances (representing labeled nulls) are not the same.}
    \label{tab:case-study-queries}
    \vspace{-8mm}
    }
\end{table*}
\endgroup

\vspace{-4mm}

\common{
\subsection{User Study}\label{sec:user-study}
We conducted a user study for the Beers dataset to evaluate: 
(R1) how effective our approach is for explaining and understanding bugs in queries,
and (R2) whether completeness as a quality metric is helpful.
For (R1), we specifically compare our approach (c-instances) with {\bf concrete instances \cite{MiaoRY19}}  having constant values. Note that our approach takes only the queries and the schema as input, whereas \cite{MiaoRY19} also takes a database instance as input and outputs a sub-instance as a concrete counterexample%
\footnote{\common{In a pilot study we also compared these approaches against a baseline of not providing any instances (c-instance or concrete).
We found that study questions involving this baseline significantly increased the length and difficulty of the survey demanding higher participant efforts to the point of discouraging participation, and that from the preliminary results we collected, participants did much better with the help of instances.
Hence, in the final user study we excluded the baseline, but asked the question: \emph{``When you learn SQL queries in the future, would you like to see the example instances shown in this survey to help you understand incorrect queries?''}
All 64 participants answered yes.}}.

\mypar{Participants} We recruited 64 participants, including 22 graduate students from CS departments and 42 undergraduate students from an undergraduate database course. 
Participation was voluntary and anonymous, though the undergraduates were offered small souvenirs as a reward for their participation (we did not get enough responses from the undergraduates in our initial pilot surveys).
The undergraduate students were already familiar with the schema of the Beers dataset from an earlier homework;
while for the graduate students, we explained the schema details and also asked about their familiarity with SQL.
We note that the undergraduate students have also been exposed to the tool using concrete instances developed by~\cite{MiaoRY19} (but only for relational algebra queries); because of this familiarity, concrete instances might hold a slight advantage over c-instances for these students. Half of the graduate students (11 out of 22) graduate students declared high familiarity with SQL queries and the rest reported moderate or low familiarity; our observations for both groups were similar in this study, therefore we report the overall statistics for graduate students. 

\mypar{Tasks}
We asked all participants to spot errors in two SQL queries (each has two major errors, see Table~\ref{tab:user-study-queries}) querying the Beers database, with the help of either our c-instances or concrete instances from \cite{MiaoRY19}. 
We provided each participant with one query followed by c-instances and other query followed by concrete instances as counterexamples, randomly dividing them into two groups:  one group saw $(Q_1$, c-instances$)$  + $(Q_2$, concrete instances$)$ 
and the other saw $(Q_1$, concrete instances$)$  + $(Q_2$, c-instances$)$. The order of showing these two questions for both groups was chosen at random to avoid any familiarity bias against either c-instances or concrete instances. 
Instead of showing completely abstract c-instances, we added an example concrete value to each variable in the c-instance, showing one way that it can be grounded.
This is a trivial extension done to help alleviate novices' potential discomfort with seeing symbols and conditions alone.
This approach somewhat blurs the line between c-instances and concrete instances, but faithfully represents how in practice c-instances would be deployed in an educational setting.
Then, for each query we have the {\em treatment group} (with c-instances as explanation) and the {\em control group} (with concrete-value-only instances as explanation).
To study (R2), following the task involving the first c-instance above, we presented a second c-instance for the same query but with a different coverage (which would illustrate a different error), and asked the participant what errors they found upon seeing both c-instances, and whether they felt the second c-instance provided additional help.
(Note that \cite{MiaoRY19} and other related work, there is no option for generating additional concrete instances that illustrate different errors in the same query.)
At the end of the study, back to (R1), we also asked the participants about their preferences between c-instances and concrete instances for spotting errors in queries.

\mypar{Results and analysis}
Objectively, we evaluate the user performance by the number of errors they spotted.
Figures~\ref{fig:user-study-errors} show the percentage of users who failed to spot any error, spotted one error, and spotted both errors in each group.
For example, consider the last three bars in the left sub-figure in Figure~\ref{fig:user-study-errors}, which show the overall statistics (combining both queries) for all undergraduate participants.
Showing the concrete instance alone is already quite helpful: only 31\% of the users failed to find an error (``total-conc'').
Going from ``total-conc'' to ``total-CI1,'' we see a clear performance improvement among users who were shown c-instances: percentage of the users failing to find an error goes down to 19\%.
Going further from ``total-CI1'' to ``total-CI2,'' we see that as soon as users are show a second c-instance, practically all of them were able to spot at least one error, and the majority (64.3\%) of them indeed spot both errors in the query; in contrast, no users were able to spot both errors with only an concrete instance (``total-conc'').
Similar conclusions can be drawn from per-query statistics (shown by the first two batches of three bars in the left part of Figure~\ref{fig:user-study-errors}) as well as from statistics for graduate students (shown in the right part of Figure~\ref{fig:user-study-errors}).
Overall, these results convincingly show that for (R1), c-instances hold a clear advantage over concrete instances in objectively improving participants' performance in spotting errors in queries; and for (R2), showing multiple c-instances with different coverage dramatically improves participants' ability in spotting remaining errors in queries.

A number of other observations from these results are worth noting but largely confirms intuition.
First, $Q_2$ was easier to debug than $Q_1$.
Second, graduate students overall perform better than undergraduates.
The coupling of these factors explains why we did not see any difference from ``Q2-conc'' to ``Q2-CI1'' in the right part of Figure~\ref{fig:user-study-errors}; apparently most graduate students got enough help from the concrete instance in order to spot at least one error in the simpler $Q_2$.
Nonetheless, they still needed the help with an additional c-instance to uncover the second error.

Figures~\ref{fig:user-study-preference} and~\ref{fig:user-study-second-cins} summarize the subjective responses from participants regarding their preference for c-instances vs.\ concrete instances, and their opinion on the usefulness of additional c-instances.
A clear majority of the participants found the additional c-instance useful per Figure~\ref{fig:user-study-second-cins}.
However, from Figure~\ref{fig:user-study-preference}, it is apparent that many participants prefer viewing concrete instances, despite the fact that they perform objectively better with the help of c-instances.
This preference is stronger among undergraduates---only a third preferred c-instances, compared with more than a half for concrete instances.
A relatively lower fraction of graduate students---but still a half of them---preferred concrete instances.
It would be interesting to conduct additional study to pinpoint their reluctance to embrace c-instances despite their objective advantages, but there are several possible explanations.
First, the abstraction provided by variables and conditions the c-instances may be seen as more intimidating, especially for undergraduates.
This conjecture is corroborated by some of the free-form feedback comments we received.
Second, as mentioned earlier in this section, the undergraduates already had some familiarity working with concrete instances before this user study.
Overall, the fact that still about a third of the participants preferred c-instances shows that there is a sizable and compelling demand for this approach.
We also believe we can mitigate some of the reluctance in this user base with improved interfaces and familiarity. 
}

\begin{table}[t]
  {    
  \centering
    \scriptsize
    \begin{tabular}{|p{18em}|p{18em}|}
      \hline
      \textbf{Query description}& \textbf{Wrong Queries}  \\\hline
      $\query_1$: for each beer liked by any drinker whose first name is ``Eve'', find the bars that serve this beer at the highest price &
$\query_B$ in Figure~\ref{fig:sql-queries} (our running example)\\\hline
$\query_2$: Among the drinkers who frequent ``The Edge'', find the names of those who do not like ``Erdinger''. & 
{\bf SELECT DISTINCT} S.beer~~~~
{\bf FROM} Serves S, Likes L~~~~
{\bf WHERE} S.bar = 'Edge' {\bf AND} S.beer = L.beer
  {\bf AND} L.drinker <> 'Richard';
\cut{
\begin{lstlisting}[
              frame=none,
              language=SQL,
              showspaces=false,
              basicstyle=\ttfamily,
              numbers=none,
              commentstyle=\color{gray}
            ]SELECT DISTINCT S.beer
FROM Serves S, Likes L
WHERE S.bar = 'Edge' AND S.beer = L.beer
  AND L.drinker <> 'Richard'; 
    \end{lstlisting}
}    
 \\\hline
     \end{tabular}
  }
    \caption{\common{Queries used in the user study.}}
    \label{tab:user-study-queries}
    \vspace{-8mm}
\end{table}



\begin{figure}
  \centering
  \includegraphics[width=0.98\linewidth]{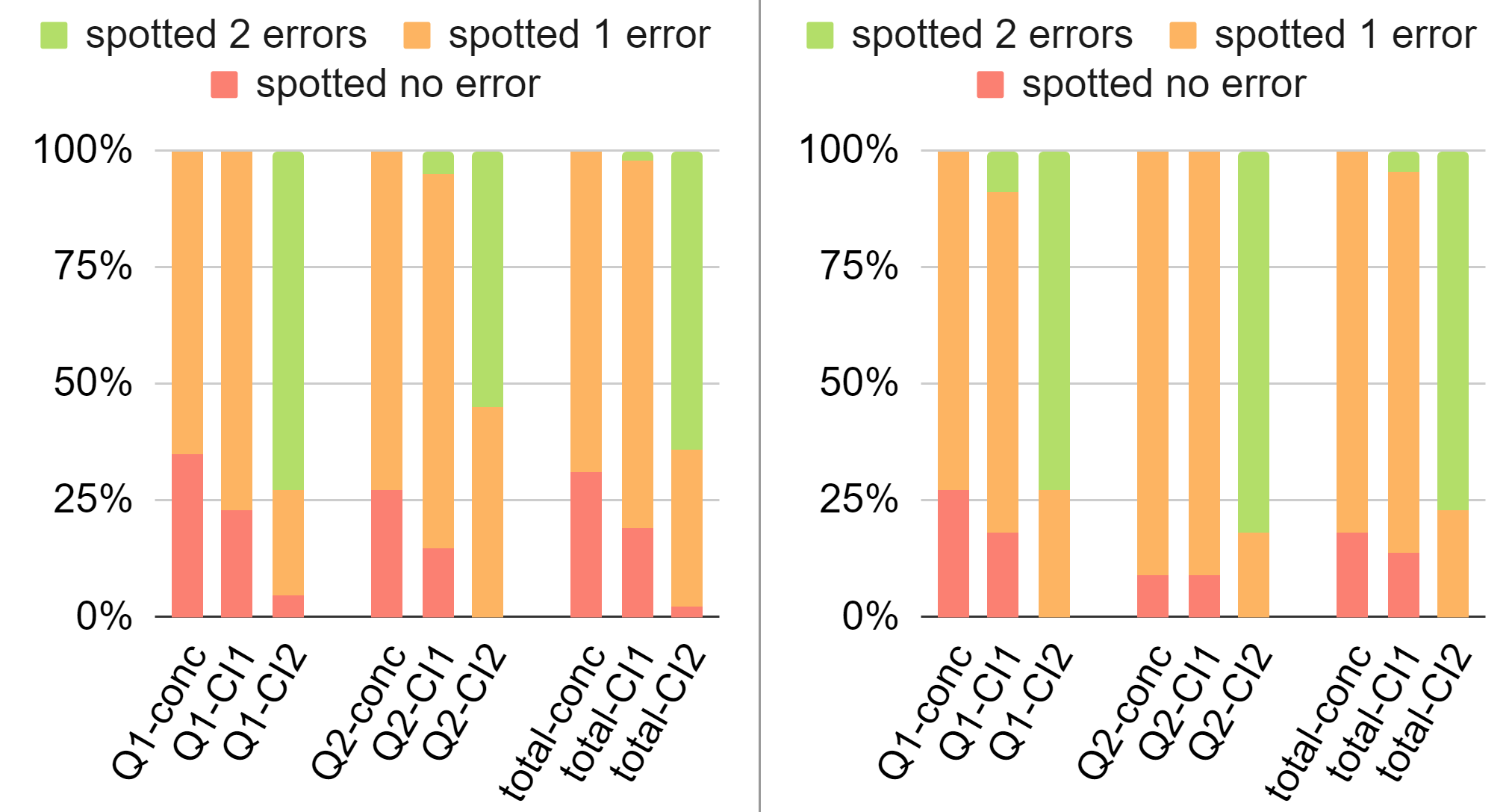}
  \vspace{-5mm}
  \caption{\label{fig:user-study-errors}\small \common{User performance on spotting errors; *-conc: concrete instance only, *-CI1: the first C-Instance, *CI2: the second C-Instance (left: undergrad, right: graduate).}}
\vspace{-3.5mm}
\end{figure}

\begin{figure}
\begin{minipage}{0.42\linewidth}
  \centering
  \includegraphics[width=\linewidth]{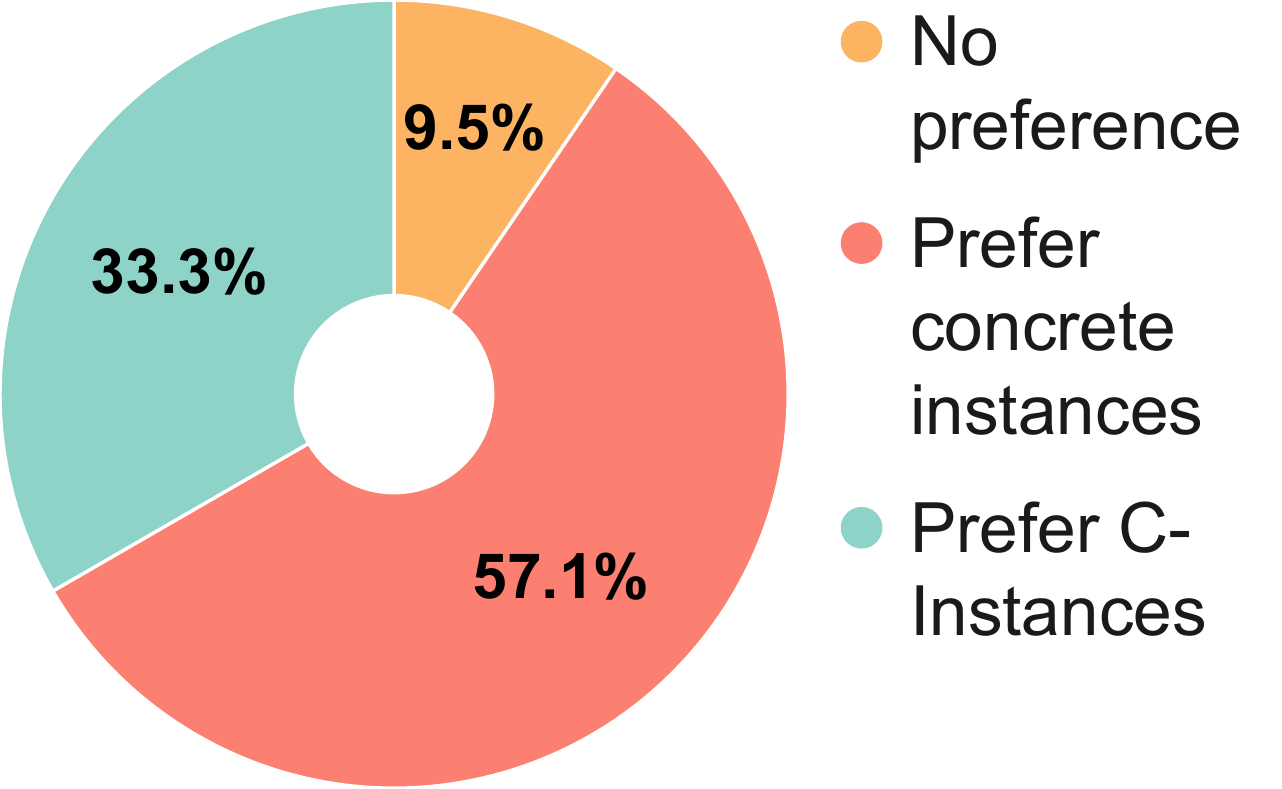}
\end{minipage}
  \hfill
\begin{minipage}{0.42\linewidth}
  \centering
  \includegraphics[width=\linewidth]{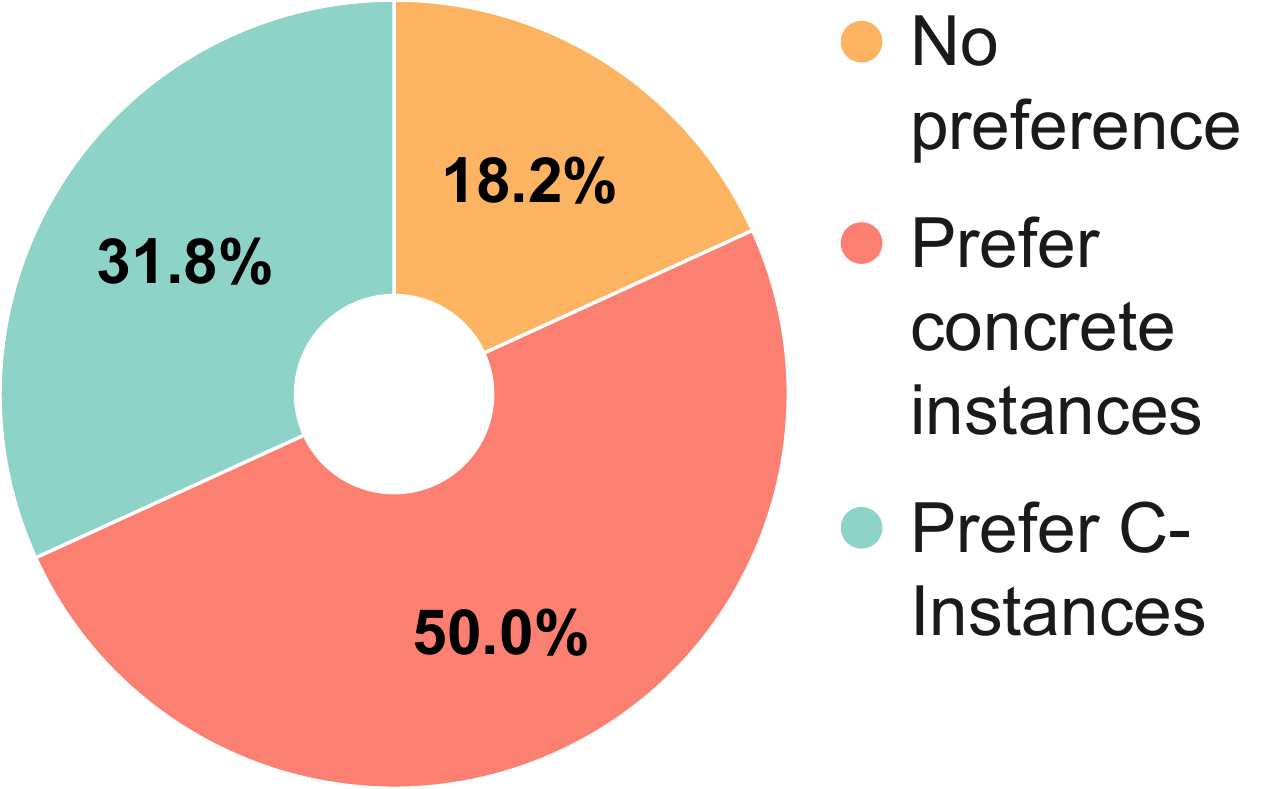}
\end{minipage}
\caption{\label{fig:user-study-preference}\small \common{Preference on explanation types (left: undergraduate, right: graduate)}}
\vspace{-3.5mm}
\end{figure}
\begin{figure}
\begin{minipage}{0.42\linewidth}
  \centering
  \includegraphics[width=\linewidth]{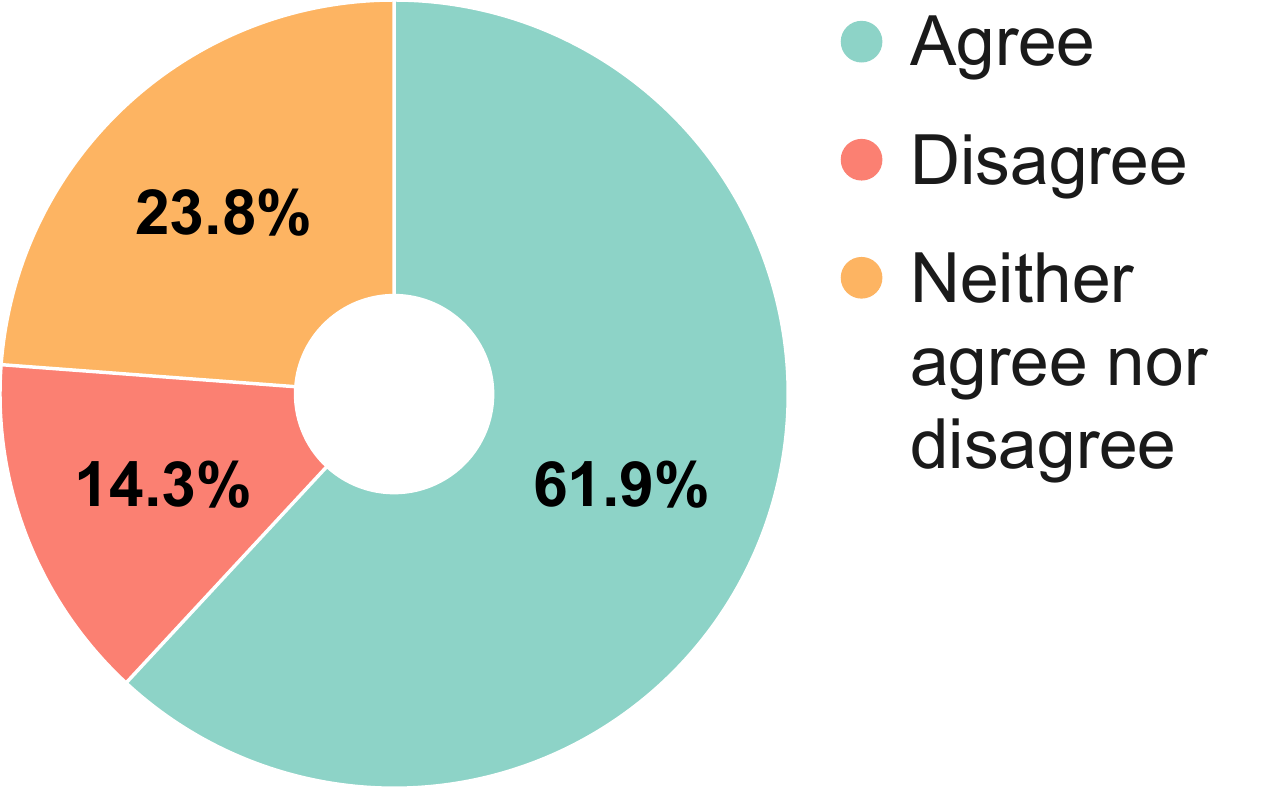}
\end{minipage}
  \hfill
\begin{minipage}{0.42\linewidth}
  \centering
  \includegraphics[width=\linewidth]{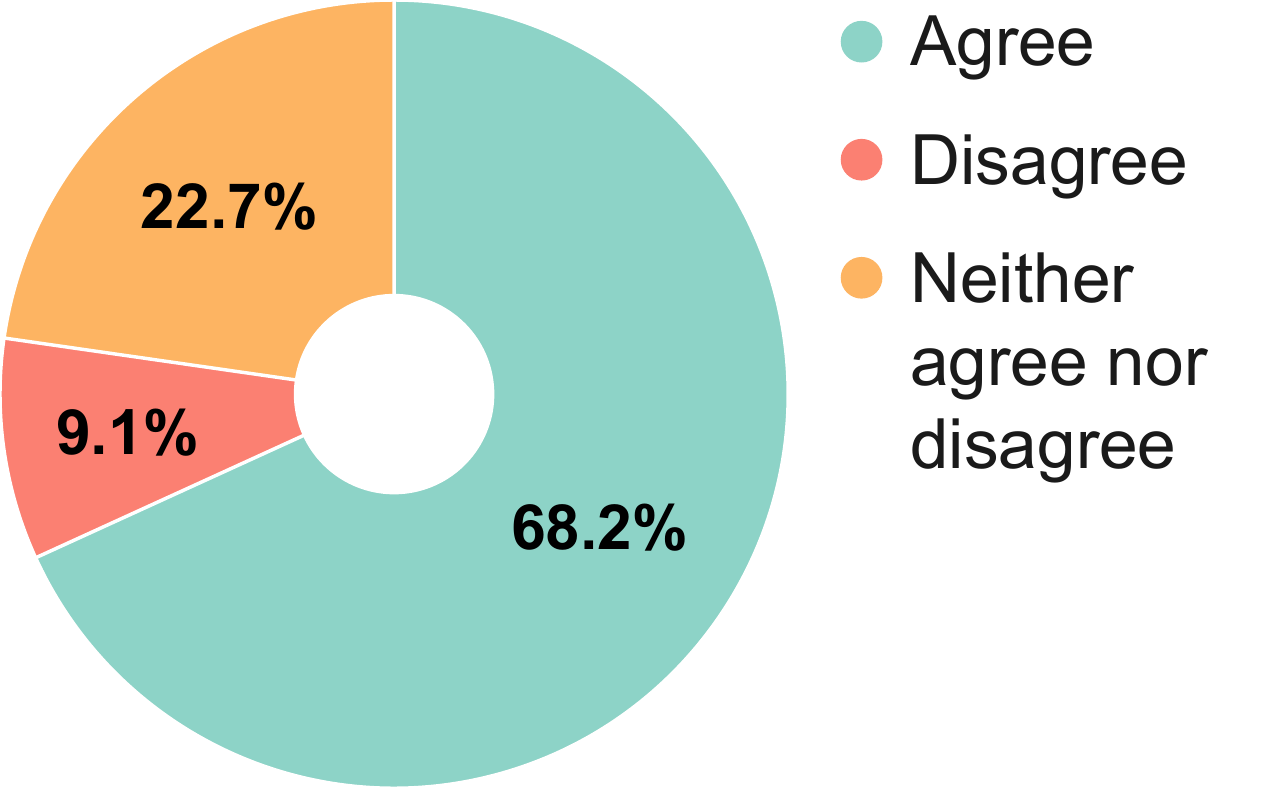}
\end{minipage}
\caption{\label{fig:user-study-second-cins}\small \common{User feedback on ``The second C-Instance provided additional help'' (left: undergraduate, right: graduate)}}
\vspace{-3.5mm}
\end{figure}

\vspace{-1mm}
\section{Conclusions and Future Work}\label{sec:conclusions}
We have defined and studied the 
problem of compact query characterization using the coverage of abstract c-instances. 
We have devised algorithms and optimizations for computing such characterizations building on the concept of chase and utilizing the structure of the syntax tree. 
We experimentally showed that our approach is effective at finding c-instances that characterize the query and examined the effect of query complexity and parameter changes on the scalability of our approach. 
In future work, we plan to study the 
development of further optimizations for finding such solution for more query classes different properties of c-instances.  
In this paper we showed that the problem of finding a universal solution is poly-time for CQ$^\neg$ queries, while the decision version is undecidable for general DRC queries: understanding the computability and complexity for universal solutions for other query classes in between is another interesting research direction. 
\revb{Finally, while our model can support queries with the same final aggregate and different bodies by removing the aggregate, 
extending our model to support arbitrary aggregate queries is another intriguing direction of future work.}

\begin{acks}
This work is supported by the NSF awards IIS-1552538, IIS-1703431, IIS-2008107, IIS-1814493, and by the NIH award R01EB025021.
\end{acks}

\bibliographystyle{ACM-Reference-Format}
\bibliography{main}


\clearpage
\appendix
\section{Appendix}
This appendix is part of the full version of the paper ``Understanding Queries by Conditional Instances''.

\subsection{Omitted Pseudocodes and Proof}
We given the pseudo code of the algorithms that was omitted from Section \ref{sec:complete-algo}. Namely, Algorithms \ref{alg:or-op} and \ref{alg:forall-op} that handle the disjunction connective and universal operator in Algorithm \ref{alg:forall-op} (Lines 11 and 15, respectively). We further give the pseudo code and proof of correctness for the procedure that checks whether a c-instance satisfies a query. This procedure, called \proc{Tree-SAT}, is used in Line 13 of Algorithm \ref{alg:rd-tree-chase-main}.

\begin{algorithm}[h]\caption{Handle-Disjunction}\label{alg:or-op}
  \centering
{\small
\begin{codebox}
  \Procname{$\proc{Handle-Or}(\schemaOf{\rel}, Q, I, f)$}
  
    \li ConjTrees $\gets \proc{Expand-DisjTree}(Q.root)$
    \li \For $T \in ConjTrees$
    \Do
        \li res.extend($\proc{Tree-Chase-BFS}($\schemaOf{\rel}$, T, I, f)$)
    \End
\li \Return $res$
\end{codebox}
}
\end{algorithm}

\begin{algorithm}[h]\caption{Handle-Existential-Operator}\label{alg:exists-op}
  \centering
{\small
\begin{codebox}
  \Procname{$\proc{Handle-Existential}(\schemaOf{\rel}, Q, I, f)$}
  
    \li \For $x \in I.domain(Q.root.variable)$
    \Do
        \li $g \gets f \cup \{Q.root.variable \to x\}$
        \li res.extend($\proc{RD-Tree-Chase-Naive}(\schemaOf{\rel}, Q.root.child, I, g, limit)$)
    \End
    \li Create a fresh labeled null $y$ in the domain of $Q.root.variable$\label{alg:rd-tree-chase:line:exists-add-fresh-start}
    \li $g \gets f \cup \{Q.root.variable \to y\}$
    \li $I.domain(Q.root.variable) \gets I.domain(Q.root.variable) \cup \{y\}$
    \li res.extend($\proc{RD-Tree-Chase-Naive}(\schemaOf{\rel}, Q.root.child, I, g, limit)$)\label{alg:rd-tree-chase:line:exists-add-fresh-end}
\li \Return $res$
\end{codebox}
}
\end{algorithm}

\begin{algorithm}[t]\caption{Handle-Universal}\label{alg:forall-op}
  \centering
{\footnotesize
\begin{codebox}
  \Procname{$\proc{Handle-Universal}(\schemaOf{\rel}, Q, I, h, limit)$}
  
    \li res $\gets []$, $Ilist \gets [I]$
    \li \If $I.domain(Q.root.variable) = \emptyset$
        \Then
            \li $res.append(I)$
    \li \Else
        \li \For $x \in I.domain(Q.root.variable)$
        \Do
            \li $g \gets h \cup \{Q.root.variable \to x\}$
            \li $cur \gets []$
            \li \For $J_1 \in Ilist$
            \Do
                \li $Ilist' \gets \proc{Tree-Chase-BFS}(\schemaOf{\rel}, Q.root.child, J_1, g, limit)$
                \li \For $J_2 \in Ilist'$
                \Do
                   \li \If $\consistent(J_2)$ 
                    \Then
                      \li cur.append($J_2$)
                    \End
                \End
            \End
            \li $Ilist \gets cur$
        \End
        \li $res = res \cup Ilist$ 
    \End
    \li Create a fresh labeled null $y$ in the domain of $Q.root.variable$\label{alg:rd-tree-chase:line:forall-add-fresh-start}
    \li $g \gets h \cup \{Q.root.variable \to y\}$
    \li $I.domain(Q.root.variable) \gets I.domain(Q.root.variable) \cup \{y\}$
    \li $cur \gets []$
    \li \For $J_1 \in Ilist$
    \Do
        \li $Ilist' \gets \proc{Tree-Chase-BFS}(\schemaOf{\rel}, Q.root.child, J_1, g, limit)$
        \li \For $J_2 \in Ilist'$
        \Do
           \li \If $\consistent(J_2)$ 
            \Then
              \li cur.append($J_2$)
            \End
        \End
    \End
    \li $res = res \cup cur$ 
\li \Return $res$
\end{codebox}
}
\end{algorithm}




\mypar{Checking if a c-instance satisfies a query}
The $\proc{Tree-SAT}$ procedure is used in Algorithm \ref{alg:rd-tree-chase-main} (Line 13) to verify that a c-instance satisfies the syntax tree of a query. 
We next describe the pseudo code of the procedure and prove its correctness. The algorithm gets as input the query syntax tree $Q$, the current c-instance $I$, and the current homomorphism $h$. It starts by adding $\exists$ quantifier for every free variable in $Q$ (Lines 1--3) and checks whether $Q$ is a single atom. If it is, the algorithm checks whether this atom is negated or an atomic condition and whether it is contained in the condition of $I$, $\universalcond(I)$, or if the atom is not negated and is mapped to a tuple in $I$. In both cases the algorithm returns True and returns False otherwise (Lines 4--8). If the root of $Q$ is a $\land$ node the algorithm checks recursively whether both of its subtrees are satisfied by $I$ and $h$ (Lines 9--10). 
If the root of $Q$ is a $\lor$ node the algorithm checks recursively whether at least one of its subtrees are satisfied by $I$ and $h$ (Lines 11--12).
If the root of $Q$ is a $\exists$ node the algorithm checks recursively whether there is a mapping for the quantified variable to a labeled null $x$ that satisfies the subtree of the quantifier with $I$ and $h$ extended with this mapping (Lines 13--18).
If the root of $Q$ is a $\forall$ node the algorithm checks recursively whether all mappings to labeled nulls $x$ for the quantified variable satisfy the subtree of the quantifier with $I$ and $h$, extended with these mappings (Lines 19--24).

\begin{algorithm}[t]\caption{Tree-SAT}\label{alg:tree-sat}
{\small
\begin{codebox}
\algorithmicrequire
$Q$: a syntax tree of a DRC query; \\
$I$: current c-instance; \\
$f$: current mapping from $\varset_Q \cup \constset_Q \rightarrow \lnset_I \cup \constset_I$.\\
\algorithmicensure True iff $I$ satisfies $Q$ \\
  \Procname{$\proc{Tree-SAT}(Q, I, f)$}
  \li \If $Q$ has free variables
  \Then
    \li \For $x \in FreeVar(Q)$
      \Do
          \li $Q \gets \exists x Q(x)$
      \End
  \End
    \li \If Q.root is an atom
      \Then 
        \li \If (Q.negated and $\neg f(Q.root.atom) \in \universalcond(I)$) or \\
        (Q = $x \circ y$ and $f(x) \circ f(y) \in \universalcond(I)$) or \\
        (Q.negated = False and $f(Q.root.atom) \in I)$ 
        \Then 
            \li \Return True\label{l:root_in_instance}
            \li \Else
                \li \Return False
            \End
        \End
  \li \If Q.root.operator $\in \{\land\}$
  \Then
    \li \Return \\ $\proc{Tree-SAT}(Q.root.lchild, I, f) \land $ $\proc{Tree-SAT}( Q.root.rchild, I, f)$\label{l:recurse_and}
 \li \ElseIf Q.root.operator $\in \{\lor\}$
  \Then
    \li \Return \\ $\proc{Tree-SAT}(Q.root.lchild, I, f) \lor $ $\proc{Tree-SAT}( Q.root.rchild, I, f)$\label{l:recurse_or}
  \li \ElseIf Q.root.operator $\in \{\exists\}$
  \Then
    \li \For $x \in I.domain(Q.root.variable)$\label{l:iter_demain_exists}
    \Do
        \li $g \gets f \cup \{Q.root.variable \to x\}$
        \li \If $\proc{Tree-SAT}( Q.root.child, I, g)$ 
        \Then
            \li \Return True\label{l:exists_in_instance}
        \End
    \End
    \li \Return False
  \li \ElseIf Q.root.operator $\in \{\forall\}$
  \Then 
    \li \For $x \in I.domain(Q.root.variable)$\label{l:all_mappings_in_instance}
    \Do
        \li $g \gets f \cup \{Q.root.variable \to x\}$
        \li \If $not\ \proc{Tree-SAT}( Q.root.child, I, g)$ 
        \Then
            \li \Return False
        \End
    \End
    \li \Return True
  \End
\end{codebox}
}
\end{algorithm}

\amir{added:}
\begin{proposition}\label{prop:soundness-conj-sat}
Given $I$, and $Q$, Algorithm~\ref{alg:tree-sat} returns True for $Q$ and $I$ iff $I$ satisfies $Q$.
\end{proposition}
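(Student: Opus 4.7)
The proposition naturally calls for a \emph{structural induction on the syntax tree of $Q$}. I would first observe that the initial step in Lines~1--3 closes the query by existentially quantifying every remaining free variable; since satisfaction of $Q$ by $I$ is defined (via Definition~\ref{def:satisfy}) in terms of $\Rep(I)$ being non-empty and every $K \in \Rep(I)$ satisfying $Q$, wrapping free variables in $\exists$ agrees with the definition of $Q(D) \neq \emptyset$ in Definition~\ref{def:drc}. After this preprocessing the induction is on the internal structure of the closed tree.

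For the base case, $Q$ is a single DRC atom (or its negation) evaluated under the current homomorphism $f$. I would split into the three leaf forms: (i) a positive relational atom $R(y_1, \ldots, y_k)$, which is satisfied by $I$ iff $R(f(y_1), \ldots, f(y_k))$ is a tuple of the v-table for $R$, because every mapping $\mu$ in Definition~\ref{def:possible-world} sends this tuple into every $K \in \Rep(I)$; (ii) a negated relational atom $\neg R(\cdots)$, which is satisfied iff $\neg R(f(y_1), \ldots, f(y_k))$ is a conjunct of the global condition $\phi(I)$, since only conditions in $\phi(I)$ are guaranteed to hold on all $\mu$-images; and (iii) a comparison $x \mathrel{op} y$, which is satisfied iff $f(x) \mathrel{op} f(y)$ is in $\phi(I)$, again by the definition of $\Rep(I)$. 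These three cases exactly match the disjunctive test in Line~\ref{l:root_in_instance}.

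The inductive step handles each root connective separately. For $\land$ (Line~\ref{l:recurse_and}) and $\lor$ (Line~\ref{l:recurse_or}) I would invoke the IH on the two children and appeal to the standard pointwise semantics over $\Rep(I)$. For $\exists x$ (Line~\ref{l:iter_demain_exists}) I would show that iterating $x$ over the labeled nulls in $I.\mathit{domain}(x)$ is sound and complete, because every tuple present in every $K \in \Rep(I)$ originates from some labeled null already appearing in $I$, and conversely any witness found in $I$ extends to a witness in every $K$. For $\forall x$ (Line~\ref{l:all_mappings_in_instance}) the argument is symmetric: $I \models \forall x\, Q'(x)$ iff $Q'(x)$ is satisfied by $I$ under every extension $f \cup \{x \to n\}$ for $n$ in $I.\mathit{domain}(x)$, which is precisely what the loop checks. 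Combining base and inductive cases yields the iff claim.

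The step I expect to be the main obstacle is the $\lor$ case together with $\exists$, because the ``for all $K \in \Rep(I)$'' semantics is not in general equivalent to ``$I \models Q_1$ or $I \models Q_2$'' (different $K$'s could satisfy different disjuncts). I would therefore need to argue carefully that the c-instances produced by $\proc{Tree-Chase-BFS}$ on which $\proc{Tree-SAT}$ is invoked are already specialized enough to avoid this gap---in particular, the chase in Section~\ref{sec:disjunct} resolves every $\lor$ into one of the three conjunctive forms, so at the moment $\proc{Tree-SAT}$ is called the stored coverage uniformly picks one disjunct per $\Rep(I)$ mapping. A similar uniformity argument handles $\exists$: the witness labeled null is fixed across $\Rep(I)$. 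Once this invariant is formalized, the IH for $\lor$ and $\exists$ goes through and the proposition follows.
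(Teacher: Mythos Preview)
Your structural-induction plan with the four-way case split on the root connective is exactly what the paper does; the paper's proof is in fact terser than your outline and for each of $\land$, $\lor$, $\exists$, $\forall$ simply applies the induction hypothesis to the children and stops, without the additional justification you sketch.

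The concern you flag about $\lor$ (and implicitly $\exists$) is genuine if ``$I$ satisfies $Q$'' is read strictly as Definition~\ref{def:satisfy}: from ``for every $K\in\Rep(I)$, $K\models Q_1\lor Q_2$'' one cannot in general conclude that a single disjunct works uniformly, which is what Line~\ref{l:recurse_or} checks. The paper's proof does not engage with this at all; it writes the $\lor$ case as ``the algorithm returns True iff one of $Q_1$ or $Q_2$ is satisfied by $I$ and $h$,'' effectively treating ``satisfied by $I$ with $h$'' as the compositional (syntactic) notion for which the decomposition is immediate. Your proposed workaround---appealing to structural invariants of the c-instances produced by \textsc{Tree-Chase-BFS}---cannot rescue the proposition as stated, because the claim is for \emph{arbitrary} $I$, not only chase-generated ones; that argument would establish a strictly weaker statement. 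If you want to match the paper, drop the extra machinery and present the direct compositional induction. If instead you insist on the Definition~\ref{def:satisfy} reading, be aware that only the soundness direction (True $\Rightarrow$ $I\models Q$) survives the $\lor$ case in general---which, incidentally, is all that the call in Line~13 of Algorithm~\ref{alg:rd-tree-chase-main} actually requires.
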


\begin{proof}
We prove that $Q$ is satisfied by $I$ with the homomorphism $h$ iff Algorithm~\ref{alg:tree-sat} returns True for some $Q$, $I$, and $h$ using induction over the size of $Q$. 
If $Q$ is of size $1$, then the algorithm returns True in Line \ref{l:root_in_instance} after checking that the only atom in $Q$ has a mapping to the tuple in $I$ if it is not negated, and if it is negated or an atomic condition, it checks whether the global condition of $I$ contains the tuple or atomic condition mapped to the atom in $Q$. 
For the induction hypothesis, assume that for every $Q$ of size $< n$, Algorithm~\ref{alg:tree-sat} returns True for $Q$ and $I$ iff $I$ satisfies $Q$. Now, suppose $Q$ is of size $n$. 

If $Q = Q_1\land Q_2$, 
Algorithm~\ref{alg:tree-sat} returns True in Line \ref{l:recurse_and} iff both $Q_1$ and $Q_2$ are satisfied by $I$ and $h$. 
According to the induction hypothesis, both $Q_1$ and $Q_2$ are satisfied by $I$ and $h$ iff then the algorithm returns True on both. 

If $Q = Q_1\lor Q_2$, 
Algorithm~\ref{alg:tree-sat} returns True in Line \ref{l:recurse_and} iff one of $Q_1$ or $Q_2$ are satisfied by $I$ and $h$. 
According to the induction hypothesis, $Q_1$ or $Q_2$ are satisfied by $I$ and $h$ iff then the algorithm returns True for one of them. 

If $Q = \exists x.~Q'$, 
$Q$ is satisfied by $I$ and $h$ iff there exists $y \in I.domain(Q.root.variable)$, $Q'$ is satisfied by $I$ and $h \cup \{x \gets y\}$. 
In Line \ref{l:iter_demain_exists}, the algorithm iterates over the domain of the quantified variable, adds it to the mapping list and checks whether replacing the quantified variable with its current mapping satisfies the subtree rooted at the child of the $\exists$ quantifier.
Indeed, since the algorithm tries all mappings of $x$ from the domain, it will try $x \gets y$. 
Thus, Algorithm~\ref{alg:tree-sat} will return True iff $Q$ is satisfied by $I$ and $h$. 

If $Q = \forall x.~Q'$, 
$Q$ is satisfied by $I$ and $h$ iff for all\\ 
$y \in I.domain(Q.root.variable)$, $Q'$ is satisfied by $I$ and $h \cup \{x \gets y\}$. 
In Line \ref{l:all_mappings_in_instance}, the algorithm iterates over the domain of the quantified variable, adds it to the mapping list and checks whether replacing the quantified variable with its current mapping of $x$ satisfies $Q'$. If it does not, it returns False. If all mappings of $x$ satisfy $Q'$, the algorithm returns True. 
According to the induction hypothesis, for each such mapping of $x$, 
Indeed, since the algorithm tries all mappings of $x$ from the domain, Algorithm~\ref{alg:tree-sat} will return True iff $Q'$ with this mapping is satisfied by $I$ and $h \cup \{x \gets y\}$. 
Thus, Algorithm~\ref{alg:tree-sat} will return True iff $Q$ is satisfied by $I$ and $h$. 

\end{proof}

\mypar{Checking if a c-instance is consistent}
To guarantee correctness and reduce the search space, Algorithm \ref{alg:rd-tree-chase-main} verifies that a c-instance is consistent every time adding a c-instance to the queue (Line 13, Line 18). 
As defined in Definition~\ref{def:possible-world},
a c-instance $\cinstance$ is consistent if $\Rep(\cinstance) \neq \emptyset$, which is reduced to deciding the satisfiablity of the global condition of $\cinstance$. In our implementation we used the Z3 SMT solver to support complex constraints involving integers, real numbers, and strings.

\subsection{Additional Examples}
We next provide examples for the algorithms described in the paper.

\begin{Example}[Example of Algorithm \ref{alg:rd-tree-chase-main}]
We demonstrate the operation of Algorithm \ref{alg:rd-tree-chase-main} using the query $\qincorrect-\qcorrect$ shown in Figure~\ref{fig:diff-query} with its syntax tree in Figure~\ref{fig:syntax-tree-minus}. Suppose  $limit = 20$ (basically, the $limit$ does not affect the execution), and the algorithm begins with empty $h_0$ and $\cinstance_0$.

At the beginning (Lines 2-5 in Algorithm~\ref{alg:rd-tree-chase-main}), for each free variable in $\qincorrect-\qcorrect$ ($x_1, b_1$), a labeled null is created and added to the domain of $\cinstance_0$ and then to $h_0$. Now we have $\cinstance_0.\dom(\Bar.name)$ $= \{x_1\}$, $\cinstance_0.\dom(\Beer.name)$ $= \{b_1\}$, $h = \{x_1 \to x_1, b_1 \to b_1\}$.

Then we add $\cinstance_0$ to the queue and start the BFS procedure. The current $\cinstance_0$ 
has not been visited (Lines 10-12), and it does not satisfy the query tree and is consistent (Line 13), 
so we call Algorithm~\ref{alg:rd-tree-chase}.

In Algorithm~\ref{alg:rd-tree-chase}, it first goes into the $\exists$ case, and since the domain of drinker names is empty, we can only create new labeled null $d_1$ and add $d_1 \to d_1$ to $g$, and do the same for $p_1$ when we run algorithm~\ref{alg:rd-tree-chase-main} recursively.
The formula under $\exists d_1, p_1 (\cdots)$ has no quantifiers and only conjunction. 
Therefore, we directly obtain a c-instance from its left branch (Lines 2-7 Algorithm~\ref{alg:rd-tree-chase})
with the tuples  $\Likes(d_1, b_1)$ and $\Serves(x_1, b_1, p_1)$, and the condition $d_1 \ \sql{LIKE} \ 'Eve\%'$.

For the next existential quantifier node $\exists x_2$, however, there are two options: we can either map $x_2$ to the existing labeled null $x_1$ created earlier, or we can create a new $x_2$ and add it to the domain of bar names in the instance. It is the same for $\exists p_2$. Hence, we can reach the node $\land$ below $\exists x_2, p_2$ with four different mappings for $x_2$ and $p_2$: $\{x_2 \to x_1, p_2 \to p_1\}$, $\{x_2 \to x_1, p_2 \to p_2\}$, $\{x_2 \to x_2, p_2 \to p_1\}$, $\{x_2 \to x_2, p_2 \to p_2\}$. Again, we obtain c-instances by adding the atoms on the left branch of the $\land$ node. The resulting instances will be inconsistent if we choose to map $p_2$ to $p_1$, because there is an atomic condition $p_1 > p_2$, which will be false if we map $p_1$ and $p_2$ to the same labeled null. 

Continuing, we use the case $\{x_2 \to x_2, p_2 \to p_2\}$ to illustrate. Then, we come to the $\forall d_2$ node. Note that, currently, we have one labeled null $d_1$ in the domain of drinker names, thus Algorithm~\ref{alg:forall-op} will try first mapping $d_2$ to $d_1$ and traverse the next node (Lines 5-7), where $\forall p_3$ would first map $p_3$ to $p_1, p_2$ one by one and goes to the $\lor$ node, where the disjunction algorithm will enumerate all three cases that convert disjunction to conjunction. Assume that for $\{p_3 \to p_1\}$, we negate the right branch of $\lor$ and keep the left branch the same, 
then one possible conjunction under the mapping from the left branch can be $\Likes(d_1, b_1)$ $\land d_2\ \sql{LIKE} \ \text{`Eve\textvisiblespace\%'}$ $\land \Serves(x_1, b_1, p_1)$ (certainly, we cannot have $\neg \Likes(d_1, b_1)$ or $\neg \Serves(x_1, b_1, p_1)$).
We transform the right branch into $\forall x_3 \forall p_4 (\neg \Serves(x_3, b_1, p_4) \lor p_3 \geq p_4)$ to get negation only on the leaves, and we have:
\begin{enumerate}
    \item $\Serves(x_1, b_1, p_1) \land p_1 \geq p_1$ for $\{x_3 \to x_1, p_4 \to p_1\}$
    \item $\neg \Serves(x_2, b_1, p_1) \land p_1 \geq p_1$ for $\{x_3 \to x_2, p_4 \to p_1\}$
    \item $\neg \Serves(x_1, b_1, p_2) \land p_1 \geq p_2$ for $\{x_3 \to x_1, p_4 \to p_2\}$
    \item $\Serves(x_2, b_1, p_2) \land p_1 \geq p_2$ for $\{x_3 \to x_2, p_4 \to p_2\}$
\end{enumerate}

For $\{p_3 \to p_2\}$, we keep both branches the same. One possible conjunction under the mapping from the left branch can be $\Likes(d_1, b_1)$ $\land \neg(d_2\ \sql{LIKE} \ \text{`Eve\textvisiblespace\%'})$ $\land \neg \Serves(x_1, b_1, p_2)$; 
for the right branch, we can map $x_3$ to $x_1$ and $p_4$ to $p_1$ and thus we get $\Serves(x_1, b_1, p_1)$ $\land p_2 < p_1$. By merging the resulting instances in each $\land$ node, we can obtain the c-instance $\cinstance_2$ 
in Figure~\ref{fig:c-instance-2}.

\end{Example}

\begin{figure}[t]
\centering
    \begin{subfigure}[b]{1.0\linewidth}
     \centering
    \begin{tikzpicture}[scale=.7,
    level 1/.style={level distance=0.6cm, sibling distance=3mm},
    level 2/.style={sibling distance=-10mm, level distance=0.6cm}, 
    level 3/.style={level distance=0.8cm,sibling distance=-5mm},
    level 4/.style={level distance=1.0cm,sibling distance=5mm},
    level 5/.style={level distance=1.0cm, sibling distance=-10mm},
    snode/.style = {shape=rectangle, rounded corners, draw, align=center, top color=white, bottom color=blue!20},
    coverednode/.style = {shape=rectangle, rounded corners, draw, align=center, top color=applegreen, bottom color=applegreen!20},
    logic/.style = {shape=rectangle, rounded corners, draw, align=center, top color=white, bottom color=red!70}]
    \Tree
        [.\node[logic,xshift=0mm] {$\boldsymbol{\forall {d_2, p_3}}$}; 
            [.\node[logic,xshift=0mm] {$\boldsymbol{\land}$}; 
                [.\node[logic, xshift=0mm] {$\boldsymbol{\land}$}; 
                    [.\node[logic] {$\boldsymbol{\land}$}; \node[snode]{$\Likes(d_2, b_1)$}; \node[snode]{($d_2$ \sql{LIKE} 'Eve\textvisiblespace\%')};]
                    \node[snode,xshift=0mm] {$\Serves(x_1, b_1, p_3)$};]
                [.\node[logic,xshift=0mm] {$\boldsymbol{\exists {x_3, p_4}}$};
                    [.\node[logic, xshift=0mm] {$\boldsymbol{\land}$};
                        \node[snode,xshift=0mm] {$\Serves(x_3, b_1, p_4)$}; \node[snode,xshift=0mm] {$p_3 < p_4$};
                    ]
                ]
            ]
        ]
    \end{tikzpicture}
\end{subfigure}
\vspace{3mm}

\begin{subfigure}[b]{1\linewidth}
\centering
\begin{tikzpicture}[scale=.7,
level 1/.style={level distance=0.6cm, sibling distance=0mm},
level 2/.style={sibling distance=-10mm, level distance=0.6cm}, 
level 3/.style={level distance=0.8cm,sibling distance=-12mm},
level 4/.style={level distance=1.0cm,sibling distance=7mm},
level 5/.style={level distance=1.0cm, sibling distance=-7mm},
snode/.style = {shape=rectangle, rounded corners, draw, align=center, top color=white, bottom color=blue!20},
coverednode/.style = {shape=rectangle, rounded corners, draw, align=center, top color=applegreen, bottom color=applegreen!20},
logic/.style = {shape=rectangle, rounded corners, draw, align=center, top color=white, bottom color=red!70}]
\Tree
    [.\node[logic,xshift=0mm] {$\boldsymbol{\forall {d_2, p_3}}$}; 
        [.\node[logic,xshift=0mm] {$\boldsymbol{\land}$}; 
            [.\node[logic, xshift=0mm] {$\boldsymbol{\lor}$}; 
                [.\node[logic] {$\boldsymbol{\lor}$}; \node[snode]{$\neg \Likes(d_2, b_1)$}; \node[snode]{$\neg$($d_2$ \sql{LIKE} 'Eve\textvisiblespace\%')};]
                \node[snode,xshift=0mm] {$\neg \Serves(x_1, b_1, p_3)$};]
            [.\node[logic,xshift=0mm] {$\boldsymbol{\forall {x_3, p_4}}$};
                [.\node[logic, xshift=0mm] {$\boldsymbol{\land}$};
                    \node[snode,xshift=0mm] {$\neg \Serves(x_3, b_1, p_4)$}; \node[snode,xshift=0mm] {$p_3 \geq p_4$};
                ]
            ]
        ]
    ]
\end{tikzpicture}
\end{subfigure}
\vspace{-3.5mm}
\caption{Two of the three conjunctive right subtrees of the syntax tree of $\qincorrect-\qcorrect$ depicted in Figure~\ref{fig:syntax-tree-minus}. The trees represent $\neg A \land B$ and $A \land \neg B$, where $A$ ($B$) is the left (right) subtree of the top $\land$ connective.} \label{fig:conjunction-trees}
\end{figure}
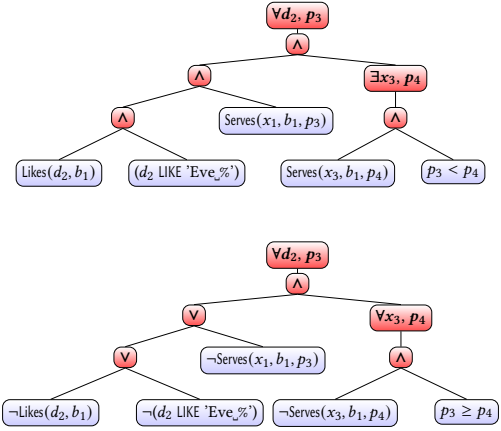

\begin{Example}[Example of the optimized approach in Section \ref{sec:opt-conj-tree-chase}]
A partial result of the algorithm is shown in Figure \ref{fig:conjunction-trees}, showing two out of the three obtained trees for the subtree of the different query $\qincorrect-\qcorrect$ shown in Figure \ref{fig:syntax-tree-minus}. 
The top tree negates the right subtree below the $\lor$ connective, while the bottom negates the left one, representing the two cases where $\neg A \land B$ and $A \land \neg B$. For instance, in the top tree, the formula in the right subtree that was originally $\neg \Serves(x_1, b_1, p_3) \lor (\neg \Likes(d_2, b_1) \land \neg (d_2 \ \sql{LIKE} \ \text{`Eve\textvisiblespace\%'}))$ has been converted into $\Serves(x_1, b_1, p_3) \land (\Likes(d_2, b_1) \land d_2 \ \sql{LIKE} \ \text{`Eve\textvisiblespace\%'})$.
\end{Example}

\vspace{-3mm}
\subsection{Query Details}

\begin{table*}
    \scriptsize
    \begin{tabular}{|c|p{46em}|c|c|c|c|c|}
    \hline
    \textbf{Query description}& \textbf{DRC Query} & Size  & Height &  \# $\forall$+$\exists$ & \# Or & \# Or Below $\forall$ + \# $\forall$    \\\hline
    
    Correct $Q_{1A}$ & $\{(x_{1}, b_{1}) \; \mid \;\exists d_{2} , p_{3} (((Serves(x_{1},b_{1},p_{3}) \land d_{2} \; LIKE \; ``Eve\:\%") \land Likes(d_{2},b_{1})) \land \forall p_{4} , x_{3} (\neg serves(x_{3},b_{1},p_{4}) \lor p_{4} \leq p_{3}))\}$ & 15 & 9 & 10 & 1 & 3 \\\hline
Wrong $Q_{1B}$ & $\{(x_{1}, b_{1}) \; \mid \;\exists d_{1} , p_{1} (((Serves(x_{1},b_{1},p_{1}) \land Likes(d_{1},b_{1})) \land d_{1} \; LIKE \; ``Eve\%") \land \exists x_{2} , p_{2} ((p_{2} < p_{1} \land Serves(x_{2},b_{1},p_{2})) \land x_{1} \; != \; x_{2}))\}$& 17 & 10 & 11 & 0 & 0 \\\hline
$Q_{1A}$-$Q_{1B}$ & $Q_{1A}$-$Q_{1B}$ & 31 & 11 & 20 & 6 & 9 \\\hline
$Q_{1B}$-$Q_{1A}$ & $Q_{1B}$-$Q_{1A}$ & 31 & 11 & 20 & 3 & 3 \\\hline
Correct $Q_{2A}$ & $\{(b_{1}) \; \mid \;\exists tr_1 (Beer(b_{1},tr_1) \land \forall td_1 \neg Likes(td_1,b_{1}))\}$ & 6 & 5 & 4 & 0 & 1 \\\hline
Wrong $Q_{2B}$ & $\{(b_{1})(\exists x_{1} , p_{1} Serves(x_{1},b_{1},p_{1}) \land \neg \exists d_{1} Likes(d_{1},b_{1}))\}$& 7 & 5 & 5 & 0 & 1 \\\hline
$Q_{2A}$-$Q_{2B}$ & $Q_{2A}$-$Q_{2B}$ & 13 & 6 & 9 & 1 & 3 \\\hline
$Q_{2B}$-$Q_{2A}$ & $Q_{2B}$-$Q_{2A}$ & 13 & 6 & 9 & 1 & 3 \\\hline
Correct $Q_{3A}$ & $\{(b_{1}, x_{1}) \; \mid \;\exists tp_{1} (Serves(x_{1},b_{1},tp_{1}) \land \forall tp_{2} , tx_{2} (\neg serves(tx_{2},b_{1},tp_{2}) \lor tp_{2} \leq tp_{1}))\}$ & 10 & 8 & 7 & 1 & 3 \\\hline
Wrong $Q_{3B}$ & $\{(b_{1}, x_{1}) \; \mid \;\exists x_{2} , p_{1} , p_{2} (((Serves(x_{1},b_{1},p_{1}) \land Serves(x_{2},b_{1},p_{2})) \land p_{2} \leq p_{1}) \land x_{1} \; == \; x_{2})\}$& 12 & 9 & 8 & 0 & 0 \\\hline
$Q_{3A}$-$Q_{3B}$ & $Q_{3A}$-$Q_{3B}$ & 21 & 10 & 14 & 4 & 7 \\\hline
$Q_{3B}$-$Q_{3A}$ & $Q_{3B}$-$Q_{3A}$ & 21 & 10 & 14 & 1 & 2 \\\hline
Wrong $Q_{3C}$ & $\{(b_{1}, x_{1}) \; \mid \;\exists r_{1} , p_{1} (Beer(b_{1},r_{1}) \land (Serves(x_{1},b_{1},p_{1}) \land \neg \exists x_{2} , p_{2} (Serves(x_{2},b_{1},p_{2}) \land p_{1} \; < \; p_{2})))\}$& 13 & 10 & 9 & 1 & 3 \\\hline
$Q_{3A}$-$Q_{3C}$ & $Q_{3A}$-$Q_{3C}$ & 22 & 11 & 15 & 3 & 6 \\\hline
$Q_{3C}$-$Q_{3A}$ & $Q_{3C}$-$Q_{3A}$ & 22 & 11 & 15 & 2 & 5 \\\hline
Correct $Q_{4A}$ & $\{(d_{1}) \; \mid \;\exists ta_{1} (Drinker(d_{1},ta_{1}) \land \neg \exists tx_{1} , tt_{1} (Frequents(d_{1},tx_{1},tt_{1}) \land \neg \exists tb_{1} , tp_{1} (Likes(d_{1},tb_{1}) \land Serves(tx_{1},tb_{1},tp_{1}))))\}$ & 13 & 10 & 9 & 1 & 3 \\\hline
Wrong $Q_{4B}$ & $\{(d_{1}) \; \mid \;\exists x_{1} , b_{1} (\exists p_{1} , t_{1} (Frequents(d_{1},x_{1},t_{1}) \land Serves(x_{1},b_{1},p_{1})) \land Likes(d_{1},b_{1}))\}$& 10 & 8 & 7 & 0 & 0 \\\hline
$Q_{4A}$-$Q_{4B}$ & $Q_{4A}$-$Q_{4B}$ & 23 & 11 & 16 & 3 & 9 \\\hline
$Q_{4B}$-$Q_{4A}$ & $Q_{4B}$-$Q_{4A}$ & 23 & 11 & 16 & 2 & 5 \\\hline
Wrong $Q_{4C}$ & $\{(d_{1}) \; \mid \;\exists x_{1} (\exists t_{1} Frequents(d_{1},x_{1},t_{1}) \land \neg (\exists t_{2} Frequents(d_{1},x_{1},t_{2}) \land \neg \exists b_{1} , p_{1} (Likes(d_{1},b_{1}) \land Serves(x_{1},b_{1},p_{1}))))\}$& 13 & 8 & 9 & 1 & 1 \\\hline
$Q_{4A}$-$Q_{4C}$ & $Q_{4A}$-$Q_{4C}$ & 26 & 11 & 18 & 3 & 9 \\\hline
$Q_{4C}$-$Q_{4A}$ & $Q_{4C}$-$Q_{4A}$ & 26 & 11 & 18 & 3 & 6 \\\hline
Wrong $Q_{4D}$ & $\{(d_{1})(\exists a_{1} Drinker(d_{1},a_{1}) \land \neg \exists b_{1} (\exists x_{1} , t_{1} , p_{1} (Frequents(d_{1},x_{1},t_{1}) \land Serves(x_{1},b_{1},p_{1})) \land \neg Likes(d_{1},b_{1})))\}$& 13 & 9 & 9 & 2 & 6 \\\hline
$Q_{4A}$-$Q_{4D}$ & $Q_{4A}$-$Q_{4D}$ & 26 & 11 & 18 & 2 & 4 \\\hline
$Q_{4D}$-$Q_{4A}$ & $Q_{4D}$-$Q_{4A}$ & 26 & 11 & 18 & 4 & 11 \\\hline
Correct $Q_{5A}$ & $\{(d_{1}) \; \mid \;\exists ta_{1} (Drinker(d_{1},ta_{1}) \land \neg \exists tx_{1} (\exists tb_{1} , tp_{1} (Likes(d_{1},tb_{1}) \land Serves(tx_{1},tb_{1},tp_{1})) \land \neg \exists tt_{1} Frequents(d_{1},tx_{1},tt_{1})))\}$ & 13 & 9 & 9 & 2 & 5 \\\hline
Wrong $Q_{5B}$ & $\{(d_{1}) \; \mid \;\exists x_{1} , t_{1} (Frequents(d_{1},x_{1},t_{1}) \land \neg \exists x_{2} (\exists b_{1} , p_{1} (Likes(d_{1},b_{1}) \land Serves(x_{2},b_{1},p_{1})) \land \exists t_{2} \neg Frequents(d_{1},x_{2},t_{2})))\}$& 14 & 10 & 10 & 2 & 6 \\\hline
$Q_{5A}$-$Q_{5B}$ & $Q_{5A}$-$Q_{5B}$ & 27 & 11 & 19 & 3 & 8 \\\hline
$Q_{5B}$-$Q_{5A}$ & $Q_{5B}$-$Q_{5A}$ & 27 & 11 & 19 & 3 & 9 \\\hline
Wrong $Q_{5C}$ & $\{(d_{1})(\exists b_{1} , x_{1} , t_{1} , p_{1} ((Frequents(d_{1},x_{1},t_{1}) \land Serves(x_{1},b_{1},p_{1})) \land Likes(d_{1},b_{1})) \land \neg \exists x_{2} , b_{2} (\exists p_{2} (Likes(d_{1},b_{2}) \land Serves(x_{2},b_{2},p_{2})) \land \neg \exists p_{3} , t_{2} ((Frequents(d_{1},x_{2},t_{2}) \land Serves(x_{2},b_{2},p_{3})) \land Likes(d_{1},b_{2}))))\}$& 25 & 10 & 17 & 2 & 5 \\\hline
$Q_{5A}$-$Q_{5C}$ & $Q_{5A}$-$Q_{5C}$ & 38 & 11 & 26 & 7 & 13 \\\hline
$Q_{5C}$-$Q_{5A}$ & $Q_{5C}$-$Q_{5A}$ & 38 & 11 & 26 & 3 & 8 \\\hline
Wrong $Q_{5D}$ & $\{(d_{1})(\exists b_{1} , x_{1} , p_{1} (Likes(d_{1},b_{1}) \land Serves(x_{1},b_{1},p_{1})) \land \neg \exists x_{2} (\exists b_{2} , p_{2} (Likes(d_{1},b_{2}) \land Serves(x_{2},b_{2},p_{2})) \land \neg \exists t_{1} Frequents(d_{1},x_{2},t_{1})))\}$& 17 & 8 & 12 & 2 & 5 \\\hline
$Q_{5A}$-$Q_{5D}$ & $Q_{5A}$-$Q_{5D}$ & 30 & 10 & 21 & 4 & 10 \\\hline
$Q_{5D}$-$Q_{5A}$ & $Q_{5D}$-$Q_{5A}$ & 30 & 10 & 21 & 3 & 8 \\\hline

    \end{tabular}
    \caption{Beers queries used in the experiments and their complexity measures.}
    \label{experiment-queries-beers}
\end{table*}

Tables~\ref{experiment-queries-beers} and \ref{experiment-queries-tpch} show the queries that we used in our experiments and their complexity. Note that in order to run our algorithms on TPC-H queries, we modified the original queries and rewrote them in DRC.

\begin{table*}[t]
  { 
  \centering
    \scriptsize
    \begin{tabular}{|c|p{46em}|c|c|c|c|c|}
    \hline
    \textbf{Query description}& \textbf{DRC Query} & Size  & Height &  \# $\forall$+$\exists$ & \# Or & \# Or Below $\forall$ + \# $\forall$    \\\hline
 Correct $Q_{4A}$ & $\{(o_{1}, o_{2})(\exists o_{3} , o_{6} (orders(o_{1},o_{3},*,*,o_{6},o_{2},*,*,*) \land (19930701 \leq o_{6} \land o_{6} \; < \; 19931001)) \land \exists l_{2} , l_{3} , l_{12} , l_{13} (lineitem(o_{1},l_{2},l_{3},*,*,*,*,*,*,*,*,l_{12},l_{13},*,*,*) \land l_{12} \; < \; l_{13}))\}$ & 17 & 9 & 12 & 0 & 0 \\\hline
Wrong $Q_{4B}$ & $\{(o_{1}, o_{2})(\exists o_{3} , o_{6} (orders(o_{1},o_{3},*,*,o_{6},o_{2},*,*,*) \land (19930701 \leq o_{6} \land o_{6} \; < \; 19931001)) \land \exists l_{2} , l_{3} , l_{12} , l_{13} (lineitem(o_{1},l_{2},l_{3},*,*,*,*,*,*,*,*,l_{12},l_{13},*,*,*) \land l_{13} < l_{12}))\}$& 17 & 9 & 12 & 0 & 0 \\\hline
$Q_{4A}$-$Q_{4B}$ & $Q_{4A}$-$Q_{4B}$ & 33 & 10 & 23 & 4 & 8 \\\hline
$Q_{4B}$-$Q_{4A}$ & $Q_{4B}$-$Q_{4A}$ & 33 & 10 & 23 & 4 & 8 \\\hline
Wrong $Q_{4C}$ & $\{(o_{1}, o_{2})(\exists o_{3} , o_{6} (orders(o_{1},o_{3},*,*,o_{6},o_{2},*,*,*) \land (19930701 \leq o_{6} \land o_{6} \; < \; 19931001)) \land \neg \exists l_{2} , l_{3} , l_{12} , l_{13} (lineitem(o_{1},l_{2},l_{3},*,*,*,*,*,*,*,*,l_{12},l_{13},*,*,*) \land l_{12} \; < \; l_{13}))\}$& 17 & 9 & 12 & 1 & 5 \\\hline
$Q_{4A}$-$Q_{4C}$ & $Q_{4A}$-$Q_{4C}$ & 33 & 10 & 23 & 3 & 3 \\\hline
$Q_{4C}$-$Q_{4A}$ & $Q_{4C}$-$Q_{4A}$ & 33 & 10 & 23 & 5 & 13 \\\hline
Correct $Q_{16A}$ & $\{(p_{4}, p_{5}, p_{6}, ps_{2}) \; \mid \;\exists p_{1} (\exists p_{2} ((part(p_{1},p_{2},*,p_{4},p_{5},p_{6},*,*,*) \land (49 \; == \; p_{6} \lor 14 \; == \; p_{6})) \land (``Brand\#45" \; != \; p_{4} \land p_{5} \; LIKE \; ``MEDIUM\:POLISHED\%")) \land (partsupp(p_{1},ps_{2},*,*,*) \land \neg \exists s_{7} (supplier(ps_{2},*,*,*,*,*,s_{7}) \land s_{7} \; LIKE \; ``\%complain,")))\}$ & 22 & 11 & 14 & 2 & 2 \\\hline
Wrong $Q_{16B}$ & $\{(p_{4}, p_{5}, p_{6}, ps_{2}) \; \mid \;\exists p_{1} (\exists p_{2} ((part(p_{1},p_{2},*,p_{4},p_{5},p_{6},*,*,*) \land (49 \; == \; p_{6} \lor 14 \; == \; p_{6})) \land (``Brand\#45" \; != \; p_{4} \land p_{5} \; LIKE \; ``MEDIUM\:POLISHED\%")) \land (partsupp(p_{1},ps_{2},*,*,*) \land \neg \exists s_{7} (supplier(ps_{2},*,*,*,*,*,s_{7}) \land s_{7} \; LIKE \; ``\%complain")))\}$& 22 & 11 & 14 & 2 & 2 \\\hline
$Q_{16A}$-$Q_{16B}$ & $Q_{16A}$-$Q_{16B}$ & 41 & 12 & 25 & 7 & 6 \\\hline
$Q_{16B}$-$Q_{16A}$ & $Q_{16B}$-$Q_{16A}$ & 41 & 12 & 25 & 7 & 6 \\\hline
Wrong $Q_{16C}$ & $\{(p_{4}, p_{5}, p_{6}, ps_{2}) \; \mid \;\exists p_{1} (\exists p_{2} ((part(p_{1},p_{2},*,p_{4},p_{5},p_{6},*,*,*) \land (49 \; == \; p_{6} \lor 14 \; == \; p_{6})) \land (``Brand\#45" \; != \; p_{4} \land p_{5} \; LIKE \; ``MEDIUM\:POLISHED\%")) \land (partsupp(p_{1},ps_{2},*,*,*) \land \exists s_{7} (supplier(ps_{2},*,*,*,*,*,s_{7}) \land \neg s_{7} \; LIKE \; ``\%complain,")))\}$& 22 & 11 & 14 & 1 & 0 \\\hline
$Q_{16A}$-$Q_{16C}$ & $Q_{16A}$-$Q_{16C}$ & 41 & 12 & 25 & 8 & 8 \\\hline
$Q_{16C}$-$Q_{16A}$ & $Q_{16C}$-$Q_{16A}$ & 41 & 12 & 25 & 6 & 4 \\\hline
Correct $Q_{19A}$ & 
$\begin{aligned}
&\{(l_{6}, l_{7}) \; \mid \;\exists l_{1} , l_{2} , l_{4} , l_{5}, l_{15} , p_{4} , p_{6} , p_{7}\\ &((lineitem(l_{1},l_{2},*,l_{4},l_{5},l_{6},l_{7},*,*,*,*,*,*,``DELIVER\:IN\:PERSON",``AIR",*) \land\\ &part(l_{2},*,*,p_{4},*,p_{6},p_{7},*,*)) \land (((``Brand\#12" \; == \; p_{4} \land p_{7} \; LIKE \; ``SM\%") \land 
(l_{5} \; <= \; 11 \land p_{6} \; <= \; 5)) \\
&\lor ((``Brand\#23" \; == \; p_{4} \land p_{7} \; LIKE \; ``MED\%") \land ((10 \leq l_{5} \land l_{5} \; <= \; 20) \land p_{6} \; <= \; 10))))
\}
\end{aligned}
$ & 31 & 16 & 20 & 1 & 0 \\\hline
Wrong $Q_{19B}$ & $
\begin{aligned}
&\{(l_{6}, l_{7}) \; \mid \;\exists l_{1} , l_{2} , l_{4} , l_{5} , l_{15} , p_{4} , p_{6} , p_{7}\\ &((lineitem(l_{1},l_{2},*,l_{4},l_{5},l_{6},l_{7},*,*,*,*,*,*,``DELIVER\:IN\:PERSON",``AIR",*) \land\\ &part(l_{2},*,*,p_{4},*,p_{6},p_{7},*,*)) \land (((``Brand\#12" \; == \; p_{4} \land p_{7} \; LIKE \; ``SM\%") \land (l_{5} \; <= \; 10 \land p_{6} \; <= \; 5))\\
&\lor ((``Brand\#234" \; == \; p_{4} \land p_{7} \; LIKE \; ``MED\%") \land (l_{5} \; <= \; 20 \land p_{6} \; <= \; 10))))\}
\end{aligned}$& 29 & 15 & 19 & 1 & 0 \\\hline
$Q_{19A}$-$Q_{19B}$ & $Q_{19A}$-$Q_{19B}$ & 59 & 17 & 38 & 9 & 9 \\\hline
$Q_{19B}$-$Q_{19A}$ & $Q_{19B}$-$Q_{19A}$ & 59 & 17 & 38 & 10 & 9 \\\hline
Wrong $Q_{19C}$ & $
\begin{aligned}
&\{(l_{6}, l_{7}) \; \mid \;\exists l_{1} , l_{2} , l_{4} , l_{5} , l_{15} , p_{4} , p_{6} , p_{7}\\
&((lineitem(l_{1},l_{2},*,l_{4},l_{5},l_{6},l_{7},*,*,*,*,*,*,``DELIVER\:IN\:PERSON",``AIR",*) \land\\
&part(l_{2},*,*,p_{4},*,p_{6},p_{7},*,*)) \land ((``Brand\#12" \; == \; p_{4} \land p_{7} \; LIKE \; ``SM\%") \land\\
&(l_{5} \; <= \; 11 \land p_{6} \; <= \; 5)))\}
\end{aligned}$& 21 & 14 & 15 & 0 & 0 \\\hline
$Q_{19A}$-$Q_{19C}$ & $Q_{19A}$-$Q_{19C}$ & 51 & 17 & 34 & 6 & 9 \\\hline
$Q_{19C}$-$Q_{19A}$ & $Q_{19C}$-$Q_{19A}$ & 51 & 17 & 34 & 9 & 9 \\\hline
Correct $Q_{21A}$ & $
\begin{aligned}
&\{(s_{1}, s_{2}, o_{1})((\exists l_{12} , l_{13} (lineitem(o_{1},*,s_{1},*,*,*,*,*,*,*,*,l_{12},l_{13},*,*,*) \land l_{12} \; < \; l_{13}) \land \\
&\exists ll_{3} , ll_{12} , ll_{13} (lineitem(o_{1},*,ll_{3},*,*,*,*,*,*,*,*,ll_{12},ll_{13},*,*,*) \land ll_{3} \; != \; s_{1})) \land\\ &((orders(o_{1},*,``F",*,*,*,*,*,*) \land \exists s_{4} (supplier(s_{1},s_{2},*,s_{4},*,*,*) \land\\ 
&nation(s_{4},``SAUDI\:ARABIA",*,*))) \land \neg \exists lll_{3} , lll_{12} , lll_{13}\\ &(lineitem(o_{1},*,lll_{3},*,*,*,*,*,*,*,*,lll_{12},lll_{13},*,*,*) \land (lll_{12} \; < \; lll_{13} \land lll_{3} \; != \; s_{1}))))\}
\end{aligned}$ & 31 & 11 & 21 & 2 & 4 \\\hline
Wrong $Q_{21B}$ & $
\begin{aligned}
&\{(s_{1}, s_{2}, o_{1})((\exists l_{12} , l_{13} (lineitem(o_{1},*,s_{1},*,*,*,*,*,*,*,*,l_{12},l_{13},*,*,*) \land l_{12} \; < \; l_{13}) \land\\ 
&(orders(o_{1},*,``F",*,*,*,*,*,*) \land \exists s_{4} (supplier(s_{1},s_{2},*,s_{4},*,*,*) \land\\ &nation(s_{4},``SAUDI\:ARABIA",*,*)))) \land \exists lll_{3} , lll_{12} , lll_{13}\\
&(lineitem(o_{1},*,lll_{3},*,*,*,*,*,*,*,*,lll_{12},lll_{13},*,*,*) \land (lll_{13} \leq lll_{12} \land lll_{3} \; != \; s_{1})))\}
\end{aligned}$& 24 & 10 & 16 & 0 & 0 \\\hline
$Q_{21A}$-$Q_{21B}$ & $Q_{21A}$-$Q_{21B}$ & 53 & 12 & 35 & 9 & 13 \\\hline
$Q_{21B}$-$Q_{21A}$ & $Q_{21B}$-$Q_{21A}$ & 53 & 12 & 35 & 7 & 9 \\\hline
Wrong $Q_{21C}$ & $\{(s_{1}, s_{2}, o_{1})(\exists l_{12} , l_{13} (lineitem(o_{1},*,s_{1},*,*,*,*,*,*,*,*,l_{12},l_{13},*,*,*) \land l_{12} \; < \; l_{13}) \land (\exists o_{3} orders(o_{1},*,o_{3},*,*,*,*,*,*) \land \exists s_{4} (supplier(s_{1},s_{2},*,s_{4},*,*,*) \land nation(s_{4},``SAUDI\:ARABIA",*,*))))\}$& 16 & 8 & 11 & 0 & 0 \\\hline
$Q_{21A}$-$Q_{21C}$ & $Q_{21A}$-$Q_{21C}$ & 45 & 12 & 30 & 6 & 10 \\\hline
$Q_{21C}$-$Q_{21A}$ & $Q_{21C}$-$Q_{21A}$ & 45 & 12 & 30 & 7 & 9 \\\hline
\end{tabular}
\caption{TPC-H queries used in the experiments and their complexity measures. * denotes ``don't care''.}
\label{experiment-queries-tpch}
    }
\end{table*}


\end{document}
